\newif\ifDissLarge
\newif\ifDissDim
\newif\ifDissHead
\newcommand{\MaquaHeadings}
{
  \ifDissHead
    \pagestyle{fancyplain}
 \lhead[\fancyplain{}{\bfseries \thepage}] {\fancyplain{}{\bfseries\sffamily\let\uppercase\relax\rightmark}}
\chead[\fancyplain{}{}]                 {\fancyplain{}{}}
\rhead[\fancyplain{}{\bfseries\sffamily\let\uppercase\relax\rightmark}]       {\fancyplain{}{\bfseries\thepage}}
 \else
    \pagestyle{headings}
  \fi
}
\newcommand{\tr}{\mbox{tr}}
\def\be{\begin{equation}}
\def\ee{\end{equation}}
\def\ve{\varepsilon}
\def\a{{\alpha}}
\def\l{{\lambda}}
\def\lb{{\overline\lambda}}
\def\b{{\beta}}
\def\g{{\gamma}}
\def\d{{\delta}}
\def\e{{\epsilon}}
\def\s{{\sigma}}
\def\k{{\kappa}}
\def\half{{1\over 2}}
\def\p{{\partial}}
\def\t{{\theta}}
\def\bar{\overline}
\def\({\left(}
\def\){\right)}
\def\cF{ {\cal F} }
\def\zb{ {\bar z}}
\def\wb{ {\bar w}}
\def \bra#1{\left\langle #1\right|}
\def \ket#1{\left| #1\right\rangle}
\def \vev#1{\left\langle #1 \right\rangle}
\def \tr{{\rm tr}}
\def \pbar{\bar{\partial}}
\def \be{\begin{equation}}
\def \ee{\end{equation}}
\def \la{\lambda^{\alpha}}
\def \t{\theta}
\renewcommand{\baselinestretch}{1.3}
\begin{document}

\renewcommand{\qedsymbol}{$\blacksquare$}
\newtheorem{def.}{Definition}
\newtheorem{teo}{Theorem}
\newtheorem{prop.}{Proposition}
\newtheorem{lema}{Lemma}
\newtheorem{corolario}{Corollary}
\newtheorem{postulado}{Postulate}

\pagenumbering{roman}

\def\baselinestretch{1.2}
 \setlength{\textheight}{220mm}         
  \setlength{\textwidth}{160mm}          
  \setlength{\evensidemargin}{2mm}       
  \setlength{\oddsidemargin}{2mm}       
  \setlength{\topmargin}{2mm}            
  \setlength{\headsep}{8mm}              
  \addtolength{\headheight}{2mm}         

\thispagestyle{empty}

\thicklines
\begin{picture}(370,60)(0,0)
\setlength{\unitlength}{1pt}
\put(40,53){\line(2,3){15}}
\put(40,53){\line(5,6){19}}
\put(40,53){\line(1,1){27}}
\put(40,53){\line(6,5){33}}
\put(40,53){\line(3,2){25}}
\put(40,53){\line(2,1){19}}
\put(40,53){\line(5,-6){17}}
\put(40,53){\line(1,-1){22}}
\put(40,53){\line(6,-5){30}}
\put(40,53){\line(3,-2){22}}
\put(40,53){\line(-2,1){15}}
\put(40,53){\line(-3,1){23}}
\put(40,53){\line(-4,1){26}}
\put(40,53){\line(-6,1){36}}
\put(40,53){\line(-1,0){40}}
\put(40,53){\line(-6,-1){32}}
\put(40,53){\line(-3,-1){20}}
\put(40,53){\line(-2,-1){10}}
\put(75,45){\Huge \bf IFT}
\put(180,56){\small \bf Instituto de F\'\i sica Te\'orica}
\put(165,42){\small \bf Universidade Estadual Paulista} 
\put(-25,2){\line(1,0){480}}
\put(-25,-2){\line(1,0){480}}
\end{picture}  



\noindent
{TESE DE DOUTORAMENTO}
\hfill    IFT--T.007/08\\                  

\vspace{3cm}
\begin{center}
{\large \bf Superstring Scattering Amplitudes with the Pure Spinor Formalism}

\vspace{1.2cm}
Carlos Roberto Mafra
\end{center}

\vskip 3cm
\hfill Orientador
\vskip 0.4cm
\hfill {\em  Nathan Berkovits}
\vskip 4cm
\vfill
\begin{center}
Setembro de 2008
\end{center}

\newpage
\phantom{a}
\thispagestyle{plain}
\newpage
\pagenumbering{roman}

\begin{center}
{\Large \bf Agradecimentos}
\end{center}
\vskip 2.0cm
\noindent 
Primeiramente eu gostaria de agradecer ao Nathan pela 
orientação \emph{sui generis} ao longo dos últimos cinco anos.
Agora que as coisas se aproximam do seu final há uma tendência
natural de se recapitular a experiência de tê-lo como orientador.
Ainda não consegui assimilar as nuances que definem essa
situação, mas posso atestar para o fato de que elas realmente
mudaram minha maneira de aprender. Agradeço sinceramente 
por todos os momentos em que estive em sua sala e pude
aprender através da observação como é que se deve 
resolver um problema. Agora eu sei que envolve olhar
através da janela, mas depois de já ter tentado muitas vezes
acho que nunca conseguirei reproduzir o método completo.
Para piorar a situação, a minha sala no IFT não tinha janela.

Agradeço também ao meu amigo Oscar Bedoya por ser meu companheiro
científico. Eu poderia tentar enumerar os motivos que tornaram
a nossa convivência uma experiência engrandecedora para mim, 
mas estou triste demais por saber que nossos dias sob o mesmo
instituto estão prestes a terminar e talvez não seria
saudável ter uma listagem completa do que vou perder quando
partirmos. Señor Bedoya, muchas gracias por haberme enseñado
tantas cosas y en especial por su amistad! Nunca
me olvidaré que usted siempre estuvo listo para ayudarme.

Agradeço também à Dáfni e ao Clóvis, pelos companheiros que são.
Com eles aprendi coisas e foi um prazer enorme tê-los como 
amigos, pois nossas conversas nem sempre versaram sobre Física.

Ao Geová por suas perguntas, que me ensinaram mais do que ele imagina.
E ao Anderson por sempre ter ido a minha sala me chamar para almoçar,
mesmo sabendo que escutaria meus lamentos ou progressos sobre a conta
em dois loops. Não duvidaria que ele soubesse fazer a conta em sua
cabeça, pelo tanto que escutou e pelo inteligente que é.

Gostaria também de agradecer ao ICTP, pela incrível disposição
de complementar minha formação através das escolas em Trieste. 
Ao Pierre Vanhove por
ter me convidado para a escola em Cargèse. Finalmente à FAPESP por ter
sido a agência que financiou tanto meu mestrado quanto doutorado.

Um agradecimento canino também tem seu lugar aqui. Quero
agradecer à Mariposa por ter me acompanhado, debaixo da mesa,
enquanto estava sozinho em Cali escrevendo a tese e pensando 
na conta de 5-pontos ao mesmo tempo.
Não sei se ela sabia o motivo pelo qual ficávamos
acordados até as quatro da manhã, mas ela nunca saiu dos meus
pés.

Ao meu irmão, por nossas enumeráveis experiências que
são testemunhos indeléveis do quão felizes somos quando
estamos juntos. E também pela constante lembrança de nunca
abandonar os gibis.

À minha mãe, pelo amor, compreensão, estímulo e por ser uma pessoa
maravilhosa com tudo e com todos.

Ao meu pai, pelo exemplo de retidão e honestidade.
Foi ele também quem despertou meu interesse pela ciência, quando há
mais de vinte anos me contou o que tinha lido sobre a velocidade
da luz. Sua personalidade analítica e jeito matemático de pensar
são exemplos diários de como gostaria de ser. A ele dedico essa tese.

Finalmente agradeço a minha adorável esposa Maria del Pilar, pelo
amor incondicional.
Seu apoio constante tem sido de fundamental importância. Além das
nossas discussões científicas, ela nunca me disse que estava perdendo tempo
quando escrevia \emph{cleanup patches} para o kernel do linux. Mas também
soube me estimular a continuar estudando no momento certo, talvez 
por me conhecer melhor do que eu mesmo.


\newpage

\begin{center}
{\Large \bf Resumo}
\end{center}
\vskip 2.0cm


Esta tese discute como o formalismo de espinores puros
pode ser utilizado para calcular amplitudes de espalhamento
eficientemente. A enfâse recai sobre as expressões
dos fatores cinemáticos no superespaço de espinores puros, onde
as características simplificadoras inerentes dessa linguagem
nos permitiram relacionar explicitamente
as amplitudes de quatro-pontos em nível de árvore, um-loop e 
dois-loops. Enfatizamos como essas identidades simplificam 
de maneira elegante a tarefa de calcular as amplitudes de
quatro-pontos para
todas as possíveis combinações de partículas externas. Em particular,
as amplitudes envolvendo férmions em dois-loops nunca antes haviam sido
calculadas.

Também demonstramos a equivalência das amplitudes de um e dois-loops
entre os formalismos mínimo e não-mínimo. A
a variação de gauge da amplitude de seis-pontos dos glúons é calculada
para obter o fator
cinemático relacionado com o cancelamento da anomalia.
Alguns resultados parciais obtidos para a amplitude de cinco-pontos 
também serão discutidos.


\vskip 1.0cm
\noindent

{\bf Palavras Chaves}: Supercordas; Supersimetria; Formalismo de Green-Schwarz;
Espinores Puros

\noindent

{\bf \'Areas do conhecimento}: Supersimetria; Teoria de Campos

\newpage
\begin{center}
{\Large \bf Abstract}
\end{center}
\vskip 2.0cm


This thesis discusses how the pure spinor formalism
can be used to efficiently compute superstring scattering
amplitudes. We emphasize the pure spinor superspace form
of the kinematic factors, where the
simplifying features of this language have allowed
an explicit relation among the massless four-point
amplitudes at tree-level, one- and two-loops to be found.
We show how these identities elegantly simplify the task
of computing the amplitudes for all possible external 
state combination related by supersymmetry. In particular,
the two-loop amplitudes involving fermionic states had never
been computed before.

By explicit calculation we show that the one- and two-loop 
amplitudes computed with the minimal and non-minimal formalisms are equivalent. 
Furthermore
we compute the gauge variation of the massless six-point open string amplitude
and obtain the kinematic factor related to the anomaly cancellation.
We also discuss some preliminary results regarding the massless 
five-point amplitude at one-loop.


\vfill \eject
\newpage
\phantom{a}
\vspace{12cm}

\begin{flushright}
\textit{Praeterea, non debet poni superfluum aut aliqua distinctio sine causa,\\
quia frustra fit per plura quod potest fieri per pauciora.
}
\end{flushright}
\begin{flushright}
Petrus Aureolus, \textit{Scriptum super primum Sententiarum}
\end{flushright}

\begin{flushright}
\textit{At some point, ``performance'' is just more than a question of how fast\\
things are, it becomes a big part of usability.
}
\end{flushright}
\begin{flushright}
Linus Torvalds, \textit{linux-kernel mailing list}
\end{flushright}

\tableofcontents

\chapter{Introduction}
\pagenumbering{arabic}

In his later years Einstein struggled to find a unified theory describing both gravity
and electromagnetism and met failure. Nowadays superstring theory is the most promising
candidate to fulfill what Einstein envisioned in the last century. It unifies in a
quantum framework not only
gravity and electromagnetism but also the electroweak and strong forces. It is even
more than a simple construction which is able to handle all interactions together,
as it \emph{requires} them to be pieces of a whole setup which breaks down if
one of its parts is absent.

Among other things, superstring theory has provided us with a consistent quantum
description of the gravitational force. One particular oscillation mode of the closed
string has the right properties to be the quantum messenger of the gravitational force,
the graviton. And its interactions are described precisely by the Einstein-Hilbert
action,
\be
\label{EH}
S_{\rm EH} = \frac{1}{16\pi G_N}\int d^{10}x \sqrt{-g} R
\ee
plus quantum and superstring corrections to be described below \cite{sannan}.

Also present in the open superstring spectrum is a massless string
with spin one which describes the Yang-Mills gluons (or photons), whose
interactions in the low energy limit are described by the standard 
Yang-Mills action,
\be
\label{ym}
S_{\rm YM} = \frac{1}{g_{YM}^2}\int d^{10} x{\rm Tr}(F_{mn} F^{mn})
\ee
together with other quantum or superstring corrections.

One of the most fundamental questions which naturally arise 
when studying the low energy 
properties of the superstring interactions is to understand what
are the perturbative 
corrections to these two actions predicted by the theory.
That question automatically leads us to contemplate the fact that
superstring perturbation theory is finite to all loop orders \cite{finite}.
Therefore besides unifying all forces of nature, superstring theory does
it in such a way as to be \emph{finite}. No renormalization is ever needed
when deriving quantum corrections to the effective action.

One of the standard procedures to obtain these quantum corrections is through the
computation of scattering amplitudes. For example, the information needed
to derive higher-derivative terms in the Yang-Mills action \eqref{ym} is encoded 
in the scattering of the string counterparts of the gluons, \emph{i.e.},
the massless open strings with spin one. Analogously, quantum corrections to
the Einstein-Hilbert action are determined by the scattering of massless 
closed strings with spin two.

The tree-level scattering of three gluons, for example, can be used to find 
the three point vertex in the expansion of the Yang-Mills action \eqref{ym}.
Higher-point scatterings in string theory probe higher-order vertices in the
low energy effective action and so forth. But the
first true superstring corrections are obtained from the massless four-point
scattering at tree-level \cite{gross_witten}, and are of quartic order
in the field-strength $F_{mn}$ or Riemann tensor $R_{mnpq}$,
\be
\label{F4}
S \propto \frac{\a'^2}{g_{YM}^2} \int d^{10}x \,{\cal F}^4, \quad 
S \propto \frac{\a'^3}{16\pi G_N} \int d^{10} x \sqrt{-g}\,{\cal R}^4
\ee
where ${\cal F}^4$ and ${\cal R}^4$ are abbreviations for
\[
{\cal F}^4 = t_8^{mnpqrstu}F_{mn}F_{pq}F_{rs}F_{tu},
\]
\[
{\cal R}^4 = t_8^{mnpqrstu}t_8^{abcdefgh}R_{mnab}R_{pqcd}R_{rsef}R_{tugh},
\]
and the $t_8$ tensor is described in the Appendix \ref{t8_ap}.

Superstring theory -- no wonder -- is  supersymmetric, so there are many more
interactions in the effective actions than 
those of 
\eqref{EH} and \eqref{ym}.
In
fact their actions in the low energy limit are given by the ten-dimensional 
supergravity
and super-Yang-Mills actions, describing also their
fermionic superpartners; the gravitino and gluino. Furthermore,
all these extra terms are related by supersymmetry and 
also receive quantum and superstring corrections. Computing these
corrections to all those terms has proven to be a challenging task over
the years.

The computation of these various scattering amplitudes have been traditionally 
done using two different prescriptions, encompassed in the so-called
Ramond-Neveu-Schwarz \cite{RNS_formalism} 
or Green-Schwarz formalisms \cite{GS_formalism}\cite{GS_light_cone}.

\subsubsection{The Ramond-Neveu-Schwarz formalism}

The Ramond-Neveu-Schwarz formalism \cite{RNS_formalism} is based on
spacetime vectors $X^m(\s,\tau)$ and $\psi^m(\s,\tau)$ which are scalars 
(the $X^m$'s) or spinors (the $\psi^m$'s) of the two-dimensional
wordlsheet with coordinates $\s,\tau$. The lack of spacetime
spinors is the major source of difficulty in this formalism, as the
computation of scattering amplitudes for fermionic strings is not
natural in this framework. It has to be done using a
clever construction of vertex operators for the spacetime spinors which
uses \emph{spin fields} $\Sigma_{\a}$ \cite{FMS}
\[
\Sigma_{\a} = {\rm e}^{\pm i \phi_1}{\rm e}^{\pm i \phi_2}{\rm e}^{\pm i \phi_3}
{\rm e}^{\pm i \phi_4}{\rm e}^{\pm i \phi_5}
\]
and the bosonization of the $\psi^m$'s
\[
\psi^1 \pm i \psi^2 = {\rm e}^{\pm i \phi_1}, \quad
\psi^3 \pm i \psi^4 = {\rm e}^{\pm i \phi_2}, \quad
\psi^5 \pm i \psi^6 = {\rm e}^{\pm i \phi_3},
\]
\[
\psi^7 \pm i \psi^8 = {\rm e}^{\pm i \phi_4}, \quad
\psi^9 \pm i \psi^{10} = {\rm e}^{\pm i \phi_5}.
\]
Furthermore, because the $\psi^m$'s are spinors in the worldsheet,
the computation of higher-loop scattering amplitudes requires a
sum over different \emph{spin structures}. The fact that each
term can have divergences which are cancelled only
after the sum is performed also leads to difficulties.

So if one uses the scattering amplitude prescription of the Ramond-Neveu-Schwarz
formulation each scattering involving fermionic partners
has to be computed in isolation, and the computation of the fermionic
state is much more difficult due to the complicated nature of the
vertex operator. The formalism is said to lack manifest supersymmetry.

\subsubsection{The Green-Schwarz formalism}

In contrast, the Green-Schwarz formulation is manifestly 
supersymmetric \cite{GS_formalism}\cite{GS_light_cone}.
It is based on the worldsheet fields $X^m$ and $\t^{\a}$, which are spacetime
vectors and spinors, respectively. The drawback in this
formalism comes from the fact that it has a complicated action,
\[
S =  \frac{1}{\pi}\int d^2z \left[ 
		\frac{1}{2}\p X^m \bar{\p}X_m 
          	- i \p X^m \theta_L\gamma_m \bar{\p}\theta_L
                - i \bar{\p} X^m \theta_R\gamma_m\p\theta_R \right.
\]		
\[ - \left. \frac{1}{2}(\theta_L\gamma^m \bar{\p}\theta_L)(\theta_L\gamma_m\p\theta_L 
          +  \theta_R\gamma_m \p\theta_R)
        -\frac{1}{2}(\theta_R\gamma^m \p\theta_R)(\theta_L\gamma_m \bar{\p}\theta_L 
	  + \theta_R\gamma_m \bar{\p}\theta_R) 
	  \right],
\]
which is impossible to quantize preserving manifest Lorentz covariance.
By breaking $SO(1,9)$ covariance to $SO(8)$ with the light cone
gauge choice the action simplifies \cite{schwarz_phys_rep}
\[
S = \frac{1}{4\pi}\int d^2z \left(
	\p X^i\bar{\p}X^i + S^a_L\bar{\p}S^a_L + S^b_R\p S^b_R
    \right).
\]
In this gauge the construction of vertex operators is possible
and the computation of scattering amplitudes can be done. For
example, the gluon and gluino vertices are given, 
in a Lorentz frame where $k^+=0$, $\zeta^+ = \zeta^- = 0$,
by
\[
V_B(\zeta,k) = \zeta^i (\dot{X}^i - \frac{1}{4} S^a S^b k^j \g^{ij}_{ab})
{\rm e}^{ikX},
\]
\[
V_F(\zeta,k) = ( u^a F^a + u^{\dot{a}}F^{\dot{a}})
{\rm e}^{ikX},
\]
where 
\[
F^a = \sqrt{\frac{p^+}{2}}S^a, \quad
F^{\dot{a}} = (2p^+)^{-1/2}\Big[
(\g\cdot \dot{X} S)^{\dot{a}} + \frac{1}{3}:(\g^i S)^{\dot{a}}(S\g^{ij} S):k^j
\Big].
\]
However, the need of a non-covariant gauge and restricted kinematics
are features which reduce the power of this manifestly supersymmetric
approach. For example, in the light cone gauge one looses the
conformal symmetry of the original theory and therefore can not
use the powerful methods of conformal field theory.
Furthermore it is not always possible to impose those
restrictions simultaneously.

So up to the year 2000 the computations of superstring scattering
amplitudes were done using these two different formalisms.
The results were equivalent
but required different amounts of work to be performed. Due to
the issues mentioned above, however, there was little progress
in computing higher-loop and/or higher-point amplitudes.
Furthermore,
either spacetime supersymmetry or Lorentz covariance was hidden
in the middle steps. 
Nevertheless both symmetries
are fundamental requirements of superstring
theory and as such the results must respect them. The fact that the end result
has all these symmetries while they are not obvious in the middle steps
means that the formalisms were introducing spurious difficulties were there
should be none.

\subsubsection{The Pure Spinor formalism}

The pure spinor formalism was born at the dawn of the new millennium \cite{nathan_pure}, as
a successful attempt to solve this long-standing problem of finding a manifestly 
supersymmetric
and covariant superstring formalism. 
It has already been used to study several aspects of string theory, for example
the propagation of strings in curved backgrounds\footnote{For detailed computations
see the theses \cite{back_review} and \cite{back_review2}.} \cite{berk_howe}\cite{ps_background} 
where among other things it has been used to derive
the non-linear Born-Infeld equations of motion \cite{born_infeld}. Various aspects
of strings in $AdS_5\times S^5$ were also studied \cite{ps_ads}. It has also been used to derive 
the Chern-Simons correction required by the anomaly cancellation \cite{ps_chern}. Furthermore,
there is also research related to its inner workings \cite{partition}\cite{cohomology}\cite{ps_superspace},
including studies of its own origins \cite{origin}\cite{mazz}. The focus of this thesis, however,
is to show how it can be
used in the computations of scattering amplitudes\footnote{After this thesis was
finished a mixed open-closed amplitude has been computed by Alencar in \cite{geova}.}, 
highlighting the virtues and elegance of manifest Lorentz covariance and spacetime
supersymmetry in the results obtained.

The amplitudes computed so far in the pure spinor formalism turned
out to be easier to obtain. As a sounding example of
how simpler computations can be, a good measure is to compare
the hundred-pages long calculation of the four-point amplitude
at two-loops in the RNS formalism \cite{dhoker}\cite{zhu}
versus the ten-pages-long computation
using pure spinors \cite{twoloop}\cite{twolooptwo}.
Of course 
the results were shown to be equivalent \cite{twolooptwo},
as well as for all other amplitudes computed so 
far \cite{mafra_one}\cite{nmps_two}\cite{mafra_tree} (see \cite{nathan_brenno}
for a general tree-level proof), proving
that the pure spinor formalism produces the
same results while being simpler.

Right after the formalism came into light, the tree-level amplitudes
were shown to be equivalent with the RNS computations in \cite{nathan_brenno}, 
for amplitudes containing any number of bosons and 
up to four fermions. Years later, Berkovits spelled out the multiloop prescription 
\cite{nathan_multiloop}\cite{twoloop} and
paved the way to show the equivalence of his formalism up to the two-loop level,
which is the state-of-the-art situation as of 2008.

In the computations of massless four-point amplitudes 
the results can be written down in terms of a supersymmetric kinematic factor
in pure spinor superspace \cite{ps_superspace}
times a function which is manifestly equal to their RNS and GS counterparts.
So the comparison of the results require the evaluation of the
pure spinor superspace integrals appearing in the kinematic factors.
For example, the supersymmetric kinematic factors in the massless four-point
amplitudes at one- and two-loop order were originally written
as \cite{nathan_multiloop}\cite{twoloop} 
\be
\label{kin_one}
K_{\rm one-loop}=
\ee
\[
 \int d^{16}\t(\e T^{-1})^{((\a\b\g))}_{[\rho_1{\ldots} \rho_{11}]}
\t^{\rho_1}{\ldots}\t^{\rho_{11}} 
(\g_{mnpqr})_{\b\g} \Big[ A_{1\a}(\t)
(W_2(\t)\g^{mnp}W^3(\t)){\cal F}^{qr}_4(\t) +{\rm cycl}(234)\Big]
\]
\be
\label{kin_two}
K_{\rm two-loop} =
\ee
\[
\int d^{16}\theta (\e T^{-1})^{((\a\b\g))}_{[\rho_1{\ldots} \rho_{11}]}
\t^{\rho_1}{\ldots}\t^{\rho_{11}}
(\gamma^{mnpqr})_{\alpha\beta}\gamma^{s}_{\gamma\delta}
\Big[
{\cal F}^1_{mn}(\theta){\cal F}^2_{pq}(\theta){\cal F}^3_{rs}(\theta)
W^{4\delta}(\theta)
\Delta(z_1,z_3)\Delta(z_2,z_4)\Big.
\]
\[
+\rm{perm}(1234)\Big]
\]
where  
$\Delta(y,z)= \epsilon^{CD}\omega_C(y)\omega_D(z)$,
$\omega_C$ are the two holomorphic one-forms defined in \cite{dhoker},
$A_{\a}^I(\t)$, $W^{I\a}(\t)$ and 
${\cal F}_{mn}^I(\t)$
are the super-Yang-Mills connection and the linearized
spinor and vector superfield-strengths for the
$I^{th}$ external state 
with momentum $k^m_I$ satisfying $k_I\cdot k_I=0$, 
and $(T^{-1})^{\a\b\g}_{\rho_1 ...\rho_{11}}$ is
a Lorentz-invariant tensor which is antisymmetric
in $[\rho_1 ...\rho_{11}]$ and symmetric and $\g$-matrix traceless
in $(\a\b\g)$. Up to an overall normalization constant,
\[
(T^{-1})^{\a\b\g}_{\rho_1 ...\rho_{11}} =
\e_{\rho_1 ...\rho_{16}}(\g^m)^{\k\rho_{12}}
(\g^n)^{\s\rho_{13}}
(\g^p)^{\tau\rho_{14}}
(\g_{mnp})^{\rho_{15}\rho_{16}} (\d^{(\a}_{\k}\d^\b_\s\d^{\g)}_{\tau}
-\frac{1}{40}\g_q^{(\a\b}\d^{\g)}_\k \g^q_{\s\tau}).
\]
To finally get the final result for these pure spinor amplitudes
one should 
use the $\t$-expansions of the super-Yang-Mills superfields listed 
in \eqref{sym_exp}, plug them back into the above expressions,
compute the traces of a multitude of gamma matrix arrays (some of them containing
as much as twenty gamma matrices) and finally perform the $d^{16}\t$ superspace
integration. Looking at the multitude of vector and spinor
indices of \eqref{kin_one} and \eqref{kin_two} one would
conclude that the manifest Lorentz covariance and supersymmetry of the pure spinor
formalism were
making those kinematic factors expressions look awkward and laborious to
be evaluated. 
Fortunately that is not the case, in fact quite the opposite is true.
With the observation that, up to an overall coefficient, \cite{twolooptwo}
\be
\label{smart}
\int d^{16}\t(\e T^{-1})^{((\a\b\g))}_{[\rho_1{\ldots} \rho_{11}]}
\t^{\rho_1}{\ldots}\t^{\rho_{11}}f_{\a\b\g}(\t) = \langle \l^{\a}\l^{\b}\l^{\g}
f_{\a\b\g}(\t)\rangle
\ee
those scary-looking kinematic factors of \eqref{kin_one} and \eqref{kin_two}
simply become\footnote{
The biggest problem with the brute-force approach of 
computing thousand of gamma matrix traces which follow from expressions
like \eqref{kin_one}
is that one misses various identities which become clear in their
pure spinor superspace representation 
of \eqref{one_loop} and \eqref{two_loop}. The identities
\eqref{id_one} and \eqref{id_two} are simple examples of what can be
accomplished.
Furthermore, as the usual tool to compute traces of gamma matrices at my disposal
at that time was Mathematica with the package GAMMA, which become inefficient at
this specific task, computations along those lines could take more than 24 hours of
run-time, which I considered unacceptable. With the method of
Appendix \ref{chap_t} those
computations don't take longer than 1 minute (with FORM \cite{FORM}\cite{tform} it is a matter of
a couple of seconds). And following Linus Torvalds' citation
in this thesis, ``performance is a big part of usability''. In hindsight, it was the
performance requirements which I set as a goal in the beginning of this enterprise which
allowed the quick verification of superspace identities, making further progress much faster
than otherwise it would be.}
\be
\label{one_loop}
K_{\rm 1} =
\langle (\l A^1)(\l \g^m W^2)(\l \g^n W^3){\cal F}^4_{mn}\rangle
+ \rm{cycl.(234)},
\ee
\be
\label{two_loop}
K_{\rm 2} =
\langle (\l \g^{mnpqr}\l){\cal F}^1_{mn}{\cal F}^2_{pq}{\cal F}^3_{rs}
(\l \g^s W^4)\rangle\Delta(1,3)\Delta(2,4)
+ \rm{perm.(1234)}.
\ee
The 
pure spinor correlator in the right-hand side of \eqref{smart}
was defined since day one in \cite{nathan_pure} and allows a tremendous
simplification in the computations, which in fact become trivial to
perform. As explained in \cite{nathan_pure}, the pure spinor expression
$\langle \l^{\a}\l^{\b}\l^{\g} f_{\a\b\g}(\t)\rangle$ is to be evaluated
by selecting the terms which contain five $\t$'s proportional to 
the (normalized) pure spinor measure,
\be
\label{ps_measure}
\langle (\l\g^m \t)(\l\g^n \t)(\l\g^p \t)(\t\g_{mnp} \t)\rangle = 1.
\ee
In Appendix \ref{chap_t} we show that the computation of pure spinor
expressions containing an arbitrarily complicated combination of
three $\l$'s and five $\t$'s is uniquely determined by symmetry alone.
Using the method described in the appendix, many pure spinor superspace
expressions were explicitly evaluated with not much effort.
For example, 
in \cite{mafra_one} and \cite{twolooptwo} the kinematic factors for
the bosonic components of
\eqref{one_loop} and \eqref{two_loop} were shown to reproduce the
$t_8$-tensorial structure appearing in the low energy effective action
for superstrings \eqref{F4}. That provided the proof that the pure
spinor formalism
reproduces the same results as
the RNS formalism \cite{dhoker} up to the two-loop level. And as the
pure spinor expressions for the kinematic factors are supersymmetric,
the computation of the fermionic terms pose no further difficulties
and were also evaluated \cite{stahn}\cite{mafra_tree}.
This situation is in deep contrast to the need of computing
each amplitude separately for all the external superpartners as is the case in the RNS
and GS formalisms (see some RNS fermionic computations in \cite{sen}\cite{lin}). 
Furthermore, as summarized below, the simple
nature of the pure spinor representation for the kinematic factors allowed the
explicit proof that the massless four-point amplitudes were all related
to one another at different orders in perturbation theory, namely at tree-,
one- and two-loops.

Firstly, the idea was to
obtain a pure spinor superspace expression for the massless four-point kinematic
factor at tree-level \cite{mafra_tree}
\be
\label{tree_level}
K_{\rm 0}  = \half k^m_1k_2^n \langle(\l A^1)(\l A^2)(\l A^3){\cal F}^4_{mn}\rangle
-(k^1\cdot k^3)\langle A^1_n (\l A^2)(\l A^3)(\l \g^n W^4)
\rangle +(1\leftrightarrow 2).
\ee
Then it 
was shown through manipulations in pure spinor superspace that \eqref{tree_level}
was proportional to the massless four-point kinematic factor at 
one-loop \eqref{one_loop}
\be
\label{id_one}
K_{\rm 0} = -\langle (\l A^1)(\l\g^m W^2)(\l\g^n W^3){\cal F}^4_{mn}\rangle
          = -{1\over 3} K_{\rm 1}.
\ee
After that, using a proof based on BRST-equivalence of some
pure spinor expressions, the two-loop kinematic factor \eqref{two_loop} 
was related to the tree-level factor as follows
\be
\label{id_two}
K_{\rm 2} = -32 K_0 \left[(u-t)\Delta(1,2)\Delta(3,4) 
+(s-t)\Delta(1,3)\Delta(2,4) + (s-u)\Delta(1,4)\Delta(2,3)
\right].
\ee
That was the first time ever that these kinematic factors were shown to be
related as a whole without having to compute every possible scattering of
bosonic and/or fermionic states\footnote{And here we note that the scattering computation
of fermionic states at two loops has never been done using the RNS formalism (and nothing
at all with the GS formalism).} individually, showing case by case 
their proportionality to each other. Of course once one obtains the same kinematic
factor for the computation of four-point gluon
(or graviton) scattering at different loop orders, supersymmetry can 
be used to argue that the kinematic factors for the
superpartners are also the same, as can be seen in \cite{gswII}:
\begin{quote}
\emph{
Having discovered this result for bosons, it becomes plausible that supersymmetry
ensures that the one-loop four-particle amplitudes involving fermions also have
the same kinematic factors as the tree diagrams. In fact, this must be the case,
because the various K factors given in $\mathcal{x}$7.4.2 can be related to one another
by supersymmetry transformations.}
\end{quote}
Nevertheless, it is worth having an explicit simple proof that the kinematic
factors \eqref{tree_level}, \eqref{one_loop} and \eqref{two_loop} satisfy
the identities \eqref{id_one} and \eqref{id_two}. Note that explicit
two-loop computations
involving fermionic external states have never been done before the 
pure spinor computations of \cite{mafra_tree}\cite{stahn}.
Furthermore, with identities like \eqref{id_one} and \eqref{id_two}
it is not even needed to compute the one- and two-loop
kinematic factors explicitly in components
anymore. That is truly a remarkable simplification compared to
the standard RNS and GS formalisms.

So this thesis emphasises the study of pure spinor superspace
expressions and their role in obtaining simple relations for
seemingly complicated amplitudes. It is structured as follows.

In chapter 2 we review the pure
spinor formalism and the prescriptions to compute scattering
amplitudes in the minimal and non-minimal versions. 

In chapter 3 the manifestly supersymmetric 
kinematic factors for massless massless four-point amplitudes
at tree-, one- and two-loop levels are studied and
explicitly evaluated in components. 
In section \ref{anomaly_sec} we also compute the gauge variation of the
massless six-point amplitude for open strings, which gives
rise to a pure spinor
superspace representation for the gauge anomaly kinematic factor 
\be
\label{anomaly}
K_{\rm anom} = \langle (\l\g^m W^1)(\l\g^n W^2)(\l\g^p W^3)(W^4 \g_{mnp} W^5)\rangle.
\ee
Furthermore, in section \ref{t8_e10} we evaluate the bosonic components of
the interesting pure spinor superspace expression
\be
\label{teight}
\langle (\l\g^r W^1)(\l\g^s W^2) (\l\g^t W^3)(\t\g^m\g^n\g_{rst} W^4) \rangle,
\ee
from which the $t_8$ and $\e_{10}$ tensors naturally emerge in a unified manner,
in the form $\eta^{mn}t_8^{m_1n_1{\ldots}m_4n_4}-{1\over 2}\e_{10}^{mnm_1n_1{\ldots}
m_4n_4}$. 

In section \ref{crazy_four_sub} we digress about the expression
\be
\label{crazy_one}
\langle (\l\g^m\t)(\l\g^n W)(\l\g^p W)(W\g_{mnp} W) \rangle,
\ee
which turns out to be proportional to the one-loop kinematic factor 
of \eqref{one_loop} and
consequently it is supersymmetric despite the explicit appearance of $\t$. 
We will show its supersymmetry by relating the bosonic 
components of \eqref{crazy_one} with the
left hand side of the following identity
\be
\label{expla}
\langle \Big[(D\g_{mnp} A)\Big](\l\g^m W)(\l\g^n W)(\l\g^p W)\rangle = 
- 8 \langle (\l A)(\l\g^m W)(\l\g^n W){\cal F}_{mn}\rangle,
\ee
in such a way as to finally prove that
\[
\langle (\l\g^m\t)(\l\g^n W)(\l\g^p W)(W\g_{mnp} W) \rangle =
8 \langle (\l A)(\l\g^m W)(\l\g^n W){\cal F}_{mn}\rangle.
\]

Finally, in section \ref{fracas} we consider an intriguing 
pure spinor superspace expression
\be
\label{crazy_five}
\langle (\l\g^m \t)(\l\g^n \g^{rs} W^5)(\l\g^p W^1)(W^3\g_{mnp}W^4){\cal F}^2_{rs}\rangle
\ee
whose \emph{bosonic} component expansion reproduces the massless five-point amplitude of
open strings. We will show that \eqref{crazy_five} is proportional to
\be
\label{explatwo}
\langle (D\g_{mnp} A^1)(\l\g^m \g^{rs} W^5)(\l\g^n W^3)(\l\g^p W^4){\cal F}^2_{rs}\rangle - (2\leftrightarrow 5)
\ee
which is one of the terms produced in the evaluation of
\be
\label{cincoNMPS}
\langle (\lb\g_{mnp} D)\Big[ (\l A^1)(\l\g^m \g^{rs} W^5)(\l\g^n W^3)
(\l\g^p W^4){\cal F}^2_{rs}\Big]\rangle - (2\leftrightarrow 5),
\ee
which appears in the massless five-point computation with the non-minimal
pure spinor formalism\footnote{This is work in progress with the 
collaboration of Christian Stahn. Note added: It is now completed, see \cite{5pt}.}.

Chapter \ref{conclu} contains some conclusions and possible directions for
further inquiry along the lines of the study presented in this thesis.

In Appendix \ref{chap_t} we describe an efficient method to compute 
pure spinor superspace expressions in terms of the polarizations
and momenta of the external particles. This is the method  which was used
in several papers to obtain the final component expression for various
kinematic factors.

A brief review of  ${\cal N}=1$ super-Yang-Mills theory in $D=10$ is
given in Appendix \ref{sym_ap}, together with the explicit $\t$-expansion
of the superfields used in this thesis.

And finally the famous $t_8$-tensor is written down explicitly 
in Appendix \ref{t8_ap}. This is done both in terms of 
explicit Kronecker deltas as well as in terms of its contraction with four
field-strengths $F_{mn}$. We also present its
$U(5)$-covariant form which can be deduced from the pure spinor 
expression \eqref{crazy_one}.


\chapter{The 
Pure Spinor Formalism}

The pure spinor formalism is an efficient tool to compute superstring
scattering amplitudes in a covariant way, and this is the aspect
which we will emphasize in this thesis.

Being manifestly supersymmetric
and containing no worldsheet spinors, it does not require the summation over
the spin structures which makes the evaluation of higher-loop amplitudes
in the RNS formalism a difficult task. And as it can be covariantly 
quantized, one does not need to go to the light-cone gauge as in the
Green-Schwarz formulation, avoiding the problems when one has to do
so. We will now review the origins of the pure spinor formalism and
how it was constructed, establishing our notation along the way.
Then we will explain how amplitudes are to be computed
using Berkovits' formalism.

\section{Siegel's modification of the Green-Schwarz formalism}

The main difficulty one faces when trying to quantize the Green-Schwarz
action (written here in the conformal gauge)
\[
S =  \frac{1}{\pi}\int d^2z \left[ 
		\frac{1}{2}\p X^m \bar{\p}X_m 
          	- i \p X^m \theta_L\gamma_m \bar{\p}\theta_L
                - i \bar{\p} X^m \theta_R\gamma_m\p\theta_R \right.
\]		
\[ - \left. \frac{1}{2}(\theta_L\gamma^m \bar{\p}\theta_L)(\theta_L\gamma_m\p\theta_L 
          +  \theta_R\gamma_m \p\theta_R)
        -\frac{1}{2}(\theta_R\gamma^m \p\theta_R)(\theta_L\gamma_m \bar{\p}\theta_L 
	  + \theta_R\gamma_m \bar{\p}\theta_R) 
	  \right],
\]
is related to the complicated nature of the fermionic constraints
$d_{\alpha}$. To see this we compute the conjugate momentum to 
$\t^{\a}_L$, denoted by $p^L_{\a}$, to obtain
\[
p^L_{\alpha} 
              =  \frac{i}{2}\(\gamma_m\t_L\)_{\a}\left[ 
                  \Pi^m +\frac{i}{2}\(\t_L\gamma^m\p_1\t_L\)  \right].
\]
As it depends on $\t_L^{\a}$, it defines a constraint 
$d^L_{\alpha} = p^L_{\alpha} - \frac{i}{2}\(\t_L\gamma^m\)_{\a}
\Pi_m + \frac{1}{4}\(\t_L\gamma^m\)_{\a}\(\t_L\gamma_m\p_1\t_L\)$
which satisfies the OPE
\be
\label{ope_dd_bad}
 d^L_{\alpha}(z) d^L_{\beta}(w) \rightarrow -i\frac{\gamma_{\alpha\beta}^m\Pi_m}{z-w}.
\ee
Due to the Virasoro constraint $\Pi_m\Pi^m=0$ 
the relation \eqref{ope_dd_bad} mixes first and second class types of 
constraints in such a way
that is very difficult to disentangle them covariantly. The standard way 
to deal with this
situation is to go to the light-cone gauge, where the two types of 
constraints can be
treated separately in \eqref{ope_dd_bad}.

In 1986 Warren Siegel \cite{siegel} proposed a new approach to 
deal with this problem. His idea was to treat the conjugate momenta for 
$\theta^{\alpha}$
as an independent variable, proposing the following action
for the left-moving variables\footnote{We will restrict our attention
to the left-moving variables only, as it is straightforward to add
the right-moving part.}
\be
\label{siegel}
S = \frac{1}{2\pi} \int d^2z \left[ \frac{1}{2}\p X^m \bar{\p}X_m 
    + p_{\alpha}\bar{\p} \theta^{\alpha} \right].
\ee
Together with \eqref{siegel} one should add an appropriate set of first-class 
constraints to reproduce
the superstring spectrum.
The Virasoro constraint  
$T = -\frac{1}{2}\Pi^m\Pi_m - d_{\alpha}\p \theta^{\alpha}$ and the kappa symmetry
generators of the GS formalism, given by 
$G^{\alpha} =\Pi^m(\gamma_m d)^{\alpha}$, where
\be
\label{pi}
\Pi^m = \p X^m + \frac{1}{2}(\t\gamma^m\p \t)
\ee
should certainly be elements of that set of constraints.
Furthermore, in his approach the variable $d_{\a}$ 
\[
d_{\alpha} = p_{\alpha} -\frac{1}{2}\left(\p X^m 
             +\frac{1}{4}(\t \g^m\p\t) \right)(\g_m\t)_{\a}
\]
was not supposed to be a constraint.

Even though there was a successful description of the superparticle using
Siegel's approach, the whole set of constraints was never found for the
superstring case. However, as we shall see below, Siegel's idea was used by Berkovits
in his proposal for the pure spinor formalism.

Note that the action \eqref{siegel} defines a CFT whose OPE's are given by
\be
 X^m(z,\bar{z})X^n(w,\bar{w}) \longrightarrow -\frac{\a'}{2}\eta^{mn}\ln|z-w|^2,
\quad
\label{pteta}
 p_{\a}(z)\t^{\b}(w) \longrightarrow \frac{\d^{\b}_{\a}}{z-w},
\ee
\be
\label{dd_ope}
 d_{\a}(z)d_{\b}(w) \longrightarrow -\frac{\a'}{2}\frac{\g^m_{\a\b}\Pi_m}{z-w},
\quad
 d_{\a}(z)\Pi^m(w) \longrightarrow \frac{\a'}{2}\frac{(\g^m \p\t)_{\a}}{z-w}.
\ee
Furthermore, if $V(y,\t)$ is a generic superfield then its OPE's with
$d_{\a}$ and $\Pi^m$ are computed as follows
\be
\label{super_ope}
 d_{\a}(z)V(y,\t) \longrightarrow \frac{\a'}{2}\frac{D_{\a}V(y,\t)}{z-y}, \quad
 \Pi^m(z)V(y,\t) \longrightarrow \frac{\p^mV(y,\t)}{z-y},
\ee
where the supersymmetric derivative $D_{\a}$ is given by
\be
\label{susy_deriv}
D_{\a} = \frac{\p}{\p\t^{\a}} +\frac{1}{2}(\g^m\t)_{\a}\p_m.
\ee

The energy momentum tensor for the action 
\eqref{siegel} is given by
\[
T(z) = -\frac{1}{2} \p X^m \p X_m - p_{\alpha}\p \theta^{\alpha}
\]
as can be easily checked by using the known results of the bosonic string 
and the $bc$ system
with $\l=1$, in the notation of \cite{polchinski_1}. Furthermore, the
central charge is $c=+10-32=-22$, where each pair of $p_{\a}$ and $\t^{\b}$
have $c=-3(2\lambda -1)^2 + 1 = -2$, for a total of $-32$. 

The non-vanishing of the central charge leads to problems when
quantizing the theory, so that was a major difficulty in Siegel's
approach to the GS formalism.

Furthermore in \cite{siegel} Siegel proposed that the supersymmetric 
integrated
massless vertex operator in his approach should be 
\be
\label{siegel_U}
U = \int dz(\p\t^{\a}A_{\a}(x,\t) +A_m(x,\t)\Pi^m + d_{\a}W^{\a}(x,\t) )
\ee
where the superfields appearing in \eqref{siegel_U} are the
SYM superfields which are reviewed in the appendix.
But there is a problem with this supposition if one wants it
to be equivalent to the RNS formalism, where the vertex operator
for a gluon is given by (see (7.3.25) in \cite{gswI})
\be
\label{int_rns}
U_{\rm gluon}^{\rm RNS} = \int dz (A_m \p X^m + \frac{1}{2}\psi^m \psi^n F_{mn}),
\ee
where the field-strength is $F_{mn} = \p_m A_n - \p_n A_m$.
To see this one uses the superfield
expansions of appendix \ref{sym_ap} to conclude that the gluon
vertex operator obtained from \eqref{siegel_U} is 
\[
U_{\rm gluon}^{\rm Siegel} = \int dz (A_m \p X^m - \frac{1}{4}(p\g^{mn}\t)F_{mn})
\]
Comparing both expressions we see that the operator which multiplies
$\frac{1}{2}F_{mn}$ is the Lorentz current for the fermionic
variables in each formalism.
To see this we use
Noether's method to define the variation of \eqref{siegel}
under the Lorentz transformation to be 
$\delta S = \frac{1}{2\pi} \int \frac{1}{2}\bar{\p}\varepsilon_{mn}\Sigma^{mn}$,
where
\[
\delta p_{\alpha}  = 
 \frac{1}{4}\ve_{mn}(\gamma^{mn})_{\a}^{\phantom{a}\beta}p_{\beta}, \qquad
\delta \theta^{\a}  = \frac{1}{4}\ve_{mn}(\gamma^{mn})^{\a}_{\phantom{a}\beta}\t^{\beta}.
\]
Therefore the variation of \eqref{siegel} is 
\begin{align*}
\frac{1}{2\pi}\int \delta(p_{\alpha}\bar{\p}\theta^{\alpha}) & =
  \frac{1}{2\pi}\int\left[
 \frac{1}{4}\ve_{mn}(\gamma^{mn})_{\a}^{\phantom{a}\beta}p_{\beta} \bar{\p}\theta^{\alpha}
+\frac{1}{4}p_{\alpha}\pbar (\ve_{mn}(\gamma^{mn}\t)^{\alpha})
   \right].\\
 & = \frac{1}{2\pi}\int\left[ 
+\frac{1}{4}\bar{\p}\ve_{mn}p_{\a}(\gamma^{mn})^{\a}_{\phantom{a}\beta}\t^{\beta}
                       \right],
\end{align*}
so that 
\be
\label{sigma_siegel}
\Sigma^{mn} = \frac{1}{2}(p\gamma^{mn}\theta)
\ee
is the Lorentz currents of the fermionic variables.
However the Lorentz currents of the fermionic variables in
Siegel's approach had a double pole coefficient of $+4$
instead of $+1$ as in the RNS formalism.
Using the OPE \eqref{pteta} we get
\[
\Sigma^{mn}(w)\Sigma^{pq}(z) 
         = \frac{1}{4}\frac{ p(\gamma^{mn}\gamma^{pq} - \gamma^{pq}\gamma^{mn})\theta}{w-z}
	    +\frac{1}{4}\left( \frac{\tr(\gamma^{mn}\gamma^{pq})}{(w-z)^2} \right),
\]
\be
\label{ope_siegel}
= \frac{\eta^{p[n}\Sigma^{m]q} 
   - \eta^{q[n}\Sigma^{m]p}}{w-z} 
                               + 4\frac{\eta^{m[q}\eta^{p]n}}{(w-z)^2}
\ee
where we used that $\gamma^{mn}\gamma^{pq} - \gamma^{pq}\gamma^{mn} = 2\eta^{np}\g^{mq} -
2\eta^{nq}\g^{mp}+ 2\eta^{mq}\g^{np} - 2\eta^{mp}\g^{nq}$ and ${\rm tr}(\gamma^{mn}\gamma^{pq}) =
-32 \d^{mn}_{pq}$. Recalling that in the RNS formalism the OPE of the Lorentz currents
for the fermionic variables $\Sigma_{\rm RNS} = \psi^m\psi^n$ satisfies
\be
\label{ope_rns}
\Sigma_{RNS}^{mn}(w)\Sigma_{RNS}^{pq}(z) \rightarrow 
\frac{\eta^{p[n}\Sigma_{RNS}^{m]q} - \eta^{q[n}\Sigma_{RNS}^{m]p}}{w-z} 
                               + \frac{\eta^{m[q}\eta^{p]n}}{(w-z)^2}
\ee
the different double pole coefficient in \eqref{ope_siegel} and \eqref{ope_rns}
would make the computations of scattering amplitudes using \eqref{int_rns} or \eqref{siegel_U}
not agree with each other.

\section{The elements which led to the pure spinor formalism}

The modification of Siegel's approach proposed by 
Berkovits in the year 2000 was based in the observation that there
existed a set of ghost variables with $c_g=+22$ and whose contribution
to the double pole of the Lorentz currents was $-3$. So
the problems described above would no longer exist if
that set of ghosts was added to 
Siegel's action \eqref{siegel}.
That discovery led to the creation of the pure spinor formalism.
Let's now take a look at some of its ingredients in such a way
as to motivate the solution found by Berkovits\footnote{The ``history''
presented here is merely a pedagogical attempt to show how pure spinors
naturally solve the issues which were present in Siegel's approach, namely
the non-vanishing central charge and the different double pole in the
Lorentz generator appearing in the integrated vertex operator. It should not
be interpreted as ``history'' {\it per se}, but as an exposition artifact. It
is interesting to note, however, the prior developments which happened 
with the superembedding approach pioneered by 
Sorokin {\it et. al.} \cite{embedding}\cite{Nemb}\cite{Temb}. For a 
review see \cite{sor_rev}. In 2002 there was a paper which
discussed the pure spinor formalism from the perspective of the 
superembedding approach \cite{mazz}.}.

\subsection{Lorentz currents for the ghosts}

When trying to construct the Lorentz currents for the fermionic
variables in the pure spinor formalism,
Berkovits suggested to modify the Lorentz currents \eqref{sigma_siegel}
by the addition of a contribution $N^{mn}$
coming from the ghosts,
\[
M^{mn} = \Sigma^{mn} + N^{mn}.
\]
The newly defined $M^{mn}$ would satisfy the same OPE \eqref{ope_rns} 
as in the RNS formalism if 
\begin{align}
\label{ope_NN}
N^{mn}(w)N^{pq}(z) &\rightarrow \frac{\eta^{p[n}N^{m]q} - \eta^{q[n}N^{m]p}}{w-z}
                               -3\frac{\eta^{m[q}\eta^{p]n}}{(w-z)^2},\\
\label{ope_SN}			       
\Sigma^{mn}(w)N^{pq}(z) &\rightarrow \text{regular},
\end{align}
as one can check as follows
\begin{align*}
M^{mn}(w)M^{pq}(z) &= (\Sigma^{mn}(w)+N^{mn}(w))(\Sigma^{pq}(z)+N^{pq}(z))\\
                   &\rightarrow \Sigma^{mn}(w)\Sigma^{pq}(z) + N^{mn}(w)N^{pq}(z)\\
		   &\rightarrow \frac{\eta^{p[n}M^{m]q} - \eta^{q[n}M^{m]p}}{w-z} 
                               +\frac{\eta^{m[q}\eta^{p]n}}{(w-z)^2}.
\end{align*}
At the same time those ghosts should have the right properties as to contribute
$c_g=+22$ to the central charge, otherwise the total central charge would be
non-vanishing.
Fortunately the right solution to both
problems was found when a proposal for the BRST charge was put forward.
As we will see, that 
provided the hint as to what was missing in the long quest for finding
a manifestly spacetime supersymmetric and covariant formalism: \textbf{pure spinors}.

\subsection{The BRST operator}

The next step in the line of reasoning which led to the pure spinor 
formalism is the proposal of the BRST operator\footnote{It is interesting
to note that pure spinors had already been used by Howe in \cite{howe_ps1}
(see also \cite{howe_ps2}) to obtain the on-shell constraint of
ten-dimensional super-Yang-Mills and supergravity (and also D=11 SG). 
One can check that equation (4) of \cite{howe_ps1} is essentially the BRST charge
\eqref{Q_brst} of the pure spinor formalism.}
\be
\label{Q_brst}
Q_{BRST} = \oint \lambda^{\alpha}(z)d_{\alpha}(z),
\ee
where $\l^{\a}$ are bosonic and  
$
d_{\a} = p_{\a} -\frac{1}{2}(\gamma^m\t)_{\a}\p X_m
-\frac{1}{8}(\gamma^m\t)_{\a}(\t\gamma_m \p\t).
$
However the BRST
charge \eqref{Q_brst} must satisfy the basic consistency condition $Q^2_{BRST}=0$
(see \cite{polchinski_1}\cite{kiritsis} for more details about the BRST quantization). 
Using \eqref{Q_brst} we obtain
\[
Q^2_{BRST} = \frac{1}{2}\{Q_{BRST}, Q_{BRST}\} = 
            -\frac{1}{2}\oint dz (\lambda\gamma^m\lambda)\Pi_m,	    
\]
therefore the bosonic fields $\l^{\a}$ must satisfy the constraints
\be
\label{espinor_puro}
(\lambda\gamma^m\lambda) = 0.
\ee
\begin{def.}[Pure Spinor]
A ten dimensional Weyl spinor 
$\lambda^{\alpha}$ is said to be a pure spinor if 
\eqref{espinor_puro} is satisfied for $m=0{\ldots} 9$.
\end{def.}

The formalism discovered by Berkovits is based on the properties of the pure spinor $\l^{\a}$, and
it is important to study what are the number of degrees of freedom which survives
the constraints \eqref{espinor_puro}. Naively one could think that those ten constraints
would imply that a ten dimensional pure spinor would have only $16-10=6$ degrees of
freedom, but that's not the case. To see this it is convenient to perform a Wick rotation
and break manifest $SO(10)$ Lorentz symmetry to its $U(5)$ subgroup.

A Weyl spinor of SO(10) decomposes under U(5) as follows
\[
16 \rightarrow (1_{\frac{5}{2}},10_{\frac{1}{2}}, \bar{5}_{-\frac{3}{2}}),
\]
where the subscript denotes the U(1)-charge. Using this decomposition for $\l^{\a}$
we can solve the constraints \eqref{espinor_puro} explicitly,
\begin{align}
\lambda_+ &= e^s, \\
\lambda_{ab} &= u_{ab},\\
\quad \lambda^a &= -\frac{1}{8}e^{-s}\epsilon^{abcde}u_{bc}u_{de}, \label{la_fant}
\end{align}
for any $s$ and antisymmetric $u_{ab}$.
To prove this\footnote{The proof is based on \cite{grassi_cov}, which the reader should
consult for more details.} we note that 
$\la \gamma^m_{\alpha\beta} \l^{\b}$ is obtained from the 32-dimensional expression
$\lambda^T (C \Gamma^m) \lambda$, where $C$ is the conjugation matrix satisfying
$C\Gamma^m = -\Gamma^{m,T}C$ which is given by
$C=\prod_{i=1}^5 (a_i - a^i)$.

Under the decomposition of $SO(10) \rightarrow U(5)$ the constraint \eqref{espinor_puro}
goes to two independent equations
\begin{align}
\label{satis0}
\bra{\lambda}Ca^i\ket{\lambda} &= 0 \hspace{0.7cm} i=1,2,3,4,5\\
\bra{\lambda}Ca_i\ket{\lambda} &= 0 \label{satis}.
\end{align}
In the above expressions the only non-vanishing terms are the ones proportional
to\footnote{To prove this one computes $\bra{0} C a^1
a^2a^3a^4a^5\ket{0} =1$ and notes that the expression is completely
antisymmetric in the exchange of its indices.}
$
\bra{0} C a^i a^j a^k a^l a^m \ket{0} = \epsilon^{ijklm}.
$ Therefore, using the known expansion of a Weyl spinor in terms of creation
operators
\[
\ket{\lambda} = \lambda_{+}\ket{0} +\frac{1}{2}\lambda_{ij}a^ja^i\ket{0} 
+\frac{1}{24}\lambda^i \epsilon_{ijklm}a^ma^la^ka^j\ket{0}
\]
equation \eqref{satis0} becomes
\be
\label{pure}
\bra{\lambda}Ca^p\ket{\lambda} = \lambda_{+}\bra{0}Ca^p\ket{\lambda}
  +\frac{1}{2}\lambda_{ij}\bra{0}a_i a_j Ca^p\ket{\lambda}
  +\frac{1}{24}\lambda^i\epsilon_{ijklm}\bra{0}a_j a_k a_l a_m Ca^p\ket{\lambda}.
\ee
But, 
\begin{align*}
\lambda_{+}\bra{0}Ca^p\ket{\lambda} & = \frac{1}{24}\lambda_{+}\lambda^i\epsilon_{ijklm}
	\bra{0}C a^p a^m a^l a^k a^j \ket{0}\\
		                    & = \frac{1}{24}\lambda_{+}\lambda^i\epsilon_{ijklm}
	\epsilon^{pmlkj}\\
	                            & = \lambda_{+}\lambda^p
\end{align*}
Analogously, by noting that $a_iC= -Ca^i$ and $a^iC=-Ca_i$ we obtain
\begin{align*}
\frac{1}{2}\lambda_{ij}\bra{0}a_i a_j Ca^p\ket{\lambda} &=  \frac{1}{4}\epsilon^{pijkl}
	\lambda_{ij}\lambda_{kl} \\
\frac{1}{24}\lambda^i\epsilon_{ijklm}\bra{0}a_j a_k a_l a_m Ca^p\ket{\lambda} & =
	\lambda_{+}\lambda^p.
\end{align*}
Plugging the above results into \eqref{pure} we arrive at
$
2\lambda_{+}\lambda^a + \frac{1}{4}\epsilon^{abcde}
	\lambda_{bc}\lambda_{de} = 0
$ 
which is easily solved by
\be
\lambda_{+}  \equiv e^s, \quad
\lambda_{ab}  \equiv u_{ab}, \quad
\label{solution}
\lambda^a  = -\frac{1}{8}e^{-s}\epsilon^{abcde}u_{bc}u_{de}.
\ee
One can also show that
\eqref{satis} is automatically satisfied by the above parametrisation, therefore
the eleven degrees of freedom of $e^s$ and $u_{ab}$ together with \eqref{solution}
correctly describe the ten-dimensional pure spinor $\l^{\a}$.

Let's now see how the pure spinor $\l^{\a}$
can be used to solve the issues present in the approach of Siegel to
the Green-Schwarz formulation.

\section{The pure spinor formalism}

To solve the pure spinor constraint it was convenient to break the manifest
$SO(10)$ symmetry to its subgroup $U(5)$, so the solution \eqref{solution} is
written in terms of $U(5)$ variables. Therefore using this solution one is
able to write down only the $U(5)$-covariant Lorentz currents
\[
N^{mn} \longrightarrow (n, n^b_a, n_{ab}, n^{ab}).
\]
We will be required to check whether
the $U(5)$ Lorentz currents constructed out of the variables 
$s(z)$, $u_{ab}(z)$ and their conjugate momenta $t(z)$ and $v^{ab}(z)$
satisfy the required condition \eqref{ope_NN}. To do this we will first
need to know how the OPE \eqref{ope_NN} decomposes under $SO(10)\rightarrow U(5)$.
This can be summarized by the following statement, which we will prove
in the appendix.
\begin{teo}
\label{teo_lorentz}
If the SO(10)-covariant OPE of the Lorentz currents
$N^{mn}$ is given by
\begin{equation}
\label{ope}
N^{kl}(y)N^{mn}(z) \rightarrow \frac{\delta^{m[l}N^{k]n}(z) -
\delta^{n[l}N^{k]m}(z)}{y-z} - 3\frac{\delta^{kn}\delta^{lm} -
\delta^{km}\delta^{ln}}{(y-z)^2},
\end{equation}
then the U(5)-covariant currents $(n,n_a^b,n_{ab},n^{ab})$ satisfy the
following OPE's:
\begin{align}
\label{ope1}
n_{ab}(y)n^{cd}(z) & \rightarrow 
              \frac{-\delta^{[c}_{[a}n^{d]}_{b]}(z) - 
	      \frac{2}{\sqrt{5}}\delta^{[c}_a\delta^{d]}_b n(z) }{y-z} 
	      + 3\frac{\delta^{[c}_a\delta^{d]}_b}{(y-z)^2} \\
\label{ope2}	      
n_a^b(y)n_c^d(z) &\rightarrow \frac{\delta^b_cn_a^d(z) -\delta_a^dn_c^d(z)}{y-z}
	      -3 \frac{\delta_a^d\delta_c^b -\frac{1}{5}\delta^b_a\delta_c^d}{(y-z)^2}\\
\label{ope3}	      
n(y)n(z) &\rightarrow  - \frac{3}{(y-z)^2}, \\
\label{ope4}
n(y)n_{ab}(z) &\rightarrow  -\frac{2}{\sqrt{5}}\frac{n_{ab}}{y-z}\\
\label{ope5}
n(y)n^{ab}(z) &\rightarrow +\frac{2}{\sqrt{5}}\frac{n^{ab}}{y-z} \\
\label{ope6}
n(y)n^a_b(z) &\rightarrow \text{regular}
\end{align}
\end{teo}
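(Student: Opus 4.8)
The plan is to decompose the $SO(10)$-covariant OPE \eqref{ope} under the branching $SO(10)\to U(5)$ by writing all spinor and vector indices in terms of the $U(5)$ creation/annihilation operators $a^i,a_i$ used above, exactly as was done to solve the pure spinor constraint. First I would recall that under $SO(10)\to U(5)$ the adjoint $\mathbf{45}$ decomposes as $\mathbf{45}\to\mathbf{24}_0\oplus\mathbf{1}_0\oplus\mathbf{10}_{?}\oplus\overline{\mathbf{10}}_{?}$, which is precisely the content $(n^a_b,\,n,\,n_{ab},\,n^{ab})$; the $n^a_b$ piece is further split into its traceless part $\mathbf{24}$ and the trace $\mathbf{1}$, the latter being $n$ up to the $1/\sqrt5$ normalization that makes $n$ canonically normalized (this normalization is what produces the $-3/(y-z)^2$ in \eqref{ope3} and the $2/\sqrt5$ factors in \eqref{ope4}--\eqref{ope5}). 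Concretely I would write $N^{mn}$ in the complex basis $N^{ij},N^i_j,N_{ij}$ with $N^i_i=\sqrt5\,n$ on the trace and the identification of the holomorphic (lowering) and antiholomorphic (raising) pieces with $n_{ab}$ and $n^{ab}$.

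The core of the argument is then purely bookkeeping: substitute this decomposition into both sides of \eqref{ope}, using the $SO(10)$ metric $\delta^{mn}$ written in the complex basis (so $\delta^{ij}=0$, $\delta_{ij}=0$, $\delta^i_j=\delta^i_j$), and read off each $U(5)$-irreducible component. The single-pole terms on the right of \eqref{ope} involve $\delta$'s contracting a vector index of one current with a vector index of the other; tracking which complex component survives gives the structure-constant terms in \eqref{ope1}--\eqref{ope6}, e.g. $n_{ab}(y)n^{cd}(z)$ picks up $-\delta^{[c}_{[a}n^{d]}_{b]}$ from the "mixed" $N^i_j$ channel and $-\tfrac{2}{\sqrt5}\delta^{[c}_a\delta^{d]}_b\,n$ from the trace channel, while $n(y)n^a_b(z)$ gets no contribution because the trace current commutes with the traceless $\mathbf{24}$ at the level of the single pole. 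The double-pole terms come from substituting $\delta^{kn}\delta^{lm}-\delta^{km}\delta^{ln}$ into the complex basis: the purely mixed components give the $-3(\delta^d_a\delta^b_c-\tfrac15\delta^b_a\delta^d_c)$ in \eqref{ope2} after removing the trace, the trace-trace component gives $-3$ in \eqref{ope3} (the $1/5$ from $\delta^i_i=5$ combining with the $\sqrt5$ normalizations), and the holomorphic-antiholomorphic component gives $+3\delta^{[c}_a\delta^{d]}_b$ in \eqref{ope1}; the $n$--$n_{ab}$ and $n$--$n^{ab}$ double poles vanish because a trace index cannot be soaked up by a purely lowering or raising tensor, leaving only the single-pole $2/\sqrt5$ terms.

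The main obstacle I anticipate is not conceptual but combinatorial: getting all the antisymmetrizations, signs from $a_iC=-Ca^i$, and the relative factors between the $\delta^i_i=5$ trace loops and the $1/\sqrt5$ rescalings consistent, so that the residual trace parts cancel correctly and \eqref{ope2} comes out traceless-projected with coefficient $-3(\cdots-\tfrac15\cdots)$ rather than some other split. A clean way to control this is to define $n:=\tfrac{1}{\sqrt5}N^i_i$ from the start, keep $n^a_b$ traceless by construction, and verify the decomposition on one representative component of each $U(5)$ irrep (say $n_{12}\,n^{12}$, $n^1_2\,n^2_1$, $n\,n$, $n\,n_{12}$, $n\,n^1_2$) — by $U(5)$ covariance, matching these fixes the general formulas. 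Since this computation is flagged as being carried out "in the appendix," the proof here reduces to setting up this dictionary and invoking the index bookkeeping, with the appendix supplying the explicit contractions.
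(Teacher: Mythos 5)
Your proposal is correct and is precisely the intended argument: pass to the complex $U(5)$ basis where $\delta^{ij}=\delta_{ij}=0$, identify $(n^a_b,n,n_{ab},n^{ab})$ with the $\mathbf{24}\oplus\mathbf{1}\oplus\overline{\mathbf{10}}\oplus\mathbf{10}$ pieces of $N^{mn}$ with $n\propto\tfrac{1}{\sqrt5}N^i{}_i$, and read off each irreducible component of \eqref{ope}, and your spot checks (the $-3$ from $\tfrac15\cdot(-15)$, the vanishing double poles in \eqref{ope4}--\eqref{ope5}, the traceless projection producing the $-\tfrac15$ subtraction in \eqref{ope2} and the cancellation in \eqref{ope6}) all come out right. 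Note only that the paper defers this proof to an appendix that does not actually appear in the text, so the explicit index bookkeeping you outline would have to be carried through rather than cited.
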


Furthermore there is one more (consistency) condition to be obeyed when constructing
those $U(5)$ Lorentz currents. The pure spinor 
$\la$ must obviously transform as a spinor under the action of $N^{mn}$,
\[
\delta \la = \frac{1}{2}\left[ \oint dz
\e_{mn}M^{mn},\la \right] = \frac{1}{4}\e_{mn}(\g^{mn}\l)^{\a}
\]
As the OPE of $\la$ with $\Sigma^{mn}$ have no poles we conclude that 
the pure spinor must satisfy,
\be
\label{lambda_N}
N^{mn}(y)\l^{\a}(z) \rightarrow 
\frac{1}{2}\frac{(\g^{mn})^{\a}_{\phantom{m}\b}\l^{\b}(z)}{(y-z)}.
\ee
By the same reasoning, the OPE \eqref{lambda_N} must also be broken to U(5) if
we want the check whether the U(5) Lorentz currents to be described below 
satisfy it. That is
\begin{teo}
\label{teo_lambda}
If the OPE in SO(10)-covariant language is given by
\be
\label{ope_Nl}
N^{mn}(y)\lambda^{\alpha}(z) \rightarrow 
\frac{1}{2}\frac{(\g^{mn})^{\a}_{\phantom{m}\b}\l^{\b}(z)}{(y-z)},
\ee
then the OPE's between 
$(n,n^a_b,n_{ab},n^{ab})$ and
$(\lambda_+,\lambda_{cd},\lambda^c)$ are given by
\begin{align}
n(y)\lambda_+(z) &\rightarrow -\frac{\sqrt{5}}{2}\frac{\lambda_+(z)}{y-z} \label{ope_3}\\
n(y)\lambda_{cd}(z) &\rightarrow -\frac{1}{2\sqrt{5}}\frac{\lambda_{cd}(z)}{y-z} \label{segu}\\
n(y)\lambda^c(z) &\rightarrow \frac{3}{2\sqrt{5}}\frac{\lambda^c(z)}{y-z} \label{sex}\\
n^a_b(y)\lambda_+(z) &\rightarrow \text{regular} \label{ter} \\
n^a_b(y)\lambda_{cd}(z) &\rightarrow \frac{ \delta^a_d\lambda_{cb} 
                                     - \delta^a_c\lambda_{db} }{(y-z)}
                                     -\frac{2}{5}\frac{\delta^a_b\lambda_{cd}}{(y-z)} \label{quar}\\
n^a_b(y)\lambda^c(z) &\rightarrow \frac{1}{5}\delta^a_b\lambda^c -\delta^c_b\lambda^a \\
n_{ab}(y)\lambda_+(z) &\rightarrow \frac{\lambda_{ab}(z)}{y-z} \label{pri} \\
n_{ab}(y)\lambda_{cd}(z) &\rightarrow \epsilon_{abcde}\lambda^e \label{ope_dificil} \\
n_{ab}(y)\lambda^c(z) &\rightarrow \text{regular} \label{6ind}\\
n^{ab}(y)\lambda_+(z) &\rightarrow \text{regular} \label{reg} \\
n^{ab}(y)\lambda_{cd}(z) &\rightarrow 
                         -\frac{\delta^{[a}_c\delta^{b]}_d\lambda_+(z)}{y-z}\label{quin}\\
n^{ab}(y)\lambda^c(z) &\rightarrow -\frac{1}{2}\epsilon^{abcde}\lambda_{de} 
\label{quinf}
\end{align}
\end{teo}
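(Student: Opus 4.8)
The statement is a change-of-basis computation: both sides encode the same $SO(10)$-covariant OPE, merely re-expressed in the $U(5)$ decomposition that was already used to solve the pure spinor constraint \eqref{espinor_puro}. The plan is to fix explicit conventions for the embedding $U(5)\subset SO(10)$, decompose $N^{mn}$ and $\l^{\a}$ into their $U(5)$ irreducible components, rewrite the Lorentz generator $(\g^{mn})^{\a}{}_{\b}$ in the oscillator basis, and then project the right-hand side of \eqref{ope_Nl} onto each component.

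First I would recall the oscillator realization used around \eqref{solution}: split the vector index into $m\leftrightarrow(a,\bar a)$ with $a=1,\dots,5$, introduce Clifford oscillators with $\{a^i,a_j\}=\delta^i_j$, and write the Weyl spinor in the Fock basis so that $\l^{\a}\leftrightarrow(\l_+,\l_{ab},\l^a)$ with $U(1)$ charges $(\tfrac52,\tfrac12,-\tfrac32)$. The generators $\g^{mn}$ then split into a ``raising'' part $\sim a^ia^j$, a ``lowering'' part $\sim a_ia_j$, and a ``number-type'' part $\sim a^ia_j$ whose trace supplies the $U(1)$; correspondingly $N^{mn}\leftrightarrow(n,n^a_b,n_{ab},n^{ab})$, where $n$ is the appropriately normalized singlet extracted from the trace of the number-type component. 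The crucial bookkeeping is this normalization of $n$: the singlet inside the $\mathbf{45}$ must be scaled so that \eqref{ope3} holds with coefficient $-3$, and it is exactly this scaling that produces the ubiquitous $\sqrt5$'s in \eqref{ope_3}--\eqref{ope5}; the $U(1)$ charges above then immediately fix the coefficients $-\tfrac{\sqrt5}{2}$, $-\tfrac{1}{2\sqrt5}$, $+\tfrac{3}{2\sqrt5}$ in \eqref{ope_3}, \eqref{segu}, \eqref{sex}.

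Next I would evaluate the action of each piece of $\g^{mn}$ on the Fock-space components. The traceless number operator $n^a_b$ acts by the natural tensor action on $\l_{cd}$ and $\l^c$ with the appropriate trace-removal term ($-\tfrac25\delta^a_b$, resp.\ $+\tfrac15\delta^a_b$) and annihilates $\l_+$, giving \eqref{ter}, \eqref{quar} and the $n^a_b\l^c$ line. The charge-raising component $n_{ab}$ sends $\l_+\mapsto\l_{ab}$ (eq.\ \eqref{pri}), $\l_{cd}\mapsto\l^e$ through the totally antisymmetric invariant $\epsilon_{abcde}$ (eq.\ \eqref{ope_dificil}), and kills $\l^c$ (eq.\ \eqref{6ind}); dually the charge-lowering $n^{ab}$ is regular on $\l_+$, sends $\l_{cd}\mapsto\l_+$ with an antisymmetrized Kronecker delta (eq.\ \eqref{quin}), and sends $\l^c\mapsto\l_{de}$ with $\epsilon^{abcde}$ (eq.\ \eqref{quinf}). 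In each case the overall $\tfrac12$ in \eqref{ope_Nl}, the explicit oscillator matrix elements, and the chosen normalizations of the components relative to $\l^{\a}$ combine to fix the numerical coefficient.

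The only real obstacle is keeping all relative normalizations consistent: the embedding constants relating $n_{ab},n^{ab},n^a_b,n$ to $N^{mn}$ and those relating $\l_{ab},\l^a,\l_+$ to $\l^{\a}$ must be chosen so that the $\epsilon$-tensor contractions (of the type $\epsilon^{abcde}\epsilon_{abfgh}$) come out with precisely the signs and the factors $\tfrac12$, $\tfrac14$, $\dots$ displayed in \eqref{ope_dificil}, \eqref{quin}, \eqref{quinf}. A good internal cross-check is $U(1)$ charge conservation: applying $n$ to both sides of each OPE must balance the charges, which in particular forces the regular OPEs \eqref{6ind} and \eqref{reg}. Since this computation runs exactly parallel to the decomposition of Theorem \ref{teo_lorentz}, I would carry out the Lorentz-current case first and reuse the oscillator matrix elements here.
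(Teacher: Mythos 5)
Your strategy is sound and would prove the theorem as stated, but it is genuinely different from what the thesis actually does. The text never carries out the top-down decomposition you describe: Theorem~\ref{teo_lambda} is stated without proof (its companion, Theorem~\ref{teo_lorentz}, is deferred to an appendix where the proof does not in fact appear), and the only place the OPEs \eqref{ope_3}--\eqref{quinf} are explicitly computed is the proof of Theorem~\ref{misterio}, which works bottom-up from the free-field realization \eqref{n_lorentz}--\eqref{teo_ghost}: e.g.\ \eqref{segu} from contracting the $u_{ab}v^{ab}$ piece of $n$ with $\lambda_{cd}=u_{cd}$, \eqref{quin} from $-(e^sv^{ab})u_{cd}$, and \eqref{ope_dificil} from the $uuv$ terms of $n_{ab}$ together with the parametrization \eqref{la_fant}. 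Your route proves the branching statement itself --- that \eqref{ope_Nl}, restricted to each $U(1)\times SU(5)$ block of $\gamma^{mn}$ and of the Fock-space spinor, reproduces the table --- independently of any realization, and it explains structurally where the $\sqrt5$'s and the numerical coefficients come from; the paper's computation instead establishes the nontrivial existence claim that a concrete $(s,t,u_{ab},v^{ab})$ system reproduces the table. The two are complementary, and strictly speaking both are needed for the logic of the chapter.

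One caution and one payoff of your method. The caution: $U(1)$ charge conservation alone does not dispose of every ``regular'' entry --- for \eqref{ter} the product $n^a_b\lambda_+$ has the same charge as $\lambda_+$ itself, and regularity follows only from the tracelessness of $n^a_b$ (the trace having been removed into $n$), so you must still exhibit the vanishing matrix element. The payoff: your charge bookkeeping, applied consistently, shows that the residues of $n(y)n_{ab}(z)$ and $n(y)n^{ab}(z)$ must be $+\tfrac{2}{\sqrt5}$ and $-\tfrac{2}{\sqrt5}$ respectively in order to be compatible with \eqref{pri} and \eqref{quin} together with \eqref{ope_3}--\eqref{sex}; the signs printed in \eqref{ope4}--\eqref{ope5} are the reverse of this, so your systematic derivation would catch (and fix) a sign slip that the paper's case-by-case verification leaves undetected.
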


Will it be possible to find an action for 
$s(z)$, $u_{ab}(z)$, $t(z)$ and $v^{ab}(z)$ and explicitly construct the Lorentz
currents $(n,n_a^b,n_{ab},n^{ab})$ out of those variables in such a way as to
reproduce all the above OPE's? If it was impossible to do
this then the pure spinor formalism would have never been born.
In the following paragraphs we will see the solution found by Berkovits.

\subsection{The action for the ghosts}

The action for the ghosts appearing in the pure spinor constraint is given by
\be
\label{acao_fantasma}
S_{\lambda} = \frac{1}{2\pi}\int d^2z \left( 
	-\p t\pbar s +\frac{1}{2}v^{ab}\bar{\p}u_{ab}
\right)
\ee
where $t(z)$ and $v^{ab}(z)$ are the conjugate momenta for $s(z)$ and $u_{ab}(z)$.
Furthermore $s(z)$ and $t(z)$ chiral bosons, so that we must impose their
equations of motions by hand $\bar{\p}s=\bar{\p}t =0$.
The OPE's are given by
\begin{align}
\label{a1}
t(y)s(z) &\rightarrow \ln{(y-z)} \\
\label{a2}
v^{ab}(y)u_{cd}(z) & \rightarrow 2\frac{\d^{ab}_{cd}}{y-z} =
\frac{\delta^{[a}_c\delta^{b]}_d}{y-z}.
\end{align}

One of the most important results which allowed the birth of the pure 
spinor formalism is given by the following theorem
\begin{teo}
\label{misterio}
If the U(5)-symmetric Lorentz currents are built out of the ghosts as
follows
\begin{align}
n &= -\frac{1}{\sqrt{5}}\left( \frac{1}{4}u_{ab}v^{ab}+\frac{5}{2}\p t 
- \frac{5}{2}\p s \right) \label{n_lorentz} \\
n^a_b & = u_{bc}v^{ac} - \frac{1}{5}\delta^a_b u_{cd}v^{cd} \\
n^{ab} & = -\text{e}^s v^{ab} \label{n^ab_lorentz}\\
n_{ab} & = \text{e}^{-s}\left(
           	2\p u_{ab} - u_{ab}\p t - 2u_{ab}\p s + u_{ac}u_{bd}v^{cd} 
                - \frac{1}{2}u_{ab}u_{cd}v^{cd} 
	   \right) \label{teo_ghost}
\end{align}
then their OPE's among themselves and with 
$\lambda_+$, $\lambda_{ab}$ e $\lambda^a$ correctly reproduce
the relations \eqref{ope1}-\eqref{ope6} and \eqref{ope_3}-\eqref{quinf}, if
$s(z)$, $t(z)$, $v^{ab}(z)$ e $u_{ab}(z)$ satisfy the OPE's \eqref{a1} and \eqref{a2}. 
\end{teo}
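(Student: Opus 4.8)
The plan is to verify each of the OPE's \eqref{ope1}--\eqref{ope6} and \eqref{ope_3}--\eqref{quinf} directly, by substituting the explicit expressions \eqref{n_lorentz}--\eqref{teo_ghost} and carrying out every Wick contraction allowed by the free-field OPE's \eqref{a1}--\eqref{a2}. Two preliminary observations organise the work. First, \eqref{a1} says that $s$ and $t$ have \emph{vanishing} self-contractions (only $t(y)s(z)\to\ln(y-z)$ is singular), so $e^{\pm s}$ behaves as a $U(1)$ vertex operator: $\p t(y)\,e^{\pm s(z)}\to\pm\,e^{\pm s(z)}/(y-z)$, $\p s(y)\,e^{\pm s(z)}$ is regular, and $e^{\pm s(y)}e^{\pm s(z)}=1+O(y-z)$. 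Second, every current has a definite charge under the $U(1)$ generated by the $\p t,\p s$ part of $n$ --- $n^{ab}$ and $\lambda_+$ positive, $n_{ab}$ and $\lambda^a$ negative, $n,n^a_b,\lambda_{ab}$ neutral --- so several regularity statements, such as \eqref{ter} and \eqref{reg}, are immediate.

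\emph{The $GL(5)$ sector.} The current $n^a_b=u_{bc}v^{ac}-\tfrac15\d^a_b u_{cd}v^{cd}$ is the traceless bilinear current of the free bosonic first-order system $(u_{ab},v^{ab})$ with $u$ in $\Lambda^2\mathbf 5$; its self-OPE \eqref{ope2} is the standard current-algebra computation --- single contractions give the $GL(5)$ current-algebra structure and the double contraction the level, which comes to $-3$ after removing the $\tfrac15\d^b_a\d^d_c$ trace. Its action on $\lambda_{cd}=u_{cd}$ (in $\Lambda^2\mathbf 5$), on $\lambda^c\propto e^{-s}\epsilon^{cdefg}u_{de}u_{fg}$ (a doubly-contracted pair of $u$'s, hence in $\bar{\mathbf 5}$, the $e^{-s}$ being inert here), and on the singlet $\lambda_+=e^s$ then reproduces \eqref{ter}, \eqref{quar} and the $n^a_b\lambda^c$ line.

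\emph{The scaling current.} Evaluating $n(y)n(z)$ from \eqref{n_lorentz} in its two pieces --- the $u_{ab}v^{ab}$ double contraction and the $\tfrac52(\p t-\p s)$ contraction --- gives the central term $-3$ of \eqref{ope3}, and regularity of $n(y)n^a_b(z)$, i.e.\ \eqref{ope6}, follows because $\p t,\p s$ decouple from $(u,v)$ while $u_{ab}v^{ab}$ is $GL(5)$-invariant. The other OPE's with $n$ are charge measurements: for $n(y)n^{ab}(z)=-n(y)e^{s(z)}v^{ab}(z)$ one contracts the $\p t$ in $n$ with $e^{s}$ and the $u_{ab}v^{ab}$ in $n$ with $v^{ab}$, the two numbers summing to the coefficient $\tfrac2{\sqrt5}$ of \eqref{ope4}--\eqref{ope5}; the same bookkeeping on $\lambda_+$, $\lambda_{ab}$, $\lambda^a$ gives \eqref{ope_3}, \eqref{segu}, \eqref{sex}, both pieces of $n$ acting on $\lambda^a$ since it carries an $e^{-s}$ \emph{and} two $u$'s.

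\emph{The lowering current --- the main obstacle.} The laborious part is $n_{ab}(y)n^{cd}(z)$ of \eqref{ope1}, together with the $\lambda$-OPE's \eqref{pri}, \eqref{ope_dificil}, \eqref{6ind}, \eqref{quin}, \eqref{quinf}; these are exactly the relations the footnote above singles out. Here $n_{ab}=e^{-s}\!\left(2\p u_{ab}-u_{ab}\p t-2u_{ab}\p s+u_{ae}u_{bf}v^{ef}-\tfrac12 u_{ab}u_{cd}v^{cd}\right)$ is a weight-one composite, and computing $n_{ab}(y)\bigl(-e^{s}v^{cd}\bigr)(z)$ forces one to combine the Taylor-expandable factor $e^{-s(y)}e^{s(z)}=1-(y-z)\p s(z)+\cdots$, the pole $\p t(y)e^{s(z)}\to e^{s(z)}/(y-z)$ acting on $-u_{ab}\p t$, the single contractions of $v^{cd}(z)$ against $2\p u_{ab}$ and against the $u$'s in every remaining term, and the double contractions of $v^{cd}(z)$ with the cubic pieces. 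I expect the effort to concentrate in two reorganisations. (i) The double pole of \eqref{ope1} must assemble as $3\,\d^{[c}_a\d^{d]}_b=2\,\d^{[c}_a\d^{d]}_b+\d^{[c}_a\d^{d]}_b$: the ``$2$'' is the single contraction of the \emph{derivative} field $2\p u_{ab}$ with $v^{cd}$ (a once-contracted derivative field already yields a double pole), and the ``$1$'' is the simultaneous contraction of $\p t$ with $e^{s}$ and of $u_{ab}$ with $v^{cd}$ in the $-u_{ab}\p t$ term. (ii) The single-pole terms come out as products $e^{-s}e^{s}uv$ and must be repackaged into $-\d^{[c}_{[a}n^{d]}_{b]}-\tfrac2{\sqrt5}\d^{[c}_a\d^{d]}_b\,n$ by inverting \eqref{n_lorentz}--\eqref{teo_ghost} (e.g.\ $u_{bd}v^{cd}=n^c_b+\tfrac15\d^c_b u_{ef}v^{ef}$), with the spurious structures $u_{ab}v^{cd}$ (both indices free) cancelling among the channels; keeping all the signs straight through the conventions of \eqref{a1}--\eqref{a2} and \eqref{n_lorentz}--\eqref{teo_ghost} is where I expect the only genuine difficulty. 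The remaining $\lambda$-OPE's fall out of the same contractions: \eqref{ope_dificil} holds because contracting the $v$'s in $n_{ab}$ against $\lambda_{cd}=u_{cd}$ leaves $e^{-s}\bigl(u_{ac}u_{bd}-u_{ad}u_{bc}-u_{ab}u_{cd}\bigr)$, which equals $\epsilon_{abcde}\lambda^e$ once one uses $\epsilon_{abcde}\epsilon^{efghi}=4!\,\d^{[f}_a\d^{g}_b\d^{h}_c\d^{i]}_d$ with \eqref{solution}; \eqref{pri} is the contraction of $\p t$ with $e^{s}$ on $\lambda_+$ plus the $-u_{ab}\p s$ term; \eqref{quin} and \eqref{quinf} come from contracting $v^{ab}$ in $n^{ab}=-e^{s}v^{ab}$ against $\lambda_{cd}=u_{cd}$ and against the two $u$'s in $\lambda^c$, again with an $\epsilon$-identity; and \eqref{6ind} is the statement that the analogous residue, a sum of $e^{-2s}\,\epsilon\,uuu$ terms, cancels by antisymmetry. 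As a global check the whole set of $U(5)$ relations must repackage into the single $SO(10)$ statements \eqref{ope} and \eqref{ope_Nl} of Theorems~\ref{teo_lorentz} and~\ref{teo_lambda}; any slip leaves behind a term of the wrong $U(5)$ weight and is immediately visible.
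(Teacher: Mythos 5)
Your proposal is correct and follows essentially the same route as the paper: direct Wick contraction of the explicit ghost expressions using the free-field OPE's \eqref{a1}--\eqref{a2}, with the paper likewise only spot-checking a representative sample (\eqref{segu}, \eqref{reg}, \eqref{quin}, the $n(y)n(z)$ double pole, and \eqref{ope_dificil} via the $e^{-s}(u_{ac}u_{bd}-u_{ad}u_{bc}-u_{ab}u_{cd})=\epsilon_{abcde}\lambda^e$ identity) and declaring the remaining cases analogous. One small slip in your bookkeeping: in $n_{ab}(y)\lambda_+(z)$ the $-2u_{ab}\p s$ term cannot contribute, since \eqref{a1} gives $s$ no self-contraction, so the simple pole of \eqref{pri} comes entirely from the $-u_{ab}\p t$ term hitting $e^{s}$.
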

\begin{proof}
We will explicitly check a few of those OPE's as the others can be shown along
similar lines.
For example, one can easily check \eqref{segu} as follows,
\begin{align*}
n(y)\lambda_{cd}(z) &= -\frac{1}{\sqrt{5}}\(\frac{1}{4}u_{ab}v^{ab}+\frac{5}{2}\p t
-\frac{5}{2}\p s  \)u_{cd}(z) \\
 & \rightarrow -\frac{1}{4\sqrt{5}}u_{ab}(y)\(\frac{\delta^{[a}_c\delta^{b]}_d}{y-z}\) =
 -\frac{1}{2\sqrt{5}}\frac{\lambda_{cd}(z)}{y-z}.
\end{align*}
Similarly, \eqref{reg} is easily seen to be true because $s(z)$ has no
poles with itself nor with $v^{ab}(y)$,
\[
n^{ab}(y)\lambda_+(z) =-(e^sv^{ab})e^s \rightarrow \text{regular}.
\]
Using \eqref{a2} we check the validity of \eqref{quin},
\[
n^{ab}(y)\lambda_{cd}(z) = -(e^sv^{ab})u_{cd} 
  \rightarrow -\frac{\delta^{[a}_c\delta^{b]}_d}{y-z}\lambda_+.
\]
The OPE \eqref{ope4} requires a bit more work but it also comes
out right. Using \eqref{n_lorentz} we get
\[
n(y)n(z)  \rightarrow\frac{1}{80}\( u_{ab}v^{cd}:v^{ab}u_{cd}:
+u_{ab}:v^{ab}u_{cd}:v^{cd}
+:u_{ab}v^{cd}:v^{ab}u_{cd}\)
 -\frac{5}{4}\(:\p t\p s: + :\p s \p t:\),
\]
and one can check that the simple pole terms cancel while for the double
pole we get
\[
  \rightarrow -\frac{1}{80}
 \frac{\delta^{[a}_c\delta^{b]}_d\delta^{[c}_a\delta^{d]}_b}{(y-z)^2}
 -\frac{10}{4}\frac{1}{(y-z)^2} 
       \rightarrow -\frac{3}{(y-z)^2},	 
\]
so it correctly reproduces \eqref{ope4}.
Finally we check \eqref{ope_dificil},
\begin{align*}
n_{ab}(y)\lambda_{cd}(z) &\rightarrow e^{-s}\(
u_{ae}u_{bf}:v^{ef}u_{cd}: -\frac{1}{2}u_{ab}u_{ef}:v^{ef}u_{cd}: \)\\
 & \rightarrow 
 e^{-s}\(u_{ae}u_{bf}-\frac{1}{2}u_{ab}u_{ef}\)\frac{\delta^{[e}_c\delta^{f]}_d}{y-z}\\
 & \rightarrow e^{-s}\(u_{ac}u_{bd} -u_{ad}u_{bc} -u_{ab}u_{cd} \) =
 \epsilon_{abcde}\lambda^e,
\end{align*}
where in the last line we used \eqref{la_fant}.
The proof for all the other cases is analogous and will be omitted.
\end{proof}

We will show in the following that the central charge for the ghost action
\eqref{acao_fantasma}
is $+22$, which is indeed the required value for it to annihilate the
total central charge when added to Siegel's action.

The energy momentum tensor for the ghosts can be found using
Noether's procedure, with the following definition for the 
variation of the action
\[
\delta S_{\lambda} = \frac{1}{2\pi}\int d^2z [\bar{\p}\varepsilon T_{\lambda}(z)
                     + \p\bar{\varepsilon} \bar{T}_{\lambda}(\bar{z}) ],
\]
under the conformal transformations of 
\begin{align}
\label{t1}
\delta v^{ab} & = \p\varepsilon  v^{ab} +\varepsilon \p v^{ab} +\bar{\varepsilon}\bar{\p}v^{ab}  \\
\label{t2}
\delta u_{ab} & =  \varepsilon \p u_{ab} +\bar{\varepsilon}\bar{\p}u_{ab}\\
\label{t3}
\delta \p s & =  \p\varepsilon \p s +\varepsilon \p^2 s 
                 +\p\bar{\varepsilon} \bar{\p} s +\bar{\p}\bar{\varepsilon}\p s  \\
\delta \bar{\p}t & =  \varepsilon \p\bar{\p}t + \bar{\p}\varepsilon\p t 
                      +\bar{\p}\bar{\varepsilon}\bar{\p}t 
                      +\bar{\varepsilon}\bar{\p}\bar{\p}t.
\label{t4}		      
\end{align}
Doing this we obtain
\[
T_{\lambda}(z) = \frac{1}{2}v^{ab}\p u_{ab} +\p t \p s + \p^2 s.
\]
For example,
\begin{align*}
\delta (\bar{\p}t\p s) &= \left(  
	\varepsilon \p\bar{\p}t + \bar{\p}\varepsilon\p t 
                      +\bar{\p}\bar{\varepsilon}\bar{\p}t 
                      +\bar{\varepsilon}\bar{\p}\bar{\p}t
\right)\p s + \bar{\p}t\left(
	\p\varepsilon \p s +\varepsilon \p^2 s 
                 +\p\bar{\varepsilon} \bar{\p} s +\bar{\p}\bar{\varepsilon}\p s
\right)\\
                     & = \p(\varepsilon\bar{\p}t\p s) + \bar{\p}(\bar{\varepsilon}\bar{\p}t\p s)
		         + \bar{\p}\varepsilon \p t\p s + \p \bar{\varepsilon}\bar{\p}s\bar{\p}t,
\end{align*}
so up to a surface term, $T(z)=\p t\p s$ is the contribution from the variables $s,t$.
The contribution from the variables $v^{ab}$ e $u_{ab}$ can be easily obtained
by noticing that it is a $\beta\gamma$ system with $\lambda=1$, if the following
identification is made $\beta\rightarrow -1/2v^{ab}$ and $\gamma \rightarrow u_{ab}$.
As the energy momentum tensor for $\b\g$ system is given by \cite{polchinski_2}
$T(z)=\p\beta\gamma-\lambda\p(\beta\gamma)=\frac{1}{2}v^{ab}\p u_{ab}$, it follows 
that
\[
T(z)  = \frac{1}{2}v^{ab}u_{ab} + \p t \p s.
\]
To justify the addition of the term $\p^2 s$ in $T(z)$ we compute the OPE
of $T(y)$ with the Lorentz current $n(z)$ from Theorem \ref{misterio}. We get
\[
T(y)n(z) \rightarrow \frac{\sqrt{5}}{(y-z)^3} + \frac{n(z)}{(y-z)^2} + \frac{\p n(z)}{(y-z)}
\]
where the triple pole comes from 
\[
\left(\frac{1}{2}v^{ab}\p u_{ab}\right)\left(\frac{1}{4\sqrt{5}}u_{cd}v^{cd}\right) \rightarrow
\frac{1}{8\sqrt{5}}\frac{\delta^{[a}_c \delta^{b]}_d\delta^{[c}_a \delta^{d]}_b}{(y-z)^3}
    = \frac{\sqrt{5}}{(y-z)^3}.
\]
Therefore the Lorentz current would fail to be a primary field, but that
can be fixed by the addition of $\p^2 s$, because
\[
\p^2 s(y) n(z) \rightarrow \frac{\sqrt{5}}{2}:\p^2 s(y) \p t(z): = -\frac{\sqrt{5}}{(y-z)^3}.
\]
So we have shown that the energy momentum tensor for the ghost variables is given by
\be
\label{t_lambda}
T_{\lambda}(z) = \frac{1}{2}v^{ab}\p u_{ab} +\p t \p s + \p^2 s.
\ee
The central charge can be easily computed by considering the fourth order
pole in $T_{\lambda}(y)T_{\lambda}(z)$. There are two contributions 
\begin{align*}
\frac{1}{4} :v^{ab}(y)\p u_{cd}(z)::\p u_{ab}(z)v^{cd}(y):  & =
    	\frac{1}{4}\frac{\delta^{[a}_c \delta^{b]}_d\delta^{[c}_a \delta^{d]}_b}{(y-z)^4} 
	 =  \frac{10}{(y-z)^4},\\
\intertext{and}
:\p t(y)\p s(z):: \p s(z) \p t(y): & = \frac{1}{(y-z)^4},
\end{align*}
whose sum imply that $c_g= +22$.
Therefore, as there are no poles between the ghosts and matter variables,
the total central charge of the energy momentum tensor in
the pure spinor formalism 
\be
\label{tensor_total}
T(z) = -\frac{1}{2} \p X^m \p X_m - p_{\alpha}\p \theta^{\alpha} + 
       \frac{1}{2}v^{ab}\p u_{ab} +\p t \p s + \p^2 s,
\ee
is zero.

The conclusion from the previous discussion is that the  addition
of the pure spinor ghost action of \eqref{acao_fantasma} to the 
Siegel action \eqref{siegel} makes the central charge of the theory
to vanish and implies that the Lorentz currents have the same
OPE as in the RNS formalism. So the pure spinor formalism action
for the left-moving fields 
is given by
\be
\label{acao_nathan}
S= \frac{1}{2\pi}\int dz \big[ \frac{1}{2}\p X^m \bar{\p}X_m +p_{\a}\bar{\p}\t^{\a} -
\p t\pbar s +\frac{1}{2}v^{ab}\pbar u_{ab} \big].
\ee
The variables in the pure spinor formalism have the following 
supersymmetry transformations
\be
\label{susy1}
\delta X^m  = \frac{1}{2} \(\varepsilon \gamma^m \theta\),\quad
\delta \theta^{\a}  = \varepsilon^{\a}, \quad \delta (\text{ghosts})  = 0,
\ee
\be
\label{susy3}
\delta p_{\beta}  = -\frac{1}{2}\varepsilon^{\a} \gamma^m_{\a\beta}\p X_m
                   + \frac{1}{8}\varepsilon^{\a}\t^{\gamma}\p
		   \t^{\d}\g^m_{\b\d}\gamma_{m\, \g\a}\quad
\ee
and one can check that they are generated by
\[
Q_{\a} =  \oint ( p_{\a} +\frac{1}{2}\gamma^m_{\a\beta}\t^{\beta}\p X_m 
+ \frac{1}{24}\gamma^m_{\a\beta}\gamma_{m\,\gamma\delta}\t^{\beta}\t^{\gamma}\p
\t^{\delta}),
\]
which satisfy the supersymmetry algebra
\[
\{Q_{\a},Q_{\beta}\} = \gamma^m_{\a\beta}\oint \p X_m.
\]
For example, the variation of \eqref{acao_nathan} under \eqref{susy1}-\eqref{susy3}
can be checked to be
\be
\label{var_ia}
\delta S = \int dz \Big[\frac{1}{4}(\ve\g^m\p\t)\pbar X_m
+ \frac{1}{4}(\ve\g^m\pbar\t)\p X_m -\frac{1}{2}(\ve\g^m\pbar\t)\p X_m 
-\frac{1}{8}(\bar{\p}\t\gamma^m\p \t)(\varepsilon\gamma_m\t) \Big].
\ee
Integrating the first term by parts we get
$-1/4\int (\ve\g^m\pbar\p\t)X_m$, which can be integrated by
parts again to result in
$+1/4\int (\ve\g^m\pbar\t)\p X_m $. Therefore the sum of the first two terms
of \eqref{var_ia} cancels the third. So the supersymmetry variation of
the pure spinor action \eqref{acao_nathan} will be zero if
$\int (\bar{\p}\t\gamma^m\p \t)(\varepsilon\gamma_m\t)$ vanishes.
To see that this we integrate it by parts to obtain
\begin{align*}
\int (\bar{\p}\t\gamma^m\p \t)(\varepsilon\gamma_m\t) & =
-\int (\t\gamma^m\p \t)(\varepsilon\gamma_m\bar{\p}\t)
-\int (\t\gamma^m\bar{\p}\p \t)(\varepsilon\gamma_m\t)\\
& = -\int (\t\gamma^m\p \t)(\varepsilon\gamma_m\bar{\p}\t)
+\int (\p\t\gamma^m\bar{\p} \t)(\varepsilon\gamma_m\t)
+\int (\t\gamma^m\bar{\p} \t)(\varepsilon\gamma_m\p\t)\\
& = - \int\t^{\a}\p\t^{\beta}\varepsilon^{\gamma}\bar{\p}\t^{\sigma}
\(\gamma^m_{\a\beta}\gamma_{m\,\gamma\sigma} - \gamma^m_{\beta\sigma}\gamma_{m\,\gamma\a}
+ \gamma^m_{\a\sigma}\gamma_{m\,\gamma\beta} \)\\
& = + 2\int \t^{\a}\p\t^{\beta}\varepsilon^{\gamma}\bar{\p}\t^{\sigma}
    \(\gamma^m_{\beta\sigma}\gamma_{m\,\gamma\a}\) \\
    & = -2\int \(\bar{\p}\t\gamma^m\p\t\)\(\varepsilon\gamma_m\t\),
\end{align*}
where we used $\g^m_{\a(\b}(\g_m)_{\g\d)} = 0$. We therefore
conclude that $\int (\bar{\p}\t\gamma^m\p \t)(\varepsilon\gamma_m\t) =0$,
which finishes the proof that \eqref{acao_nathan} is
supersymmetric.

We can define the ghost number of any state $\Psi(y)$ by
\[
[ \oint dz J(z), \Psi(y) ] = n_g \Psi(y),
\]
where the ghost current $J(z)$ is given by \cite{nathan_ICTP}
\be
\label{corr_ghost}
J(z) = \frac{1}{2}u_{ab}v^{ab} + \p t + 3\p s.
\ee
One can check that the ghost current defined above satisfies the
following OPE's \cite{nathan_multiloop}\cite{nekrasov_charac},
\begin{align}
J(y)\la(z) &\rightarrow \frac{\la}{y-z} \label{u5_denovo}\\
J(y)J(z) & \rightarrow -\frac{4}{(y-z)^2}\\
J(y)T(z) & \rightarrow -\frac{8}{(y-z)^3}+\frac{J(z)}{(y-z)^2} \label{anom}\\
J(y)N^{mn}(z) &\rightarrow \text{regular} \label{N_escalar}\\
\label{ano}
T(y)J(z) & \rightarrow \frac{8}{(y-z)^3} + \frac{J(z)}{(y-z)^2} + \frac{\p J(z)}{y-z}.
\end{align}
For example, to show that
\eqref{u5_denovo} is true we must compute the OPE's of 
$J(y)$ with the U(5) components of 
$\la$ to check that the results are compatible. So
\[
J(y)\lambda_+(z) \rightarrow \quad :\p t(y)e^s(z): =
                 \rightarrow \frac{\lambda_+}{y-z}
\]		
\[
J(y)\lambda_{cd} \rightarrow \frac{1}{2}u_{ab}(y):v^{ab}(y)u_{cd}(z):
                  \quad =\frac{1}{2}u_{ab}\frac{\delta^{[a}_c\delta^{b]}_d}{y-z}
                 =  \frac{\lambda_{cd}}{y-z}
\]
The triple pole of \eqref{anom}, for example, comes from the following
contractions
\begin{align*}
\frac{1}{4}:u_{cd}(y)v^{ab}(z)::v^{cd}(y)\p u_{ab}(z): & \rightarrow 
   -\frac{1}{4}\frac{\delta^{[a}_c\delta^{b]}_d\delta^{[c}_a\delta^{d]}_b}{(y-z)^3} 
   = -\frac{10}{(y-z)^3} \\
:\p t(y)\p^2 s(z): &\rightarrow \frac{2}{(y-z)^3}
\end{align*}
whose sum results in the coefficient $-8$.
The proof for the other OPE's is similar and therefore
will be omitted.

From 
\eqref{u5_denovo} we can see that the ghost number of the pure spinor
$\la$ is +1. Moreover from \eqref{N_escalar} we see that $J(z)$
is a Lorentz scalar (as it should be) and from \eqref{ano}
that there is an anomaly of +8
in the ghost current, which has conformal weight $h=1$.

\section{Massless vertex operators}

The physical states in the pure spinor formalism 
are defined to be in the cohomology of the BRST
operator 
\[
Q = \frac{1}{2\pi i}\oint \l^{\a}d_{\a}
\]
which satisfy $Q^2=0$ due to the pure spinor condition \eqref{espinor_puro}
and the OPE \eqref{dd_ope}. Therefore we can define the 
unintegrated and integrated massless vertex operators for the super-Yang-Mills 
states
as follows
\be
\label{V}
V = \l^{\a}A_{\a}(x,\t)
\ee
\be
\label{integrado}
U(z) = \p\t^{\a}A_{\a}(x,\t) + A_m(x,\t)\Pi^m + d_{\a}W^{\a}(x,\t) 
+ \half N_{mn}{\cal F}^{mn}(x,\t),
\ee
where the superfields $A_{\a}$, $A_m$, $W^{\a}$ and ${\cal F}_{mn}$
describe the super-Yang-Mills theory in D=10, which is briefly reviewed
in appendix \ref{sym_ap}.

In the RNS formalism the unintegrated vertex operator satisfies $QU = \p V$,
as one can check by recalling that $U = \{\oint b, V\}$ and $T = \{Q,b\}$.
The proof then follows from the Jacobi identity
\be
\label{qu}
QU = [Q,\{\oint b, V\}] = - [ V, \{Q, \oint b\}] - [\oint b, \{V,Q\}]
= \p V
\ee
because the cohomology condition requires $\{V, Q\}=0$ and the conformal
weight zero of $V$ implies $[\oint T, V] = \p V$.

In the pure spinor formalism the integrated vertex \eqref{integrado} also 
satisfies \eqref{qu}.
To see this we use the OPE's
\eqref{pteta}, \eqref{dd_ope} and \eqref{super_ope} and the equations of
motion for the SYM superfields listed in Appendix \ref{sym_ap} to
get 
\[
Q(\p\t^{\a} A_{\a}) = (\p \l^{\a})A_{\a} - \p\t^{\a}\l^{\b}D_{\b}A_{\a}
\]
\[
Q(\Pi^m A_m) = (\l\g^m\p \t) A_m + \Pi^m\l^{\a}(D_{\a} A_m)
\]
\[
Q(d_{\a}W^{\a}) = - (\l\g^m W)\Pi_m - d_{\b}\l^{\a}D_{\a}W^{\b}
\]
\[
Q(\frac{1}{2}N_{mn}{\cal F}^{mn}) = \frac{1}{4}(\g^{mn}\l)^{\a}d_{\a}{\cal F}_{mn}
+ \frac{1}{2} N_{mn}\l^{\a}D_{\a}{\cal F}^{mn}
\]
Therefore
\[
QU = (\p\l^{\a})A_{\a} - \p\t^{\b}\l^{\a}(D_{\a}A_{\b} - \g^m_{\a\b}A_m)
+ \l^{\a} \Pi^m(D_{\a}A_m - (\g_m W)_{\a})
\]
\be
\label{qu2}
-\l^{\a}d_{\b}(D_{\a}W^{\b}+\frac{1}{4}(\g^{mn})_{\a}^{\phantom{m}\b}{\cal F}_{mn})
+N_{mn}(\l\g^n \p^m W).
\ee
Using the equations of motion listed in Appendix \ref{sym_ap} we get
\be
\label{qu3}
QU = (\p\l^{\a})A_{\a} + \l^{\a}\p\t^{\b}D_{\b}A_{\a} + \l^{\a}\Pi^m\p_m A_{\a}
\ee
where the last term in \eqref{qu2} vanished by the pure spinor condition
$(\l\g^n)_{\a}(\l\g_n)_{\b} = 0$ and the equation of motion $\g^m_{\a\b}\p_mW^{\b} = 0$,
\[
N_{mn}(\l\g^n \p^m W) = \frac{1}{2}(w\g^m\g^n\l)(\l\g^n \p^m W) - (w\l)(\l\g^m\p_m W) = 0.
\]
Using the definitions \eqref{pi} and \eqref{susy_deriv} one easily checks that
\eqref{qu3} becomes
\[
QU = (\p\l^{\a})A_{\a} + \l^{\a}( \p\t^{\b}\p_{\b} A_{\a} + \p X^m \p_m A_{\a}) 
\]
\[
= (\p\l^{\a})A_{\a} + \l^{\a} \p A_{\a} = \p (\l A) = \p V,
\]
as we wanted to show.

The unintegrated vertex operator satisfies $QV = 0$ if the superfield $A_\a$
is on-shell, \emph{i.e.,} if equation \eqref{symeq} is obeyed,
\[
QV = \oint \l^\a(z) d_\a(z) \l^\b(w)A_{\b}(x,\t) = \l^\a\l^\b D_\a A_\b = 0, 
\]
where we used that $\l^\a\l^\b = (1/3840)(\l\g^{mnpqr}\l)\g_{mnpqr}^{\a\b}$
for pure spinors $\l^\a$.

\section{Tree-level prescription}

The prescription to compute N-point superstring amplitudes at tree-level 
is given by
\be
\label{tree_presc}
{\cal A} = \langle V^1 V^2 V^3 \int U^4 {\ldots} \int U^N
\rangle
\ee
where the angle brackets is defined in such a way as to be non-vanishing
only when there are three pure spinor $\l$'s and five $\t$'s in a
combination proportional to 
\be
\label{measure}
\langle (\l\g^m \t)(\l\g^n \t)(\l\g^p \t)(\t\g_{mnp}\t)
\rangle = 1.
\ee
One can check that the measure \eqref{measure} is in the cohomology
of the pure spinor BRST operator \eqref{Q_brst}. It is BRST-closed due to
the pure spinor constraint \eqref{espinor_puro}. And it is not
BRST-trivial because there is no Lorentz scalar built out of
two $\l$'s and six $\t$'s. To check this one uses the theory of
group representations as follows. 

The representation of two pure spinors $\l^{\a}$ is
given by $[0,0,0,0,2]$ while six antisymmetric thetas are represented
by $[0,1,0,2,0]+[2,0,1,0,0]$. Therefore\footnote{I acknowledge the use of 
LiE in doing these computations \cite{LiE}.}
\[
[0,0,0,0,2]\otimes \Big[ [0,1,0,2,0]+[2,0,1,0,0]\Big] =
1[0,0,0,1,1] +1[0,0,0,2,2] +2[0,0,1,2,0] + {\ldots} 
\]
has no scalar component. 

The pure spinor measure \eqref{measure} together with BRST-closedness
of the vertex operators imply that the amplitude prescription is
supersymmetric. To see this one notes that the only possibility
of getting a non-vanishing result after a supersymmetry transformation
$\d \t^\a = \e^\a$ is if the amplitude of \eqref{tree_presc} contains
the term
\be
\label{notinv}
{\cal A} = \langle (\l\g^m \t)(\l\g^n \t)(\l\g^p \t)(\t\g_{mnp}\t)
(\t^\a \Phi_\a + {\ldots} )
\rangle
\ee
for some $\Phi_\a$. If that were true then the supersymmetry variation $\d_S {\cal A}$
would be
\be
\label{susyt}
\d_S {\cal A} = \int dz_4 \dotsi \int dz_N \e^\a \Phi_\a.
\ee
But note that the result of the amplitude calculation of \eqref{tree_presc}
\[
\int dz_4 \dotsi \int dz_N \l^\a\l^\b\l^\g f_{\a\b\g}(\t)
\]
must satisfy the BRST-closedness property of
\be
\label{closed}
\int dz_4 \dotsi \int dz_N \l^\a\l^\b\l^\g \l^\d D_\d f_{\a\b\g}(\t) = 0.
\ee
Plugging \eqref{notinv} into \eqref{closed} we conclude that
\[
\int dz_4 \dotsi \int dz_N \l^\a\l^\b\l^\g \l^\d \Phi_\d = 0,
\]
which is only possible if $\Phi_\d$ is a total derivative, implying
that the supersymmetry variation of \eqref{susyt} 
vanishes, $\d_S {\cal A} = 0$.

\section{Multiloop prescription}
\label{multi_min}

The prescription to compute multiloop amplitudes in the minimal
pure spinor formalism was spelled out in \cite{nathan_multiloop}, which
we now briefly review.

The multiloop prescription in the pure spinor formalism was made possible by the construction of
the analogous operators of the picture changing operators in the RNS formalism, which
can be understood as being necessary to absorb the zero-modes of the various
variables. As it is well-known \cite{verlinde}, the zero-modes of bosonic
variables require the introduction of delta functions
which depend on the variable which has the zero mode. The fermionic zero modes
require the insertion of as much fermionic variables as is the number of
zero modes, otherwise the Berezin integration will produce a vanishing
result.

So the analysis of zero modes will play a crucial r\^ole in the multiloop
prescription. But for our purposes in this thesis it will be sufficient to
know that a conformal weight one variable $\Phi_1$ has $g$ zero-modes in
a genus $g$ Riemann surface, while a conformal weight zero variable $\Phi_0$
always has one zero mode in every genus.

In the pure spinor formalism the zero modes of $\l^{\a}$, $N^{mn}$ and $J$
will require insertions of delta functions involving these variables. They
are given as follows
\be
\label{operators}
Y_C = C_{\a}\t^{\a}\d(C_{\b}\l^{\b}), \quad
Z_B = \frac{1}{2} B_{mn} (\l\g^{mn}d) \d(B^{pq}N_{pq}),\quad Z_J=
(\l^\a d_\a) \d(J),
\ee
where $C_{\a}$ and $B_{mn}$ are constant tensors. They will be responsible
for killing the eleven zero modes of $\l^{\a}$ and $11g$ zero modes of
$w_{\a}$. Therefore after eliminating the conformal weight one variables
through their OPE's one will be left with an expression containing 
only the zero modes of all the variables which are part of the pure spinor
formalism. Those zero modes will be absorbed by the insertions of the
operators \eqref{operators}, but one will need explicit measures to
integrate what is left.

The measure for integration over the eleven $\l$ zero-modes is 
given by
\[
[d\l] \l^{\a}\l^{\b}\l^{\g} = \e_{\rho_1{\ldots} \rho_{11}\k_1{\ldots} \k_5}
T^{((\a\b\g))[\k_1\k_2\k_3\k_4\k_5]} d\l^{\rho_1}{\ldots}
d\l^{\rho_{11}}
\]
while for the $w_{\a}$ zero modes it reads
\[
(d^{11}N)^{[[m_1 n_1][m_2 n_2]...[m_{10}n_{10}]]} =
[dN]
\]
\[
\Big[ (\l\g^{m_1 n_1 m_2 m_3 m_4}\l)
(\l\g^{m_5 n_5 n_2 m_6 m_7}\l)
(\l\g^{m_8 n_8 n_3 n_6 m_9}\l)
(\l\g^{m_{10} n_{10} n_4 n_7 n_9}\l) + {\rm permutations}\Big]
\]
where 
\[
(d^{11}N)^{[[m_1 n_1][m_2 n_2]...[m_{10}n_{10}]]}\equiv
dN^{[m_1 n_1]} \wedge
dN^{[m_2 n_2]} \wedge ... \wedge dN^{[m_{10} n_{10}]} \wedge dJ.
\]
and 
\[
T^{((\a\b\g))[\k_1\k_2\k_3\k_4\k_5]} = (\g_m)^{\k[\k_1}(\g_n)^{\s |\k_2}
(\g_p)^{\tau\k_3}(\g^{mnp})^{\k_4\k_5]}
(\d^{(\a}_{\k}\d^\b_\s\d^{\g)}_{\tau}
-\frac{1}{40}\g_q^{(\a\b}\d^{\g)}_\k \g^q_{\s\tau}).
\]

To compute multiloop amplitudes over a g-genus Riemann surface
one needs to have a measure
for the integration over the moduli space of Riemann surfaces.
The standard way to achieve this is through the insertion of
$3g-3$ factors containing the b-ghost and the Beltrami differential, 
which is a conformal weight $(-1,1)$ differential defined by
\[
\mu_{z}^{\phantom{m}{\bar z}} = g^{z{\bar z}}\frac{\p g_{zz}}{\p \tau}.
\]
That insertion has the property of being a density for the moduli
integration, because the Beltrami differential transforms as
\[
\mu_{z}^{\phantom{m}{\bar z}} = {\tilde \mu}_{z}^{\phantom{m}{\bar z}}
\frac{\p{\tilde \tau}}{\p \tau}.
\]
Explicitly the b-ghost insertion reads
\[
\langle b \cdot \mu \rangle = \int d^2z b_{zz}
\mu^{z}_{\phantom{m}{\bar z}}
\]
However the b-ghost must satisfy the property of $\{Q,b(z)\}= T(z)$ 
because $\langle b \cdot \mu\rangle$ must be BRST-invariant after
the integration over moduli space. But
in the pure spinor formalism there is no such object, because there
is no gauge invariant operator with ghost number  $-1$
(with respect to $J=(\l w)$).

The idea to overcome this difficulty was to construct an operator
$b(u,z)$ such that 
\[
\{Q, b_B(u,z)\} = T(u)Z_B(z)
\]
because whenever one needs to insert the $3g-3$ 
b-ghosts in the scattering
amplitude prescription one also needs to insert $10g$ of
$Z_B$ and $1g$ of $Z_J$ to deal with the zero modes of $w_{\a}$. Then
the idea was to borrow $3g-3$ $Z_B$'s into the factor containing 
the measure for the moduli space. Therefore the insertion of
$\langle b_B\cdot \mu \rangle$ in the pure spinor amplitude
prescription will respect its BRST-closedness property up to
a total derivative in moduli space.

The multiloop amplitude prescription for genus higher than one 
is given by
\[
A = \int d^2\tau_1 {\ldots} d^2\tau_{3g-3} \langle
\prod_{P=1}^{3g-3}\int d^2 u_P \mu_P(u_P)
{\tilde b_{B_P}}(u_P,z_P) 
\]
\[
\prod_{P=3g-2}^{10g} Z_{B_P}(z_P) \prod_{R=1}^{g} Z_J(v_R)
\prod_{I=1}^{11} Y_{C_I}(y_I)~|^2 ~\prod_{T=1}^N \int d^2 t_T U_T(t_T)
\rangle,
\]
where the $b_B$-ghost is a complicated operator whose expression
can be looked in \cite{nathan_multiloop} (see also detailed computations
in \cite{oda_Y}\cite{oda_b_min}).
For the genus one surface the prescription is given by
\[
A_{\rm one-loop}
\int d^2\tau \langle ~|~
\int d^2 u \mu(u)
\widetilde b_{B_1}(u,z_1) 
\]
\[
\prod_{P=2}^{10} Z_{B_P}(z_P) Z_J(v)
\prod_{I=1}^{11} Y_{C_I}(y_I)~|^2 ~V_1(t_1)
\prod_{T=2}^N \int d^2 t_T U_T(t_T)~\rangle,
\]
where due to translational invariance of the torus one 
can fix the position of one unintegrated vertex operator $V_1$.

The $\langle \quad \rangle$ brackets means the integration
over the zero modes of the various variables using the
measures described above together with the Berezin 
integrals over $\int d^{16}\t$
and $\int d^{16} d$.

\section{The non-minimal pure spinor formalism}

In the year 2005 a modification of the pure spinor formalism
was proposed in \cite{NMPS} which features the addition of the
left-moving non-minimal variables
$(r_{\a}, s^{\b})$ and $(\lb_{\a}, {\bar w}^{\a})$.
The action is given by 
\be
\label{nmps_action}
S_{\rm NMPS} = \frac{1}{2\pi}\int d^2 z (\half \p x^m \bar\p x_m +p_\a \bar\p\t^\a
- w_\a\bar\p \l^\a - \bar w^\a \bar\p \bar\l_\a + s^\a \bar\p r_\a )
\ee
where ${\bar w}^\a$ and $s^\a$ are the conformal weight one conjugate momenta of
the bosonic pure spinor $\lb_{\a}$ and the fermionic spinor 
$r_{\a}$ which satisfies
\[
(\lb \g^m r) = 0.
\]
Their OPE's are given by
\[
\lb_\a(z) {\bar w}^\b(y) \rightarrow \frac{\d^\b_\a}{z-y}, \quad
s^\a(z) r_\b(w) \rightarrow \frac{\d_\b^\a}{z-w}.
\]
Analogously to the minimal pure spinor formalism variables where
\[
N_{mn}= \frac{1}{2} (w\g_{mn}\l),
\quad J_\l=w_\a\l^\a,
\quad T_\l = w_\a\p\l^\a,
\]
the new variables also have their associated Lorentz and ghost currents,
\[
\bar N_{mn}= \half (\bar w\g_{mn}\lb - s\g_{mn} r),\quad
\bar J_{\bar \l} =\bar w^\a \bar\l_\a -s^\a r_\a,\quad 
T_\lb = {\bar w}^\a\p\lb_\a - s^\a \p r_\a,
\]
Furthermore one also defines 
\[
S_{mn} = \half (s \g_{mn} \lb),\quad  S=  s^\a \lb_\a, \quad J_r = (r s).
\]
and the total ghost current to be\footnote{There is a typo in equations (3.14)
and (3.15) of \cite{NMPS}, where $\l^\a r_\a$ was written as $\lb_\a r^\a$.}
\be
\label{nmps_J}
J = w_\a\l^\a - s^\a r_\a - \frac{2}{(\l\lb)}\big[(\lb\p\l) + (r \p\t)\big]
+ \frac{2}{(\l\lb)^2}(\l r)(\lb \p\t),
\ee
which is BRST equivalent to 
\[
J_b = J_\l - {\bar J}_\lb + J_r = w_\a \l^\a - {\bar w}^\a\lb_\a.
\]
The non-minimal BRST operator is defined by
\be
\label{nmps_BRST}
Q = \int dz (\l^\a d_\a + {\bar w}^{\a}r_\a).
\ee
Using the Kugo-Ojima (KO) quartet mechanism \cite{kugo}\cite{rybkin}
one can show that the cohomology of the non-minimal BRST operator 
\eqref{nmps_BRST} doesn't depend on the ``quartet'' of 
non-minimal variables $(r_\a, s^\a), (\lb_\a, \wb^\a)$.
That will allow us to choose a gauge were the external vertex
operators are independent of the non-minimal variables, so that
the same vertices as in the minimal pure spinor formalism can be
used.

Furthermore, due to the existence of the 
pure spinor field $\lb_\a$ it is possible to
construct a b-ghost satisfying $\{Q, b(z)\} = T(z)$, where (see also the discussion using
the Y-formalism \cite{oda_b})
\[
b = 
s^\a\p\lb_\a  + \frac{1}{4(\lb\l)}\Big[(2\Pi^m (\lb\g_m d)-  N_{mn}(\lb\g^{mn}\p\t)
- J_\l (\lb \p\t) -  (\lb\p^2\t)\Big]
\]
\be
\label{nmps_bghost}
+ {{(\lb\g^{mnp} r)(d\g_{mnp} d +24 N_{mn}\Pi_p)}\over{192(\lb\l)^2}} 
-
{{(r\g_{mnp} r)(\lb\g^m d)N^{np}}\over{16(\lb\l)^3}} +
{{(r\g_{mnp} r)(\lb\g^{pqr} r) N^{mn} N_{qr}}\over{128(\lb\l)^4}} 
\ee
and the total energy momentum tensor is given by
\be
\label{nmps_EM}
T(z) = -\half \p x^m \p x_m - (p \p\t) + (w \p\l)
+ ({\bar w} \p\lb) - (s \p r).
\ee

Now the key aspect of this non-minimal construction follows
from the observation that the operators 
$T(z), G^+(z)= 2j_{\rm BRST}, G^-(z) = b$ and $J(z)$ satisfy the twisted
${\hat c} = 3$ $N= 2$ algebra
\[
T(z)T(w) \rightarrow \frac{c/2}{(z-w)^4} + \frac{2T}{(z-w)^2} + \frac{\p T}{(z-w)}
\]
\[
T(z)G^\pm \rightarrow \frac{3}{2}\frac{G^\pm}{(z-w)^2}+ \frac{\p G^\pm}{(z-w)}
\]
\[
G^+(z)G^-(w) \rightarrow \frac{2c/3}{(z-w)^3} + \frac{2J}{(z-w)^2} + \frac{T}{(z-w)}
\]
\[
T(z)J(w) \rightarrow \frac{{\hat c}}{(z-w)^3} + \frac{J}{(z-w)^2} + \frac{\p J}{(z-w)}
\]
\[
J(z)G^\pm(w) \rightarrow \pm \frac{G^\pm}{(z-w)}
\]
\[
J(z)J(w) \rightarrow \frac{c/3}{(z-w)^2}.
\]
In particular we note that the anomaly of ${\hat c} = +3$ in the ghost current
of \eqref{nmps_J} is the same as the anomaly of $J= -bc$ in bosonic string theory.
The anomaly of $+3$ in the ghost current implies the non-conservation
of $3g-3$ units of charge in a genus g Riemann surface, via the Riemann-Roch
theorem. That is the same
as the number of moduli parameters of the surface. It is this equality
that allows one to use topological string methods in the computation 
of superstring scattering amplitudes in 
the non-minimal pure spinor formalism (see for example \cite{BCOV}).

\section{The scattering amplitude prescription}

We will now briefly review how scattering amplitudes are to be
computed using the non-minimal pure spinor formalism.

\subsection{Tree-level prescription}

N-point tree-level scattering amplitudes are computed by a correlation
function with three unintegrated vertices \eqref{V} and $N-3$ integrated
vertices \eqref{integrado},
\be
\label{nmps_tree}
{\cal A} = \langle {\cal N}
V^1 V^2 V^3 \int U^4 \dotsi \int U^N
\rangle.
\ee
The computation of \eqref{nmps_tree} proceeds as usual in a CFT.
First one integrates out the conformal weight one variables through
their OPE's to get an expression containing only zero modes for 
$\l$'s and $\t$'s,
\[
{\cal A} = \int [d\l][d\lb][dr]d^{16}\t {\cal N} \l^\a\l^\b\l^\g f_{\a\b
\g}(\t).
\]
The measures $[d\l], [d\lb]$ and $[dr]$ are given by
\be
\label{dl}
[d\l] \l^{\a}\l^{\b}\l^{\g} = \e_{\rho_1{\ldots} \rho_{11}\k_1{\ldots} \k_5}
T^{((\a\b\g))[\k_1\k_2\k_3\k_4\k_5]} d\l^{\rho_1}{\ldots}
d\l^{\rho_{11}}
\ee
\be
\label{dlb}
[d\lb]\lb_\a\lb_b\lb_\g = \e^{\a_1{\ldots} \a_{11}\k_1{\ldots} \k_5}
T_{((\a\b\g))[\k_1\k_2\k_3\k_4\k_5]}\,
d\lb_{\a_1}\dotsi d\lb_{\a_{11}}
\ee
\be
\label{dr}
[dr] = \e_{\a_1{\ldots} \a_{11}\k_1{\ldots} \k_5}
T^{((\a\b\g))[\k_1\k_2\k_3\k_4\k_5]}\lb_\a\lb_\b\lb_\g\,
\p^{\a_1}_r \dotsi \p^{\a_{11}}_r
\ee

This is almost the same recipe as in the minimal
formalism, the difference is the insertion of a regularization factor 
${\cal N}$, where
\[
{\cal N} = \exp(\{Q,\chi\}) = {\rm e}^{-(\l\lb) - (r\t)} \quad \text{for}
\quad \chi = -(\lb\t).
\]
The purpose of the regularization factor is due to the fact that 
the integration over $\l$ and $\lb$ may diverge because they
are non-compact. However, as ${\cal N} = 1 + Q\Omega$ the integral
will be independent of the choice for the regularization.

Using the measures \eqref{dl} -- \eqref{dr} one can show that
\[
{\cal A} = \int [d\l][d\lb][dr]d^{16}\t {\cal N} \l^\a\l^\b\l^\g f_{\a\b
\g}(\t) = \langle \l^\a\l^\b\l^\g f_{\a\b\g}(\t) \rangle
\]
and therefore the non-minimal prescription for tree-level amplitudes
is equivalent to the minimal pure spinor formalism.

\subsection{Multiloop prescription}
\label{multiloop_NMPS}

The prescription to compute $g-$loop amplitudes is given by
\be
\label{multi_presc}
{\cal A} = \int d^{3g-3}\tau \langle {\cal N}(y)
\prod_{i=1}^{3g-3}(\int dw_i \mu_i(w_j) b(w_j)) \prod_{j=1}^N 
\int dz_j U(z_j) \rangle
\ee
where $U(z)$ is the same integrated vertex operator
of \eqref{integrado} and the b-ghost is given by
\eqref{nmps_bghost}. After the integration of non-zero
modes appearing in the correlator \eqref{multi_presc} one
is left with the problem of how to integrate over the
$g-$zero modes of the conformal weight one variables
\[
N_{mn}(z),\hspace{.2cm} {\bar N}_{mn}(z),\hspace{.2cm} J_\l(z),\hspace{.2cm}
J_\lb(z),\hspace{.2cm}d_\a(z),\hspace{.2cm} S_{mn}(z)\hspace{.15cm}\text{and}
\hspace{.15cm} S(z).
\]
and also the zero modes of the conformal weight zero variables
$\l^\a, \lb_\a$ and $r_\a$. In general, a conformal weight +1
field $\Phi_1$ is written in a genus g Riemann surface as follows
\[
\Phi_1(z) = {\hat\Phi_1}(z) + \sum_{I=1}^g \Phi_1^I w_I(z)
\]
where $w_I(z)$ are the holomorphic one-forms and ${\hat\Phi_1}(z)$ has no
zero-mode. They satisfy
\[
\int_{a_I} w_J = \d_{IJ}, \quad \int_{a_I} dz {\hat\Phi_1}(z) = 0\quad\,
\forall I=1,{\ldots},g.
\]
Therefore one can show that, for example
\[
w^I_\a = \int_{a_I} dz w(z)_\a,
\]
and this notation will be used in the following discussion.
The integration over the zero modes
of the pure spinor fields and of $r_\a$ is performed with the
measures \eqref{dl} -- \eqref{dr} described above, while the
other zero modes are integrated with the measures defined by
\begin{align}
\label{dw}
[dw] &= (\l\g^m)_{\k_1}(\l\g^n)_{\k_2}(\l\g^p)_{\k_3}(\g_{mnp})_{\k_4\k_5}
\e^{\k_1{\ldots} \k_5\rho_1{\ldots} \rho_{11}} dw_{\rho_1} \dotsi dw_{\rho_{11}},\\
\label{dwb}
[d\wb] & = 
(\lb\g^m)^{\k_1}(\lb\g^n)^{\k_2}(\lb\g^p)^{\k_3} (\g_{mnp})^{\k_4\k_5}\,
\e_{\k_1{\ldots} \k_5\a_1{\ldots} \a_{11}} d\wb^{\a_1}\dotsi d\wb^{\a_{11}}\\
\label{ds}
[ds] &= (\l\lb)^{-3}(\l\g^m)_{\k_1}(\l\g^n)_{\k_2}(\l\g^p)_{\k_3}(\g_{mnp})_{\k_4\k_5}
\e^{\k_1{\ldots} \k_5\rho_1{\ldots} \rho_{11}} \p^s_{\rho_1} \dotsi \p^s_{\rho_{11}},
\end{align}
Note that the measure \eqref{dw}
is gauge invariant under $\d w_{\a} = (\l\g^m)_{\a} \Omega_m$ because
\[
(d\l\g^q)_{[\d_1}(\l\g^m)_{\k_1}(\l\g^n)_{\k_2}(\l\g^p)_{\k_3}
(\g_{mnp})_{\k_4\k_5]} =0,
\]
which comes from the fact that there is no vector representation in 
the decomposition of
$\l^4\t^6$ (here the $\t^6$ factor is to emulate the antisymmetry 
over the spinor indices).
To define the regularization factor we use $\chi = - (\lb\t)- (w^Is^I)$
to obtain
\be
\label{reg_loop}
{\cal N}(y) = \exp\big[-(\l\lb)-(r\t) - (w^I\wb^I) + (s^Id^I)\big].
\ee
Note that here we are using a different (non gauge invariant) $\chi$ from what 
was originally defined
in \cite{NMPS}. However the non gauge invariance of \eqref{reg_loop} should
not affect the amplitudes because ${\cal N} - 1$ continues to be BRST-trivial
even if it is not gauge invariant.

Therefore the evaluation of \eqref{multi_presc} will give rise to an expression
of the form
\[
{\cal A} = \int [d\l][d\lb][dr]\prod_{I=1}^g
[dw^I][d\wb^I][ds^I](d^{16}d^I)d^{16}\t 
\,{\cal N} f(\t).
\]
From the measures \eqref{dl} -- \eqref{dr} and \eqref{dw} -- \eqref{ds}
one can deduce the following behaviour as $(\l\lb) \rightarrow 0$
\be
\label{div}
\int [d\l][d\lb][dr] \prod_{I=1}^g
[dw^I][d\wb^I][ds^I](d^{16}d^I)d^{16}\t 
\,{\cal N} \rightarrow \l^{8+3g} \lb^{11},
\ee
therefore $f(\l,\lb,r,\t)$ must diverge slower than $\l^{-8-3g}\lb^{-11}$
as $(\l\lb)\rightarrow 0$ so that \eqref{div} is guaranteed not to
diverge. Since each b-ghost diverges as $\l^{-4}\lb^{-3}$ the maximum
number of loops in which this regularization can be safely used is
$g=2$, where $f$ could diverge as $\l^{-14}\lb^{-11}$ but whose
$3g-3$ b-ghosts makes it diverge at most like $\l^{-12}\lb^{-9}$.
There exists a regularization prescription which can in principle
be used to go beyond $g=2$, but so far no concrete computations
were ever done with it \cite{nekrasov_reg}.

We will see in the next chapter that in fact this multiloop prescription
was successfully used to compute massless four-point amplitudes up to
two-loops \cite{nmps_two}.

Due to the fact that the external vertices don't depend on the non-minimal
variables and that the $r_{\a}$'s appearing in the b-ghost can be 
substituted by $D_\a$, we can easily guess the result of the integrations over
$[dw], [d\wb]$ and $[ds]$. That will enable us to easily obtain the kinematic
factors at one-loop, for example.

Note that at one-loop there are eleven zero-modes of $s^\a$,
which can only\footnote{The term $s^\a\p\lb_\a$ of the b-ghost 
does not contribute because there is no $\wb^\a$ in the external vertices
to kill the non zero-modes of $\p\lb_\a$.}
come from the term $(s d)$ in the regularizator ${\cal N}$.
Therefore the remaining five $d_\a$ zero modes must come from the
b-ghost and the external vertices. Therefore by ghost number conservation
we obtain
\[
\int d^{16}d[dw][d\wb][ds] \exp\big[-(w\wb) + (sd) - (\l\lb) - (r\t)\big]
d_{\k_1}\dotsi d_{\k_5} f^{\k_1{\ldots} \k_5}(r_\a,\t) =
\]
\[
= (\l^3)_{[\k_1\k_2\k_3\k_4 \k_5]} f^{\k_1{\ldots} \k_5}(D_\a,\t)
\]
where $(\l^3)_{[\k_1\k_2\k_3\k_4 \k_5]}$ is some tensor with five antisymmetric
free indices containing three pure spinors. The unique such tensor is given
by
\be
\label{inter}
(\l\g_m)_{\k_1}(\l\g_n)_{\k_2}(\l\g_p)_{\k_3}(\g^{mnp})_{\k_4\k_5}.
\ee
Thus we can see that the effect of evaluating the pure spinor
measures is to substitute five $d_\a$'s from the b-ghost and the
external vertices by \eqref{inter}. Explicitly,
\be
\label{truque}
d_{\k_1}d_{\k_2}d_{\k_3}d_{\k_4}d_{\k_5} \rightarrow 
(\l\g_m)_{\k_1}(\l\g_n)_{\k_2}(\l\g_p)_{\k_3}(\g^{mnp})_{\k_4\k_5}.
\ee
It is interesting to note that the right hand side of \eqref{truque} 
already is completely antisymmetric in
$[\k_1{\ldots} \k_5]$ because of the pure spinor condition. To see this
one notices that the only
non-obvious antisymmetry to check 
is over the exchange of the indices $\k_1$ and $\k_4$,
for example. However, as $(\l\g^p)_\a(\l\g_p)_\b=0$ we can write 
$(\g^{mnp})_{\k_4\k_5}=\g^m_{\k_4\s}(\g^n\g^p)_{\phantom{m}\k_5}^{\s}$
and use the gamma matrix identity $\eta_{rs}\g^r_{\a(\b}\g^s_{\g\d)} = 0$
to obtain
\[
\l^\a(\g_m)_{\a\k_1}(\l\g_n)_{\k_2}
(\l\g_p)_{\k_3}\g^m_{\k_4\s}(\g^n\g^p)_{\phantom{m}\k_5}^{\s}
= - (\l\g^m\g^n\g^p)_{\k_5}(\l\g_n)_{\k_2}
(\l\g_p)_{\k_3}(\g_m)_{\k_4\k_5} 
\]
\[ 
- (\l\g_m)_{\k_4}(\l\g_n)_{\k_2}(\l\g_p)_{\k_3}(\g^{mnp})_{\k_1\k_5}.
\]
The proof follows from the vanishing of the first term of the right
hand side due to the pure spinor condition.

\chapter{Computing Pure Spinor Scattering Amplitudes}

\section{Massless three-point amplitude at tree-level }

The tree-level scattering amplitudes with up to four fermions 
were shown to be equivalent to RNS in \cite{nathan_brenno}.
As a brief illustration we will compute the scattering of three massless
particles at tree-level. This is the simplest example possible because
the prescription \eqref{tree_presc} implies that there are no 
integrated vertices and therefore there is no need to compute 
OPE's. Only the zero modes contribute to the amplitude and their
contribution is completely determined by the measure \eqref{measure}.

The amplitude to compute is given by
\be
\label{tree_3}
\mathcal{A} = \vev{V_1(z_1)V_2(z_2)V_3(z_3)} + (2\leftrightarrow 3),
\ee
where $V = \l^{\a} A_{\a}(\t){\rm e}^{ik\cdot X}$ and the theta expansion
of $A_{\a}(\t)$ is given in Appendix \ref{sym_ap}. The sum over the permutation
of labels 2 and 3 has to be done because a general 
M\"obius transformation does not change the cyclic ordering of the vertex
operators\footnote{That is because a non-cyclic transformation
always has a fixed point. For example, it is impossible to map 
$y_1y_2y_3$ into $y_1y_3y_2$
because the fixed point $y_1$ implies that the M\"obius transformation is
the identity,
\[
y_1 = \frac{1y_1+0}{0+1}.
\]}, so both orderings must be summed over.

The contribution from the exponential is proportional to a constant
because the particles are massless. Therefore $k^2_j = 0$ implies
$k_i\cdot k_j=0$ due to momentum conservation. The non trivial
part of the computation comes from the $\t$ zero modes. 

To compute the scattering of three gluons we use the $A_\a(\t)$ expansion
of appendix \ref{sym_ap} to get three
different possibilities to obtain five thetas, given by
\begin{center}
\begin{tabular}{|c|c|c|} \hline
$A^1_{\a}(\t)$ & $A^2_{\a}(\t)$ & $A^3_{\a}(\t)$ \\ \hline
1 & 1 & 3  \\ \hline
1 & 3 & 1  \\ \hline
3 & 1 & 1  \\ \hline
\end{tabular}
\end{center}
Explicitly we get, for one of the permutations of \eqref{tree_3},
\be
\label{3_pt}
\mathcal{A}_{\rm BBB} = -\frac{1}{64}\(
k^3_m e^1_{r}e^2_{s} e^3_{n} -
k^2_m e^1_{r}e^2_{n} e^3_{s} +
k^1_m e^1_{n}e^2_{r} e^3_{s}
\)
\vev{\(\lambda\g^r\t\)\(\lambda\g^s\t\)\(\lambda \g_p \t\)\(\t\gamma^{pmn}\t\)}.
\ee
As we will see in Appendix \ref{chap_t}, the above correlator is given by
\[
\langle (\l\g^r\t)(\l\g^s\t)(\l \g_p \t)(\t\g^{pmn}\t)\rangle 
= \frac{1}{120}\d^{rsp}_{pmn} = \frac{1}{45}\d^{rs}_{mn}.
\]
Then \eqref{3_pt} evaluates to
\be
\label{anti23}
\mathcal{A}_{\rm BBB} = - \frac{1}{2880}\Big[
  (e_1\cdot e_2)(e_3\cdot k_2)
+ (e_1\cdot e_3)(e_2\cdot k_1)
+ (e_2\cdot e_3)(e_1\cdot k_3)
\Big]
\ee
where we used momentum conservation and 
$e_i\cdot k_i = 0$. Note that \eqref{anti23} is antisymmetric in
$(2\leftrightarrow 3)$ and therefore the whole amplitude vanishes for photons, 
whereas
for gluons it is non-vanishing due to the Chan-Patton factors.
Up to an overall constant, this is the same
result as in the RNS formalism (see
equation
(7.4.30) of \cite{gswI}).

As \eqref{tree_3} is supersymmetric, the contribution of
fermionic states is as easy to compute as the bosonic case
considered above. For example, the $B^1F^2F^3$ scattering amplitude
is given by the following theta distribution
\begin{center}
\begin{tabular}{|c|c|c|} \hline
$A^1_{\a}(\t)$ & $A^2_{\a}(\t)$ & $A^3_{\a}(\t)$ \\ \hline
1 & 2 & 2  \\ \hline
\end{tabular}
\end{center}
which is computed to be
\[
{\cal A}_{\rm BFF} = -\frac{1}{288}e^1_{n_1}(\chi^2\g^r\chi^3)
\langle (\l\g^n1\t)(\l\g^m\t)(\l\g^p \t)(\t\g_{mrp}\t)\rangle
= \frac{1}{2880} e^1_m(\chi^2\g^m \chi^3),
\]
which again is non-vanishing after summing $(2\leftrightarrow 3)$ 
only for a non-abelian group.

\section{Massless four-point amplitude at tree-level}

It has been known for over eight years now that tree-level
amplitudes computed with the pure spinor formalism are equivalent
to their RNS counterparts \cite{nathan_brenno}. Nevertheless, apart
from the trivial massless tree-point amplitude, no other tree-level 
amplitude has been explicitly computed with the pure spinor formalism. 
When ones attention is directed towards pure spinor superspace
expressions for kinematic factors, the natural amplitude to
study is the scattering of four massless strings. In \cite{mafra_tree}
this task has been completed and the following pure spinor superspace expression
for the kinematic factor was obtained,
\[
K_0 = 2\langle (\p_m A_n) (\l A)\p^m(\l A)(\l \g^n W) \rangle
- \langle (\l A)\p^m(\l A)\p^n(\l A){\cal F}_{mn}\rangle.
\]
With this superspace representation for the kinematic
factor one can show through pure spinor manipulations that this
is in fact proportional to the kinematic factor for this same
amplitude, but at the one-loop level. That this could be shown
in a few pages is a remarkable display of the usefulness in
having kinematic factors written in pure spinor superspace.
We will now review the computation of \cite{mafra_tree}.

Following the tree-level prescription of \eqref{tree_presc}, the amplitude
to compute is 
\be
\label{amp}
{\cal A} = \langle V^1(z_1,{\bar z_1})V^2(z_2,{\bar z_2})V^3(z_3,{\bar z_3})
\int_{C}d^2 z_4 U(z_4,{\bar z}_4) 
\rangle.
\ee
The closed string vertices are given by the holomorphic square of
the open string vertices,
$V(z,{\bar z}) = {\rm e}^{ik\cdot X}\l^{\a}{\bar \l}^{\b}A_{\a}(\t)
{\bar A}_{\b}(\t)$ and $U(z,{\bar z}) = {\rm e}^{ik\cdot X}U(z){\bar U}({\bar z})$,
where the integrated vertex operator is given by \eqref{integrado}.

In the computation of \eqref{amp} we note that standard SL(2,C) 
invariance allows us to fix $z_1=0, z_2=1$ and $z_3=\infty$,
so the expectation value for the exponentials simplify, 
\[
\langle\prod_{i=1}^4 :{\rm e}^{ik^i\cdot X(z_i,{\bar z}_i)}:\rangle
= |z_4|^{-\half \a' t}|1-z_4|^{-\half \a' u} \equiv M(z_4,{\bar z}_4).
\]
Now we remove the conformal weight one operators of the integrated vertex \eqref{integrado}
in \eqref{amp} by
using their OPE's. The first term of \eqref{integrado} does not contribute because there is
no $p_{\a}$'s in the unintegrated vertices, 
while the second gives
\be
\label{triv}
\langle
A^4_m\Pi^m(z_4) \prod_{j=1}^4 :{\rm e}^{ik^j\cdot X(z_j,{\bar z}_j)}:
\rangle = \sum_{j=1}^3 {\a'\over 2}{ik^m_j \over z_j-z_4}
\langle 
(\l A^1)(\l A^2)(\l A^3)A^4_m
\rangle M(z_4,{\bar z}_4).
\ee
Using the standard OPE's
\be
\label{opes}
N^{mn}(z_4)\l^{\a}(z_j) =
{\a'\over 4}{(\l\g^{mn})^{\a} \over z_j-z_4}, \quad
d_{\a}(z_4)V(z_j) = -{\a'\over 2}{D_{\a}V \over z_j - z_4},
\ee
we obtain the following OPE identity:
$$
\langle
(\l A^1)(\l A^2)(\l A^3)\left(
d_{\a}(z_4)W_4^{\a} +\half N^{mn}(z_4){\cal F}_{mn}^4 \right)
\rangle = 
$$
\be
\label{pilita}
= {\a' \over 2(z_1-z_4)}\langle
A^1_m (\l A^2)(\l A^3)(\l \g^m W^4) 
\rangle - (1\leftrightarrow 2) + (1\leftrightarrow 3).
\ee
To show this, one uses \eqref{opes} to get
$$
\langle
(\l A^1)(z_1)(\l A^2)(z_2)(\l A^3)(z_3)d_{\a}(z_4)W_4^{\a}
\rangle = 
$$
$$
{\a'\over 2(z_1-z_4)}\langle
D_{\a}(\l A^1)(\l A^2)(\l A^3) W_4^{\a}
\rangle - (1\leftrightarrow 2) + (1\leftrightarrow 3).
$$
Concentrating for simplicity on the first term, the use of the
super-Yang-Mills identity 
$D_{\a}(\l A) = -(\l D)A_{\a} + (\l\g^m)_{\a}A_m$
allows the numerator to be rewritten as
\be
\label{tmp}
\langle
D_{\a}(\l A^1)(\l A^2)(\l A^3) W_4^{\a}
\rangle = - \langle
(\l D A^1_{\a})(\l A^2)(\l A^3) W_4^{\a}
\rangle + \langle
A^1_m(\l A^2)(\l A^3) (\l\g^m W^4)
\rangle.
\ee
As BRST-exact terms decouple, the first term in the right hand side of \eqref{tmp}
becomes
$$
-{\a'\over 2(z_1-z_4)} \langle
(\l D A^1_{\a})(\l A^2)(\l A^3) W_4^{\a}
\rangle = 
-{\a'\over 2(z_1-z_4)} \langle
A^1_{\a}(\l A^2)(\l A^3) (\l D)W_4^{\a}
\rangle 
$$
$$
= -{\a'\over 8(z_1-z_4)}\langle
(\l \g^{mn} A^1)(\l A^2)(\l A^3){\cal F}_{mn}^4
\rangle.
$$
However, this term is exactly canceled by the $(z_1-z_4)^{-1}$ contribution from
the OPE
$$
\half\langle
(\l A^1)(\l A^2)(\l A^3)
(N^{mn}{\cal F}^4_{mn})
\rangle = {\a'\over 8(z_1-z_4)} \langle
(\l \g^{mn} A^1)(\l A^2)(\l A^3){\cal F}_{mn}^4
\rangle +{\ldots} ,
$$
which finishes the proof of \eqref{pilita}.

With the results \eqref{triv} and \eqref{pilita}, the correlation in 
the amplitude \eqref{amp} reduces to 
$$
{\cal A} = \left({\a'\over 2} \right)^2 \int_{C}d^2z_4
\left(
{F_{12}\over z_4} + {F_{21}\over 1-z_4}
\right)\left(
{{\bar F}_{12}\over\zb_4} + {{\bar F}_{21}\over 1-\zb_4}
\right)|z_4|^{-\half \a' t}|1-z_4|^{-\half \a' u},
$$
where $F_{12} = ik_1^m\langle (\l A^1)(\l A^2)(\l A^3) A^4_m
\rangle + \langle A^1_m (\l A^2)(\l A^3)(\l\g^m W^4)
\rangle$ and $F_{21}$ is obtained by exchanging $1\leftrightarrow 2$.
The integral can be evaluated using the following formula 
$$
\int_{C} d^2z z^N(1-z)^M \zb^{\bar N}(1-\zb)^{\bar M} =
2\pi {\Gamma(1+N)\Gamma(1+M)  \over \Gamma(2+N+M) }
{\Gamma(-1-{\bar N}-{\bar M})  \over \Gamma(-{\bar N})\Gamma(-{\bar M}) }.
$$
After a few manipulations one finally gets
$$
{\cal A} = -2\pi ({\a'\over 2})^4 K_0{\bar K}_0
{\Gamma(-{\displaystyle \a' t\over 4})\Gamma(-{\displaystyle \a' u\over 4})
\Gamma(-{\displaystyle \a' s\over 4}) 
\over \Gamma(1+{\displaystyle \a' t\over 4})\Gamma(1+{\displaystyle \a' u\over 4})
\Gamma(1+{\displaystyle \a' s\over 4})},
$$
where $K_0 = \half(u F_{12} + t F_{21})$ is given by
\be
\label{kin}
K_0  = 
\langle  (\p_m A^1_n) (\l A^2)\p^m(\l A^3)(\l \g^n W^4)\rangle
- \half\langle \p^m(\l A^1)\p^n(\l A^2)(\l A^3){\cal F}^4_{mn} \rangle
+ (1\leftrightarrow 2),
\ee
which is the sought-for kinematic factor in pure spinor superspace.

Note that $K_0$ is BRST-closed because
\[
Q F_{12} = -\frac{t}{2}\langle (\l A^1)(\l A^2)(\l A^3)(\l A^4)\rangle,
\quad
Q F_{21} = +\frac{u}{2}\langle (\l A^1)(\l A^2)(\l A^3)(\l A^4)\rangle.
\]

When trying to relate the above tree-level kinematic factor with
its one-loop cousin, it is convenient to rewrite \eqref{kin} without
explicit labels,
\be
\label{this}
K_0 = 2\langle (\p_m A_n) (\l A)\p^m(\l A)(\l \g^n W) \rangle
- \langle (\l A)\p^m(\l A)\p^n(\l A){\cal F}_{mn}\rangle.
\ee

We postpone the explicit evaluation in components 
of \eqref{kin} to section \ref{complete_kin}. Before
that we will show how \eqref{kin} relates to amplitudes
at higher-loop orders.

\section{Massless four-point amplitude at one-loop}

We can compute the massless four-point amplitude at the one-loop order
with the two different pure spinor formalism prescriptions described
in sections \ref{multi_min} an \ref{multiloop_NMPS}.
It will be shown that they are equivalent up to a constant factor. Note
that it has been recently formally proved that these two prescriptions are 
equivalent \cite{joost}, and the results presented here can 
be regarded as an example of that.

If one is not interested in the overall coefficient, it also happens that
the kinematic factor is readily obtained by a zero-mode saturation argument,
avoiding the long procedure of functionally integrating using the measures
$[dr]$, $[ds]$, $[d\l]$ etc. This is the route taken in the 
papers \cite{nathan_multiloop}\cite{nmps_two} and
which will be described here.

\subsection{Minimal pure spinor computation}
\label{one_min}

Using the minimal pure spinor prescription the open superstring 
amplitude is given by
\[
{\cal A} = \int d\tau \langle
\int d w \mu(w)
\widetilde b_{B_1}(w,z_1) 
\prod_{P=2}^{10} Z_{B_P}(z_P) Z_J(v)
\prod_{I=1}^{11} Y_{C_I}(y_I) ~V_1(t_1)
\prod_{T=2}^N \int d^2 t_T U_T(t_T)~\rangle,
\]
and the kinematic factor in pure spinor superspace is obtained 
by considering how the sixteen zero modes for $d_{\a}$ can be saturated.

From \eqref{operators} we see that the nine $Z_B$ and one $Z_J$ will 
provide ten $d_\a$ zero modes. Since there is no term in the b-ghost which
contains three or five $d_\a$'s, the amplitude will be non-vanishing if the b-ghost
contributes with four $d$'s and 
the three integrated vertices provide
two $d$'s through the term $(d W)(d W)$. Furthermore,
as there is a delta function derivative of $N^{mn}$ coming from the 
b-ghost, the amplitude will be non-vanishing if one of the external vertices
provide an explicit $N^{mn}$, so that the analogous delta function property of
$\int dx \delta'(x) x = -1$ can be used. 

Looking at the integrated vertex
\eqref{integrado} we see that the term containing $N^{mn}$ has the superfield
${\cal F}_{mn}$, so 
we have shown the kinematic factor to be composed out of the following
superfields 
\be
\label{sf_oneloop}
(\l)^2(\l A)W^2{\cal F},
\ee 
where we already used the fact (which can be shown by integrating the
measures)
pure spinor
superspace expressions contain three pure spinors $\la$.
We are now required
to check how many different Lorentz invariant contractions can be constructed
out of these fields in \eqref{sf_oneloop}. If there is a unique contraction 
then we can shortcut the
functional integration procedure and immediately write down the answer in
pure spinor superspace. It is a happy fortuitous fact that this is the case here,
in deep contrast to the massless five-point amplitude of section \ref{crazy_5}.

Fortunately, it is easy to show there is a unique Lorentz-invariant
way to contract the indices in \eqref{sf_oneloop}. To show this, first choose
a Lorentz frame in which the only non-zero component of $\l^\a$
is in the $\l^+$ direction. This choice preserves a $U(1)\times SU(5)$
subgroup of $SO(10)$, under which a Weyl spinor $U^\a$ and an anti-Weyl
spinor $V_\a$ decompose as
\[
U^\a \longrightarrow \(U^+_{5\over 2}, 
U_{{1\over 2}[ab]}, U^a_{-{3\over 2}}\) ,\quad
V_{\a} \longrightarrow \(V_{-{5\over 2}+}, V^{[ab]}_{-{1\over 2}}, 
V_{+{3\over 2}a}\),
\]
where the subscript denotes the $U(1)$ charge.
So the unique way to cancel the $+15/2$ U(1)-charge
of the three $\la$'s is when the superfields contribution is
\[
K = \langle (\l^+)^3 A_+ W^a W^b F_{ab} \rangle
\]
which can be written in covariant SO(10) language as 
\be
\label{fimloop}
K = \langle (\l A)(\l\g^m W)(\l\g^n W){\cal F}_{mn} \rangle,
\ee
which is the final pure spinor superspace expression for this 
important amplitude. Now let's analyse it a bit.

\subsubsection{Gauge invariance of the kinematic factor}

The appearance
of the explicit superfield $A_{\a}$ in the kinematic factor of \eqref{fimloop} 
might spoil the gauge invariance
of the amplitude, as it transforms as \eqref{gaugetr}
\be
\label{notgauge}
\d A_{\a} = D_{\a} \Omega.
\ee
However it is easy to check that using the properties of pure spinor
superspace and the equations of motion of the SYM superfields \eqref{eqW} and \eqref{eqF}, 
the kinematic
factor \eqref{fimloop} is indeed gauge invariant. This is because the
variation \eqref{notgauge} implies that the gauge transformation of the
unintegrated vertex operator
is BRST-exact $\delta (\l A) = \l^{\a} D_{\a} \Omega = Q_{\rm BRST} \Omega$,
which allows the BRST-charge to be ``integrated by parts'' using the property
that pure spinor superspace expressions of BRST-exact terms are zero. So the
gauge variation of \eqref{fimloop} is given by
\[
\delta K = \langle Q(\Omega)(\l\g^m W)(\l\g^n W){\cal F}_{mn} \rangle =
-\langle \Omega Q[ (\l\g^m W)(\l\g^n W){\cal F}_{mn} ] \rangle = 0
\]
where we used
\[
Q(\l\g^m W) = -\frac{1}{4}(\l\g^m \g^{rs} \l){\cal F}_{rs} = 0
\]
and
\[
(\l\g^m W)(\l\g^n W)Q{\cal F}_{mn} = 2(\l\g^m W)(\l\g^n W)\p_{[m}(\l\g_{n]}W) = 0,
\]
which can be shown using the equations of motion and the defining pure spinor
property of $(\l\g^m\l) = 0$. We have then shown that the massless four-point
amplitude at one-loop level is indeed gauge invariant.

\subsubsection{An equivalent pure spinor superspace expression}

If we use the SYM identity \eqref{cst}
\[
QA_m = (\l\g^m W) + \p^m (\l A)
\]
and the vanishing of BRST-exact terms in pure spinor superspace, we 
can rewrite \eqref{fimloop} in a manifestly gauge invariant way.
To see this substitute $(\l A)$ in $\langle (\l A)(\l\g^m W)(\l\g^n W){\cal F}_{mn}\rangle$ by
\be
\label{cool}
(\l A) = \frac{1}{(H\cdot k)} H^p QA_p - \frac{1}{(H\cdot k)}H_p (\l\g^p W)
\ee
where $H^p$ is an arbitrary vector such that $(H\cdot k) \neq 0$.
Note that the first term in the right hand side of \eqref{cool} will not 
contribute because 
\[
\langle (Q A_p) (\l\g^m W)(\l\g^n W){\cal F}_{mn}\rangle
\]
is BRST-exact. So we get
\be
\label{4equiv}
\langle (\l A^1)(\l\g^m W^2)(\l\g^n W^3){\cal F}^4_{mn}\rangle =
- \frac{1}{(H\cdot k^1)}H_p \langle (\l\g^p W^1)(\l\g^m W^2)(\l\g^n W^3){\cal F}^4_{mn}\rangle.
\ee

We can easily use the method of appendix \ref{chap_t} to evaluate the
purely bosonic part of the right hand side of \eqref{4equiv}. We obtain
the following table for the distribution of thetas,
\begin{center}
\begin{tabular}{|c|c|c|c|} \hline
$W_1^{\a}(\t)$ & $W_2^{\a}(\t)$ & $W_3^{\a}(\t)$ & $F^4_{mn}(\t)$ \\ \hline
1 & 1 & 1 & 2  \\ \hline
1 & 1 & 3 & 0  \\ \hline
1 & 3 & 1 & 0 \\ \hline
3 & 1 & 1 & 0 \\ \hline
\end{tabular}
\end{center}
which, using the superfields of appendix \ref{sym_ap}, expands to
\[
\frac{1}{256(H\cdot k^1)}F^1_{m_1n_1}F^2_{m_2n_2}F^3_{m_3n_3}F^4_{m_4n_4}
\Big[ 
\]
\[
+ \langle(\l\g^p\g^{m_1n_1}\t)(\l\g^m\g^{m_2n_2}\t)
(\l\g^{k_4}\g^{m_3n_3}\t)(\t\g_{[m}\g^{m_4n_4}\t)\rangle k^4_{n]}
\]
\[
+ \frac{1}{3}\langle(\l\g^p\g^{m_1n_1}\t)(\l\g^{[m_4|}\g^{m_2n_2}\t)
(\l\g^{|n_4]}\g^{k_3a}\t)(\t\g_{a}\g^{m_3n_3}\t)\rangle k^3_{k_3}
\]
\[
+ \frac{1}{3}\langle(\l\g^p\g^{m_1n_1}\t)(\l\g^{[m_4|}\g^{k_2a}\t)
(\l\g^{|n_4]}\g^{m_3n_3}\t)(\t\g_{a}\g^{m_2n_2}\t)\rangle k^2_{k_2}
\]
\[
+ \frac{1}{3}\langle(\l\g^p\g^{k_1a}\t)(\l\g^{[m_4|}\g^{m_2n_2}\t)
(\l\g^{|n_4]}\g^{m_3n_3}\t)(\t\g_{a}\g^{m_1n_1}\t)\rangle k^1_{k_1}\Big].
\]
After a long but straightforward calculation we obtain,
\[
= -\frac{1}{5760}\Big[ + \frac{1}{2}t^2 (e^1\cdot e^3)(e^2\cdot e^4)
        + \half tu (e^1\cdot e4)(e^2\cdot e^3)
        + \half tu (e^1\cdot e3)(e^2\cdot e^4)
\]
\[
        - \half tu (e^1\cdot e2)(e^3\cdot e^4)
	+ \half u^2 (e^1\cdot e^4)(e^2\cdot e^3)
\]
\[
       + t(k^4 \cdot e^2)(k^4 \cdot e^3)(e^1\cdot e^4)
       - t(k^4 \cdot e^1)(k^4 \cdot e^3)(e^2\cdot e^4)
       - t(k^3 \cdot e^4)(k^4 \cdot e^2)(e^1\cdot e^3)
\]
\[
       - t(k^3 \cdot e^1)(k^4 \cdot e^3)(e^2\cdot e^4)
       + t(k^3 \cdot e^1)(k^4 \cdot e^2)(e^3\cdot e^4)
       - t(k^2 \cdot e^4)(k^4 \cdot e^3)(e^1\cdot e^2)
\]
\[
       + t(k^2 \cdot e^4)(k^3 \cdot e^2)(e^1\cdot e^3)
       - t(k^2 \cdot e^4)(k^3 \cdot e^1)(e^2\cdot e^3)
       + t(k^2 \cdot e^3)(k^4 \cdot e^2)(e^1\cdot e^4)
\]
\[
       - t(k^2 \cdot e^3)(k^4 \cdot e^1)(e^2\cdot e^4)
       - t(k^2 \cdot e^3)(k^2 \cdot e^4)(e^1\cdot e^2)
       - u(k^3 \cdot e^4)(k^4 \cdot e^1)(e^2\cdot e^3)
\]
\[
       - u(k^3 \cdot e^2)(k^4 \cdot e^3)(e^1\cdot e^4)
       + u(k^3 \cdot e^2)(k^4 \cdot e^1)(e^3\cdot e^4)
       + u(k^3 \cdot e^2)(k^3 \cdot e^4)(e^1\cdot e^3)
\]
\[
       - u(k^3 \cdot e^1)(k^3 \cdot e^4)(e^2\cdot e^3)
       + u(k^2 \cdot e^4)(k^3 \cdot e^2)(e^1\cdot e^3)
       - u(k^2 \cdot e^4)(k^3 \cdot e^1)(e^2\cdot e^3)
\]
\[
       + u(k^2 \cdot e^3)(k^4 \cdot e^2)(e^1\cdot e^4)
       - u(k^2 \cdot e^3)(k^4 \cdot e^1)(e^2\cdot e^4)
       - u(k^2 \cdot e^3)(k^3 \cdot e^4)(e^1\cdot e^2)
\]
\[
       - u(k^2 \cdot e^3)(k^2 \cdot e^4)(e^1\cdot e^2)\Big],
\]
where we used the Mandelstam variables and momentum conservation
as $s = -t-u$.
Therefore the answer does not depend on $H_p$ and we will
see in \eqref{final}
that it matches with the computation of the left hand
side of \eqref{4equiv}, by considering the  
identity \eqref{mapocha}.

\subsection{Non-minimal pure spinor computation}

In this section we will compute the same one-loop amplitude
of section \ref{one_min}, but now
using the NMPS prescription of \ref{multiloop_NMPS}.

At the genus one surface the variables $s^\a$ and $d_\a$ have
eleven and sixteen zero-modes, respectively. 
Using the one-loop prescription of the non-minimal formalism,
the only place which can provide the 11 zero modes of $s^\a$ is
the regulator ${\cal N}$ of \eqref{reg_loop}. But they
are multiplied by the eleven $d_\a$ zero modes, and so the 
remaining five zero modes of
$d_\a$  must come either from the vertex operators or from
the single $b$ ghost. 

Since the three integrated vertex operators
can provide at most three $d_\a$ zero modes through the terms
$(W^\a d_\a)$, the single $b$ ghost of \eqref{nmps_bghost}
must provide two $d_\a$ zero modes
through the term
\be
\label{twod}
{(\lb\g^{mnp} r)(d\g_{mnp}d) \over 192(\l\lb)^2}.
\ee
After integrating over the zero modes of the dimension one fields
$(w_\a, \bar w^\a, d_\a, s^\a)$ using the measure factors described
in \ref{multiloop_NMPS}, one is left with an expression proportional
to 
\be
\label{loopone}
\int d^{16}\t \int [d\l][d\lb][dr] (\l\lb)^{-2} (\l)^4 
(\lb\g^{mnp} r) A W W W \exp (-\l\lb - r\t)
\ee
\be
\label{looptwo}
=\int d^{16}\t
\int [d\l][d\lb][dr] \exp (-\l\lb - r\t)(\l\lb)^{-2} (\l)^4 
(\lb\g^{mnp} D) A W W W 
\ee
where
$D_\a= {\p\over{\p\t^\a}} + (\g^m\t)_\a \p_m$ is the usual
superspace derivative and the index contractions on
\be
\label{indexc}(\l)^4 (\lb\g^{mnp} D) A W W W 
\ee
have not been worked out. 
Note that \eqref{looptwo} is obtained from
\eqref{loopone}
by writing $ r_\a \exp(-r\t)= 
{\p\over{\p\t^\a}} \exp(-r\t)$, integrating by parts with respect
to $\t$, and using conservation of momentum
to ignore total derivatives with respect to $x$. 
Furthermore, the factor of $(\l)^4$ in \eqref{loopone}
comes from the $\l$ in the
unintegrated vertex operator,
the $11$ factors of $\l$ and $\lb$ which multiply the zero modes of
$d_\a$ and $s_\a$ in ${\cal N}$, the factor of $(\l)^{-8}(\lb)^{-8}$
in the measure factor of $w_\a$ and $\bar w^{\a}$, and the factor
of $(\lb)^{-3}$ in the measure factor of $s^\a$.

Fortunately, it is easy to show there is a unique Lorentz-invariant
way to contract the indices in \eqref{indexc}.
Since $(\l^+)^4$ carries $+10$ $U(1)$ charge, 
$(\lb\g^{mnp} D) A W W W$ must carry $-10$ $U(1)$ charge
which is only possible if
$(\lb\g^{mnp} D)$ carries $-3$ charge, $A_\a$ carries $-{5\over 2}$
charge, and each $W^\a$ carries $-{3\over 2}$ charge.
Contracting the $SU(5)$ indices, one finds that the unique 
$U(1)\times SU(5)$ invariant contraction of the indices is
\be
\label{exone}(\l^+)^4(\lb\g_{abc} D) A_+ W^a W^b W^c.
\ee
Returning to covariant notation, one can easily see that
\eqref{indexc} must be proportional to the Lorentz-invariant expression
\be
\label{exonea}
(\lb\g_{mnp} D) (\l A) (\l \g^m W)(\l \g^n W)(\l \g^p W),
\ee
which reduces to \eqref{exone}
in the frame where $\l^+$ is the only non-zero component of $\l^\a$.

Note that this same conclusion can be obtained covariantly from the
prescription if we use the trick given by \eqref{truque}. That is 
because from the zero mode counting the kinematic factor is 
proportional to 
\[
\langle (\lb\g^{rst}D)(d\g_{rst}d)(\l A)(d W)(d W)(d W) \rangle,
\]
which, upon use of \eqref{truque}, immediately becomes \eqref{exonea}.

However, to express the kinematic factor as an integral over pure spinor
superspace as in \eqref{fimloop}, it is convenient to have an
expression in which all $\lb_\a$'s
appear in the combination $(\l^\a \lb_\a)$.
If all $\lb$'s appear in this combination one can use that, up to a constant,
\be
\label{trick}
\int d^{16} \t\int [d\l][d\lb][dr] \exp (-\l\lb - r\t)(\l\lb)^{-n} 
\l^\a \l^\b \l^\g f_{\a\b\g} 
= 
\langle \l^\a \l^\b \l^\g f_{\a\b\g} \rangle.
\ee

To convert \eqref{exonea} to this form, it is convenient to return to
the frame in which
$\l^+$ is the only non-zero component of $\l^\a$ and write \eqref{exone} as
\be
\label{extwo}(\l^+)^4 \e_{abcde}
(\lb^{[de]} D_+ - \lb_+ D^{[de]}) A_+ W^a W^b W^c.
\ee
Using the superspace equations of motion for $A_\a$ and $W^\a$, it
is easy to show that 
\be
\label{ides}
D_+ A_+ = D_+ W^a=0, \quad
D^{[de]} A_+ + D_+ A^{[de]} =0, \quad \e_{abcde} D^{[ab]} W^c = \cF_{de}.
\ee
So \eqref{extwo} is proportional to two terms which are
\be
\label{exthree}
(\l^+)^4 \lb_+  \e_{abcde} (D_+ A^{[de]}) W^a W^b W^c
\quad{\rm and}\quad
(\l^+)^4 \lb_+  A_+ W^a W^b  \cF_{ab}.
\ee

The second term in \eqref{exthree}, when written in covariant language, is proportional
to
\be
\label{exfiveb}(\l\lb) (\l A)(\l \g^m W)(\l \g^n W) \cF_{mn},
\ee
which produces the desired pure spinor superspace integral 
of \eqref{fimloop}.
And the first term in \eqref{exthree} can be written as
\be
\label{exfour}
(\l\lb) \big[(\l D)(\l\g^{mn}A)\big] (\l\g^p W)(W\g_{mnp} W),
\ee
which produces the pure spinor superspace integral
\be
\label{exfive}
\langle
\big[(\l D)(\l\g^{mn}A)\big] (\l\g^p W)(W\g_{mnp} W)\rangle.
\ee

But since BRST-trivial operators decouple, 
$$\langle (\l D) 
\big[(\l\g^{mn}A)(\l\g^p W)(W\g_{mnp} W)\big]\rangle =0,$$
which implies that \eqref{exfive} is equal to
\[
\langle
(\l \g^{mn} A) (\l D)\big[(\l\g^p W)(W\g_{mnp} W)\big]\rangle = 
-\frac{1}{2}(\l \g^{mn} A)(\l \g^p W)(\l\g^{rs}\g_{mnp} W){\cal F}_{rs}
\]
\[
= -24 \langle (\l A)(\l\g^r W)(\l \g^s W){\cal F}_{rs}
\]
where we used the equation of motion for the superfield $W^{\a}$ and
several gamma matrix identities in the last step.
So we finally have shown that the non-minimal computation
of the kinematic factor is proportional to the minimal kinematic
factor of \eqref{fimloop}, which was indeed to be expected
due to the formal proof presented in \cite{joost}.

\subsection{Covariant proof of equivalence}

In the last section we used the U(5) decomposition 
of the superfields to show the
equivalence between the kinematic factor \eqref{fimloop}, 
obtained with the minimal pure spinor formalism, and the non-minimal expression of
\eqref{exonea}. The proof was rather straightforward but it doesn't teach us
how to deal with expressions containing four $\l^{\a}$'s and one $\lb_{\a}$ in general, 
where it is not always the case that going to the U(5) frame where $\l_{[ab]}=\l^a=0$
leads to manageable expressions. 

So it is worth devoting some time in trying to find a covariant (and general) method
to deal with kinematic factors containing $\l^4\lb$ which are obtained in the
non-minimal pure spinor formalism. These expressions are of the following form
\be
\label{fourl}
\int d^{16}\t \int [d\l][d\lb][dr] {\rm e}^{-(\l\lb)-(r\t)}
(\l\lb)^{-n}
\l^{\a}\l^{\b}\l^{\g}\l^{\d}\lb_{\e}
f_{\a\b\g\d}^{\e}(\t)
\ee
and are obtained after integration over the non-minimal measures $[ds]$,$[dw]$ and $[d{\bar w}]$.
The reason to integrate over these particular variables is because the end result contains the
same integrations to perform as the tree-level amplitude prescription.
This is interesting because for tree-level amplitudes there is the notion of 
pure spinor superspace, where one uses the fact that the integrations over 
$[dr]$,$[d\l]$,$[d\lb]$ and $d^{16}\t$
select
the terms proportional to $\langle (\l\g^m\t)(\l\g^n\t)(\l\g^p\t)(\t\g_{mnp}\t)\rangle$.

What we now require is a rule analogous to \eqref{trick} for expressions of the
type \eqref{fourl} or, in other words, we
need to have a covariant prescription to integrate\footnote{Note that
one could in principle use the explicit form for the measures
$[dr]$,$[d\l]$,$[d\lb]$ to integrate the pure spinor variables and arrive
at the final answer. But in doing so one looses the elegance and the 
simplifying features of expressions written in pure spinor superspace,
so we avoid that route here.}
over four $\l$'s  and one $\lb$,
\be
\label{quat}
\langle \l^{\a}\l^{\b}\l^{\g}\l^{\d}\lb_{\e} f_{\a\b\g\d}^{\e}\rangle_{(4,1)}.
\ee

To derive such a rule one needs to
write down a tensor
$T^{\a\b\g\d}_{\e}$ which is symmetric and gamma-matrix traceless
with respect to the four Weyl indices,
\[
\frac{33}{2}T^{\a\b\g\d}_{\e} = \d^{\a}_{\e}T^{\b\g\d} + \d^{\b}_{\e}T^{\a\g\d}
+ \d^{\g}_{\e}T^{\a\b\d} + \d^{\d}_{\e}T^{\a\b\g}
-\frac{1}{12}\Big[ \g^m_{\e\k}\g_m^{\a\b} T^{\g\d\k} +
\g^m_{\e\k}\g_m^{\a\d} T^{\b\g\k} \Big.
\]
\be
\label{Tq}
\Big.
+ \g^m_{\e\k}\g_m^{\a\g} T^{\b\d\k}
+ \g^m_{\e\k}\g_m^{\b\g} T^{\a\d\k} + \g^m_{\e\k}\g_m^{\b\d} T^{\a\g\k}
+ \g^m_{\e\k}\g_m^{\g\d} T^{\a\b\k} \Big].
\ee
This is valid because there is only one scalar built out of four pure spinors $\l^{\a}$, one
pure spinor $\lb_{\a}$ and five unconstrained $\t^{\a}$'s. Using the theory of group
representations one can show the following to be true \cite{LiE}
\[
\lb\l^4 = [0,0,0,1,0]\otimes [0,0,0,0,4] 
=  1X[0,0,0,0,3] +1X[0,0,0,1,4] +1X[0,1,0,0,3]
\]
\[
\t^5 = 1X[0,0,0,3,0] +1X[1,1,0,1,0]
\]
so that
\[
\lb\l^4\t^5 = 1X[0,0,0,0,0] + 2X[0,0,0,0,4] + 5X[0,0,0,1,1] + 1X[0,0,0,1,5] + {\ldots} 
\]
where the ${\ldots} $ are higher rank representations. If $\l^{\a}$ were not a pure spinor
then there would be three different scalars in the above decomposition.

Using \eqref{Tq} we can translate pure spinor superspace expressions of the
type $(4,1)$ into a sum of familiar $(3,0)$ expressions
\be
\label{T41}
\frac{33}{2(\l\lb)}\langle \l^{\a}\l^{\b}\l^{\g}\l^{\d}\lb_{\e} f_{\a\b\g\d}^{\e}\rangle_{(4,1)} =
\ee
\[
= 
\langle \l^{\b}\l^{\g}\l^{\d} f_{\a\b\g\d}^{\a}\rangle + 
\langle \l^{\a}\l^{\g}\l^{\d} f_{\a\b\g\d}^{\b}\rangle +
\langle \l^{\a}\l^{\b}\l^{\d} f_{\a\b\g\d}^{\g}\rangle +
\langle \l^{\a}\l^{\b}\l^{\g} f_{\a\b\g\d}^{\d}\rangle
\]
\[
-\frac{1}{12}\Big[ 
\langle (\l\g^m)_{\e}\g_m^{\a\b}\l^{\g}\l^{\d}f^{\e}_{\a\b\g\d}\rangle +
\langle (\l\g^m)_{\e}\g_m^{\a\d}\l^{\b}\l^{\g}f^{\e}_{\a\b\g\d}\rangle +
\langle (\l\g^m)_{\e}\g_m^{\a\g}\l^{\b}\l^{\d}f^{\e}_{\a\b\g\d}\rangle
\Big.
\]
\be
\label{tricktwo}
\Big.
\langle (\l\g^m)_{\e}\g_m^{\b\g}\l^{\a}\l^{\d}f^{\e}_{\a\b\g\d}\rangle +
\langle (\l\g^m)_{\e}\g_m^{\b\d}\l^{\a}\l^{\g}f^{\e}_{\a\b\g\d}\rangle +
\langle (\l\g^m)_{\e}\g_m^{\g\d}\l^{\a}\l^{\b}f^{\e}_{\a\b\g\d}\rangle
\Big]
\ee

We will now proceed to show that using \eqref{tricktwo} one can derive that
\be
\label{toshow}
\langle (\lb\g_{mnp} D)[ (\l A) (\l \g^m W)(\l \g^n W)(\l \g^p W)]\rangle_{(4,1)}
= 40 (\l\lb)\langle (\l A) (\l \g^m W)(\l \g^n W){\cal F}_{mn}\rangle,
\ee
which is the covariant equivalence proof we are looking for.
Acting with the derivative over the superfields in \eqref{toshow} one obtains
\[
\langle (\lb\g_{mnp} D)[ (\l A) (\l \g^m W)(\l \g^n W)(\l \g^p W)]\rangle =
\]
\be
\label{obt}
\langle \big[(\lb\g_{mnp} D)(\l A)\big](\l\g^m W)(\l\g^n W)(\l\g^p W)\rangle
-
\langle (\l A)(\lb \g_{mnp} D)[(\l\g^m W)(\l\g^n W)(\l\g^p W)] \rangle.
\ee
However 
using pure spinor properties and
the equations of motion of super-Yang-Mills theory we find that the second
term in the right hand side of \eqref{obt} is given by
\be
\label{obvious}
- \langle (\l A)(\lb \g_{mnp} D)(\l\g^m W)(\l\g^n W)(\l\g^p W)\rangle = 
36 (\l\lb) \langle (\l A)(\l\g^m W)(\l\g^n W){\cal F}_{mn} \rangle,
\ee
so that \eqref{exonea} is manifestly equivalent to \eqref{fimloop} when
the derivative acts over the $W$'s. To finish the equivalence proof 
we need to evaluate the expression
\be
\label{need}
\langle \big[(\lb\g_{mnp} D)(\l A)\big](\l\g^m W)(\l\g^n W)(\l\g^p W)
\rangle.
\ee
Using \eqref{tricktwo} and paying attention to the normalization
one converts \eqref{need}
into a pure spinor superspace expression with three $\l$'s,
\[
\frac{33}{2(\l\lb)}\langle \big[(D \g_{mnp} \lb)(\l A)\big](\l\g^m W)(\l\g^n W)(\l\g^p W)
\rangle =
\]
\[
\langle [(D \g_{mnp} A)](\l\g^m W)(\l\g^n W)(\l\g^p W) \rangle
+ 24\langle [D_{\a}(\l A)](\g_{np} W)^{\a}(\l\g^n W)(\l\g^p W) \rangle
\]
\[
+ \frac{1}{4}\langle [(\l\g_q \g_{mnp}D)A_{\a}](\g^q \g^m W)^{\a}(\l\g^n W)(\l\g^p W)\rangle
\]
\be
\label{threel}
+ \frac{1}{4}\langle [(\l\g_q \g_{mnp}D)(\l A)](W\g^m \g^q \g^n W)(\l \g^p W)\rangle,
\ee
where in the above it has to be understood that $D_{\a}$ acts only over 
the superfield $A_{\b}$. Note that the last two terms of \eqref{threel} come from the
gamma matrix terms of \eqref{tricktwo}.

Using 
\be
\label{ymone}
D_{\a}A_{\b}+D_{\b}A_{\a} = \g^q_{\a\b}A_q
\ee
one can show that
\[
\langle D_{\a}(\l A)(\g_{np} W)^{\a}(\l\g^n W)(\l\g^p W) \rangle
= 
\]
\[
-\langle (\l D)A_{\a}(\g_{np} W)^{\a}(\l\g^n W)(\l\g^p W) \rangle
+ \langle A_m (\l\g^m \g^{np} W)(\l\g^n W)(\l\g^p W) \rangle
\]
where the last term is zero due to the pure spinor condition and the BRST charge
in the first one can be integrated by parts to give
\[
-\langle (\l D)A_{\a}(\g_{np} W)^{\a}(\l\g^n W)(\l\g^p W) \rangle =
+\frac{1}{4} \langle (\l \g^{rs} \g_{np} A)(\l\g^n W)(\l\g^p W){\cal F}_{rs} \rangle 
\]
\[
= -2 \langle (\l A)(\l\g^m W)(\l \g^n W){\cal F}_{mn}\rangle.
\]
Doing similar manipulations we also get
\[
\langle (\l \g_q \g_{mnp})^{\b} (D_{\b} A_{\a})(\g^q \g^m W)^{\a}(\l\g^n W)(\l\g^p W)\rangle
=  4 \langle \big[(\l D)A_{\a}\big](\g^q \g^m W)^{\a}(\l\g^m W)(\l\g^q W)
\rangle
\]
\[
= \langle (\l\g^{rs}\g_{mq} A)(\l\g^m W)(\l\g^q W){\cal F}_{rs}\rangle
= - 8 \langle (\l A)(\l\g^m W)(\l \g^n W){\cal F}_{mn}\rangle.
\]
and
\[
\langle (\l\g_q \g_{mnp})^{\a}[D_{\a}(\l A)](W\g^m \g^q \g^n W)(\l \g^p W)\rangle
= - \langle Q\big[(\l\g_q \g_{mnp} A)\big](W\g^m \g^q \g^n W)(\l \g^p W)\rangle
\]
\be
\label{zer}
+ \langle (\l\g_{mnpqr} \l)A_r (W\g^{mnq} W)(\l\g^p W) \rangle.
\ee
The last line of \eqref{zer} is zero due to the pure spinor condition, so \eqref{zer}
becomes
\[
\langle (\l\g_q \g_{mnp})^{\a}D_{\a}(\l A)(W\g^m \g^q \g^n W)(\l \g^p W)\rangle
= -\frac{1}{2}\langle (\l\g_q\g_{mnp} A)(\l\g^{rs}\g^m\g^q\g^n W)(\l\g^p W){\cal F}_{rs}
\rangle
\]
\be
\label{equa}
= - 32 \langle (\l A)(\l\g^m W)(\l \g^n W){\cal F}_{mn} \rangle,
\ee
where we integrated the BRST-charge by parts and went through a long list
of gamma matrix manipulations.

Finally, the first term in the right hand side of \eqref{threel} can be rewritten using
the gamma matrix identity of $\eta_{mn}\g^m_{\a(\b}\g^n_{\g\d)}=0$,
\[
\langle [(D \g_{mnp} A)](\l\g^m W)(\l\g^n W)(\l\g^p W) \rangle =
\]
\[
= \langle (\g_m\g_n W)^{\sigma}(\g_p\g^n \l)^{\rho}D_{\s}A_{\rho}(\l\g^m W)(\l\g^p W)\rangle
+ \langle(\g_m\g_n \l)^{\sigma}(\g_p\g^n W)^{\rho}D_{\s}A_{\rho}(\l\g^m W)(\l\g^p W)\rangle
\]
\[
= + 2\langle(W\g_n\g_m)^{\a}[D_{\a}(\l A)](\l\g^m W)(\l\g^n W)\rangle
+ 2\langle(\g^p\g^n W)^{\s}[(\l D)A_{\s}](\l\g^n W)(\l\g^p W)\rangle
\]
Using \eqref{ymone} in the first term we get,
\[
= - 4 \langle (W\g_n\g_m)^{\a}[D_{\a}(\l A)](\l\g^m W)(\l\g^n W)\rangle
= - \langle(\l\g^{rs}\g_n\g_m A)(\l\g^m W)(\l\g^n W){\cal F}_{rs}\rangle
\]
and so
\be
\label{lega}
\langle [(D \g_{mnp} A)](\l\g^m W)(\l\g^n W)(\l\g^p W) \rangle =
- 8 \langle (\l A)(\l\g^m W)(\l\g^n W){\cal F}_{mn}\rangle
\ee

Using all the above identities \eqref{T41} implies
\[
\langle \big[(\lb \g_{mnp} D)(\l A)\big](\l\g^m W)(\l\g^n W)(\l\g^p W)
\rangle = 4(\l\lb) \langle (\l A)(\l\g^m W)(\l\g^n W){\cal F}_{mn},
\]
and therefore, from \eqref{obt} and \eqref{obvious} one finally gets
\be
\label{shown}
\langle (\lb\g_{mnp} D)[ (\l A) (\l \g^m W)(\l \g^n W)(\l \g^p W)]\rangle_{(4,1)}
= 40 (\l\lb)\langle (\l A) (\l \g^m W)(\l \g^n W){\cal F}_{mn}\rangle,
\ee
which is what we wanted to show.
We have thus just obtained a covariant proof of equivalence 
between the minimal and non-minimal superspace expressions of
the the massless
four-point kinematic factor.

We also observe that the terms containing gamma matrices in \eqref{tricktwo}
-- they are responsible for the traceless property of $T^{\a\b\g\d}_{\e}$ --
covariantly generate the term \eqref{exfive} whose existence was deduced
through U(5) arguments in \cite{nmps_two}. This can be checked by noticing that
\eqref{zer} can also be written as,
\be
\label{equad}
\langle (\l\g_q \g_{mnp}D)(\l A)(W\g^m \g^q \g^n W)(\l \g^p W)\rangle =
- 4 \langle\big[(\l\g_{mn}D)(\l A)\big](W\g^{mnp}W)(\l\g_p W) \rangle,
\ee
by using the pure spinor property of $(\l\g_p)_{\a}(\l\g^p)_{\b}=0$ to
get rid of the $\g_p$ inside of $(\l\g_q \g_{mnp}D)(\l A)$.

\subsection{Yet another covariant proof of equivalence}

There is yet another covariant proof of the equivalence between 
\eqref{fimloop} and \eqref{exonea}, which is perhaps more elegant
than the proof presented in the previous section.

From \eqref{obt} and \eqref{obvious} we know that
\[
\langle (\lb\g_{mnp} D)[ (\l A) (\l \g^m W)(\l \g^n W)(\l \g^p W)]\rangle = 
36(\l\lb)\langle (\l A) (\l \g^m W)(\l \g^n W){\cal F}_{mn}\rangle
\]
\be
\label{needtwo}
+ \langle \big[(\lb\g_{mnp} D)(\l A)\big](\l\g^m W)(\l\g^n W)(\l\g^p W)
\rangle.
\ee
Using $\eta_{mn}\g^m_{\a(\b}\g^n_{\g\d)}=0$ and that the factor of
$\big[(\lb\g_{mnp} D)(\l A)\big]$ can be substituted by
$\big[(\lb\g_m\g_n\g_p D)(\l A)\big]$ we arrive at the following
identity for the second term of \eqref{needtwo}
\[
\langle \big[(\lb\g_{mnp} D)(\l A)\big](\l\g^m W)(\l\g^n W)(\l\g^p W)
\rangle 
= \langle (\lb\g_m\g_n W)\Big[(\l\g^n\g_p D)(\l A)\Big](\l\g^m W)(\l\g^p W)\rangle
\]
\be
\label{legal}
+ \langle (\lb\g_m\g_n \l)(W \g^n\g_p)^{\s} \Big[D_{\s}(\l A)\Big](\l\g^m W)(\l\g^p W)\rangle.
\ee
Using $\g^n\g_p = - \g_p\g^n + 2\d^n_p$ and the equation of motion $Q(\l A)= 0$ the first
term of \eqref{legal} vanishes, while the second can be rewritten as
\[
\langle (\lb\g_m\g_n \l)(W \g^n\g_p)^{\s} \Big[D_{\s}(\l A)\Big](\l\g^m W)(\l\g^p W)\rangle=
\]
\[
= -2(\l\lb)\langle (W \g_m\g_p)^{\s}\Big[(\l D)A_{\s}\Big](\l\g^m W)(\l\g^p W)\rangle
\]
\[
= + 4(\l\lb)\langle (\l A)(\l\g^m W)(\l\g^n W){\cal F}_{mn}\rangle,
\]
where we used \eqref{ymone} and integrated the BRST-charge by parts. So we have just
shown that \eqref{needtwo} is equal to
\[
\langle (\lb\g_{mnp} D)\Big[(\l A)(\l\g^m W)(\l\g^n W)(\l\g^p W)\Big]\rangle=
40(\l\lb)\langle (\l A)(\l\g^m W)(\l\g^n W){\cal F}_{mn}\rangle,
\]
which finishes the proof. The result is obviously the same as \eqref{shown}.

\subsection{On the pure spinor expression \eqref{crazy_one}}
\label{crazy_four_sub}

The pure spinor superspace expression \eqref{crazy_one} 
\be
\label{crazy_oneA}
K_c = \langle (\l\g^m\t)(\l\g^n W^1)(\l\g^p W^2)(W^3\g_{mnp} W^4) \rangle
\ee
is an interesting one. 
The complete explanation of its origins remains unknown\footnote{It was found when
trying to discover a prescription which uses only integrated vertices to compute the
massless four-point amplitude at one-loop and can be roughly seen as the substitution
rule of $\int b V \rightarrow (d^0 \theta) \int U$ inside the kinematic factor.} to this
date, but it may turn out to be useful as a hint to further studies.

First of all one should notice that \eqref{crazy_oneA} is manifestly gauge invariant
but appears to be non-supersymmetric. Secondly, it is BRST-closed and finally it has the 
dimensions of a $F^4$ term. What does the component expression of 
\eqref{crazy_oneA} look like?

Using the SYM superfield expansions of \eqref{sym_exp} one easily obtains the 
bosonic contribution
\[
K_c = -\frac{1}{256}\langle (\l\g^m \t)(\l\g^n \g^{m_1n_1}\t)(\l\g^p \g^{m_2n_2}\t)
(\t\g^{m_3n_3}\g_{mnp}\g^{m_4n_4}\t)\rangle
F^1_{m_1n_1}{\ldots}F^4_{m_4n_4}
\]
\be
\label{mist}
= A t_8^{m_1n_1{\ldots} m_4n_4}F^1_{m_1n_1}{\ldots}F^4_{m_4n_4}
\ee
whose proportionality with the four-point kinematic factor at one-loop seems surprising 
at first, and clearly deserves some kind of explanation.

Although \eqref{crazy_oneA} seems to be non-supersymmetric due to the explicit $\t$, one
can show that its supersymmetry variation is a total derivative,
\be
\label{susyvar}
\d_{susy} K_c = \langle (\l\g^m \e)(\l\g^n W^1)(\l\g^p W^2)(W^3\g_{mnp} W^4) \rangle
\ee
because $\d_{susy}K_c$ is proportional to the anomaly kinematic factor, which is known to
be a total derivative\footnote{To make it clearer, just suppose that one of the $W$'s 
in \eqref{anomaly}
is a constant spinor $\e$.}.

One can understand the appearance of the $t_8$ tensor in \eqref{mist} by noticing that
in the bosonic $\t$-expansion of $(D\g^{qrs}A)$, due to the antisymmetry of $\g^{qrs}$ 
in its spinor indices,
has no components with zero thetas,
\[
(D\g^{mnp}A) =  \frac{1}{4}(\t\g^{mnp}\g^{tu}\t)F_{tu}
+ \frac{1}{4}\p_t a_u (\t\g^t \g^{mnp} \g^u \t) + {\ldots} 
\]
\be
\label{twotheta}
=  -(\t\g^{mnp} W)
+ \frac{1}{4}\p_t a_u (\t\g^t \g^{mnp} \g^u \t).
\ee
where the substitution in the second line is valid up to fermionic terms. 

Now if one considers the bosonic computation of
\be
\label{tempL}
L = \langle \big[(D\g_{mnp} A)\big](\l\g^m W)(\l\g^n W)(\l\g^p W)\rangle
\ee
there is only one contribution coming from the superfield $(D\g_{mnp} A)$, namely the 
two-thetas term of
\eqref{twotheta}. So the pure spinor expression \eqref{tempL} becomes
\[
L = - \langle (\t\g_{mnp} W)(\l\g^m W)(\l\g^n W)(\l\g^p W)\rangle
+ \frac{1}{4}\p_t a_u \langle (\t\g^t \g^{mnp} \g^u \t)(\l\g^m W)(\l\g^n W)(\l\g^p W)\rangle.
\]
It is a straightforward exercise to show that the bosonic part of second term vanishes 
identically, 
leading us to
\[
L = - \langle (\t\g_{mnp} W)(\l\g^m W)(\l\g^n W)(\l\g^p W)\rangle.
\]
On one hand, using that $(\l\g^m)_{\d_1}(\l\g^n)_{\d_2}(\l\g^p)_{\d_3}
(\g_{mnp})_{\d_4\d_5}$ is completely antisymmetric in its spinor indices
one obtains, 
\be
\label{finalc}
L = - \langle  (\l\g^m\t)(\l\g^n W)(\l\g^p W)(W\g_{mnp} W) \rangle.
\ee
on the other hand, from \eqref{lega} and \eqref{tempL} one is led to conclude
\be
\label{pil}
\langle  (\l\g^m\t)(\l\g^n W)(\l\g^p W)(W\g_{mnp} W) \rangle
= 8 \langle (\l A)(\l\g^m W)(\l\g^n W){\cal F}_{mn} \rangle.
\ee
As it was shown in \eqref{susyvar}, the left hand side of \eqref{pil} is supersymmetric. So
it turns out that \eqref{pil} is valid for all bosonic and fermionic components! 
Consequently not only 
\eqref{mist} is expected to happen at the level of pure spinor superspace but it is
a completely equivalent way of computing the massless four-point kinematic factor at one-loop.
Note that somehow we are exchanging manifest supersymmetry by manifest gauge invariance 
when computing either the left or right hand side of \eqref{pil}. This may be a hint on how
to develop an amplitude prescription at one-loop which uses only integrated vertices, 
indicating
that maybe manifest supersymmetry will be spoiled.

It is also interesting to note that the analogous generalization for the massless 
five-point amplitude,
\[
K_5 = \langle (\l\g^m \t)(\l\g^n \g^{rs} W^5)(\l\g^p W^1)
(W^3\g_{mnp}W^4){\cal F}^2_{rs}\rangle
- (2\leftrightarrow 5)
\]
leading to
\[
K_5 = \langle (D\g_{mnp}A^1)(\l\g^m \g^{rs} W^5)(\l\g^n W^3)(\l\g^p W^4){\cal F}^2_{rs}\rangle
- (2\leftrightarrow 5)
\]
also gives rise to the correct bosonic component expression (as we shall compute in 
section \ref{fracas})\footnote{The computation of \ref{fracas} was a work in progress
when the thesis was finished in 2008. It has been completed now \cite{5pt}.}.
Whether they are the correct full kinematic factor is presently under investigation
with the collaboration of Christian Stahn.

\section{Massless four-point amplitude at two-loops}

In this section we will present the massless four-point amplitude at two-loops and
as in the other amplitudes considered in this thesis, we will focus our attention in
the form of the kinematic factor in pure spinor superspace. 

There are two different ways of computing this particular amplitude, using the
minimal and the non-minimal pure spinor formalism. As it turns out, the minimal
version is more efficient in getting the final expression for the kinematic factor.
With the non-minimal version what one obtains is a kinematic factor which contains
also the pure spinor $\lb$'s, complicating things a bit. But in the end one can
prove that both versions are equivalent to
\be
\label{tloop}
K_{\rm 2} =
\langle (\l \g^{mnpqr}\l){\cal F}^1_{mn}{\cal F}^2_{pq}{\cal F}^3_{rs}
(\l \g^s W^4)\rangle\Delta(1,3)\Delta(2,4)
+ \rm{perm.(1234)}.
\ee
This is again a manifestation of the fact that both prescriptions were
shown to be equivalent in \cite{joost}.

\subsection{Minimal computation}

Following the prescription given in section \ref{multi_min} the 
massless four-point closed\footnote{The closed string amplitude
is computed as the holomorphic square of the open string. We use this
fact implicitly all the time in this thesis, by considering only
one Riemann surface at each genus. However we always present the
results in terms of open superstring kinematic factors $K$, from which
the closed string kinematic factor can be obtained by the holomorphic
square $K\bar K$. In terms of the effective action, the Riemann tensor
is obtained by identifying $R_{mnpq}$ with $F_{mn}{\bar F}_{pq}$,
or $R_{mnpq} = k_{[m}h_{n][q}k_{q]}$, where we identified 
$e_n{\tilde e}_q \rightarrow h_{nq}$.}
superstring amplitude at genus two is given by
\[
A = \int d^2\tau_1 d^2\tau_2 d^2\tau_3 \langle |
\prod_{P=1}^{3}\int d^2 u_P \mu_P(u_P)
{\tilde b_{B_P}}(u_P,z_P) 
\]
\[
\prod_{P=4}^{20g} Z_{B_P}(z_P) \prod_{R=1}^{2} Z_J(v_R)
\prod_{I=1}^{11} Y_{C_I}(y_I)~|^2 ~\prod_{i=1}^4 \int d^2 z_i U^i(z_i)
\rangle,
\]
The explicit computation can be done by considering the 32 zero modes
of $d_{\a}$ and the 16 zero modes of $\t^\a$. 

Seventeen zero modes of
$d_\a$ come from the $Z_B$ operators, while another two come from 
the $Z_J$'s. Therefore the remaining thirteen zero modes must
come from the three b-ghosts and the external vertices, and there is
only one possibility. This is because the b-ghost has terms with
zero, one, two or four $d_\a$'s.
So suppose the three b-ghosts contribute ten $d$'s and the
external vertices the remaining three through $(dW)^3$. In
this case the amplitude vanishes because there is a factor of $\d'(B^{mn}N_{mn})$
in the term with four $d$'s in the b-ghost, 
\[
b_B\Big|_{4 d} = B_{mn}B^{qr}(d\g^{mnp}d)(d\g_{pqr}d)\d'(B^{st}N_{st})
\]
which makes the amplitude
vanish because there is no $N_{mn}$ coming from the external vertices
to cancel the derivative of the delta function\footnote{The derivative of
the delta function satisfies $x\d'(x) = - \d(x)$.}.

Going through all the possibilities one obtains a vanishing result for
all but one case, when the three b-ghosts provide twelve zero modes
of $d_\a$ and three $\d'(B^{mn}N_{mn})$ while the four external vertices
contribute with the required $(dW)(N\cdot F)^3$ term to give a non-vanishing
result.

The integrations over the pure spinor measures have the effect of
selecting only the components such that $\langle\l^3 {\cal F}{\cal F}{\cal F}W \rangle$
is proportional to the pure spinor measure \eqref{measure}, and one can easily
check that there is only one scalar which can be built out of this superfields,
namely
\be
\label{twol}
\langle 
(\l\g^{mnpqr}\l)(\l\g^sW^4){\cal F}^1_{mn}{\cal F}^2_{pq}{\cal F}^3_{rs}.
\rangle
\ee
The computation of the moduli space part is summarized as follows. Each
b-ghost has conformal weight $+2$ and no poles over the Riemann surface
of genus two and their product has zeros when their position coincide.
These conditions uniquely determine their Riemannian contribution
to be given by the product of three quadri-holomorphic 1-forms
$\Delta(u_1,u_2)\Delta(u_2,u_3)\Delta(u_3,u_1)$. Analogously, each external
vertex has conformal weight $+1$ and therefore their contribution is
given by some linear combination of the holomorphic 1-forms, 
$h^{IJKL}w_I(z_1)w_J(z_2)w_K(z_3)w_L(z_4)$. The only linear combination
compatible with the symmetries of \eqref{twol} is given by
the following permutations
\be
\label{twol2}
K_2 = 
\langle 
(\l\g^{mnpqr}\l)(\l\g^sW^4){\cal F}^1_{mn}{\cal F}^2_{pq}{\cal F}^3_{rs}
\Delta(z_1,z_3)\Delta(z_2,z_4)
\rangle  + \text{perm}(1234).
\ee
For example, due to the symmetry of $(\l\g^{mnpqr}\l){\cal F}_{mn}^1{\cal F}_{pq}^2$
under $(1 \leftrightarrow 2)$ there could be no factor of $\Delta(z_1,z_2)$ in the
above combination.

And from the theory of Riemann surfaces one can show that 
\[
\int d^2\tau_1 d^2\tau_2 d^2\tau_3| \prod_{j=1}^3 \int d^2 u_j
\mu(u_j) \Delta(u_1,u_2)\Delta(u_2,u_3)\Delta(u_3,u_1)|^2 = 
\int d^2\Omega_{11}d^2\Omega_{12}d^2\Omega_{22},
\]
where $\Omega_{IJ}$ is the $2\times 2$ period matrix of the genus two Riemann
surface.

Finally, the amplitude is given by
\[
{\cal A}_2 \propto {\rm e}^{2\phi} {\tilde K}_2{\bar {\tilde K}}_2 \int_{{\cal M}_2}
{|d^3\Omega|^2 \over ({\rm det}\,{\rm Im}\Omega)^5} F_2(\Omega, {\cal Y}),
\]
where $K_2= {\tilde K}_2{\cal Y}$ and $F_2(\Omega, {\cal Y})$, apart 
from the factor of $|{\cal Y}|^2$,
\[
{\cal Y}(s,t,u) = \left[(u-t)\Delta(1,2)\Delta(3,4) 
+(s-t)\Delta(1,3)\Delta(2,4) + (s-u)\Delta(1,4)\Delta(2,3)
\right]
\]
comes from the standard integration over zero modes
of $X^m$ and is given by
\[
F_2(\Omega, {\cal Y}) = \int |{\cal Y}|^2 \prod_{i<j}G(z_i,z_j)^{k_i\cdot k_j},
\]
where $G(z_i,z_j)$ is the scalar Green function.

\subsection{Non-Minimal computation}

The computation as dictated by the non-minimal pure spinor formalism
gives rise to the same moduli space part described in the last subsection, 
with a final integration over the period matrix of the Riemann surface.
The only difference comes from the computation of the kinematic factor,
due to the different expression for the b-ghost and the additional
functional integrations over the non-minimal variables.

If we restrict our attention to the kinematic factor, then the
computation can be easily performed by zero mode counting, and goes as
follows \cite{NMPS}.

First note that there is only one place where the 22 zero modes of
$s^\a$ can come from, the regulator ${\cal N}$ of \eqref{reg_loop}.
It must provide the whole 22 $s^\a$ zero modes, which come
multiplied by 22 zero modes of $d_\a$. 
So the remaining 10 $d_\a$ zero modes must come from the
four integrated vertex operators and the three $b_\a$ ghosts.
This is only possible if each integrated vertex operators provides
a $d_\a$ zero mode through the term $(W^\a d_\a)$ and each
$b$ ghost provides two $d_\a$ zero modes through the term of 
\eqref{twod}. 

After integrating over the zero modes of the conformal weight $+1$ fields
$(w^I_\a, \bar w^{I\a}, d^I_\a, s^{I\a})$ using the measure factors described
in \cite{NMPS}, one is left with an expression proportional
to 
\be
\label{sloopone}
\int d^{16}\t \int [d\l][d\lb][dr] (\l\lb)^{-6} (\l)^6 
(\lb\g^{mnp} r)^3 W W W W \exp (-\l\lb - r\t)
\ee
\be
\label{slooptwo}
=\int d^{16}\t
\int [d\l][d\lb][dr] \exp (-\l\lb - r\t)(\l\lb)^{-6} (\l)^6 
(\lb\g^{mnp} D)^3 W W W W 
\ee
where the index contractions on
\be
\label{indices}
(\l)^6 (\lb\g^{mnp} D)^3 W W W W 
\ee
will be found in the discussion below.
Note that the factor of $(\l)^6$ in \eqref{sloopone} comes from 
the $11g$ factors of $\l$ and $\lb$ which multiply the zero modes of
$d^I_\a$ and $s^I_\a$ in ${\cal N}$, the factor of $(\l)^{-8g}(\lb)^{-8g}$
in the measure factor of $w^I_\a$ and $\bar w^{I\a}$, and the factor
of $(\lb)^{-3g}$ in the measure factor of $s^{I\a}$.

As in the one-loop four-point amplitude, there is fortunately a unique
way of contracting the indices of \eqref{indices} in a Lorentz-invariant manner.
Choosing the Lorentz frame where $\l^+$ is the only non-zero component
of $\l^\a$, one finds that $(\l^+)^6$ contributes $+15$ $U(1)$ charge
so that each $(\lb\g^{mnp} D)$ must contribute $-3$ charge
and 
each $W$ must contribute $-{3\over 2}$ charge. Since the $-3$ component of
$(\lb \g^{mnp} D)$ is $(\lb^{[ab]} D_+ - \lb_+ D^{[ab]})$,
and since $D_+$ annihilates the $-{3\over 2}$ component of $W^\a$,
the only contribution to \eqref{indices} comes from a term of the form
\be
\label{gexone}
(\l^+)^6 (\lb_+)^3  (D^{[ab]} D^{[cd]} D^{[ef]})
 (W^g W^h W^j W^k)
\ee
where the ten $SU(5)$ indices are contracted with two $\e_{abcde}$'s.

The term of \eqref{gexone} produces three types of terms depending on how
the three $D$'s act on the four $W$'s. 
If all three $D$'s act on the same $W$,
one gets a term proportional to
$(\l^+)^6 (\lb_+)^3  W W W \p \cF $, which by $U(1)\times SU(5)$ invariance
must have the form
\be
\label{gexfour}
(\l^+)^6 (\lb_+)^3 W^a W^b W^c \p_a \cF_{bc}.
\ee
And if two $D$'s act on the same $W$, one gets a term proportional to
$(\l^+)^6 (\lb_+)^3  \cF W W \p W $, which by $U(1)\times SU(5)$ invariance
must have the form
\be
\label{gexthr}
(\l^+)^6 (\lb_+)^3 \cF_{bc} W^a W^b \p_a W^c.
\ee
Finally, if each $D$
acts on a different $W$, one obtains a term that is proportional to
$(\l^+)^6 (\lb_+)^3 W \cF \cF \cF$, which by $U(1)\times SU(5)$ invariance
must have the form
\be
\label{gextwo}
(\l^+)^6 (\lb_+)^3 \cF_{ab} \cF_{cd} \cF_{ef} W^f \e^{abcde}.
\ee

The first term in \eqref{gexfour} vanishes by Bianchi identities. And
the second term in \eqref{gexthr} is proportional to the first term after
integrating by parts with respect to $\p_a$ and using the equation
of motion $\p_a W^a=0$. So the only contribution
to the kinematic factor comes from the third term of \eqref{gextwo}, which can
be written in Lorentz-covariant notation as
\be
\label{geee}
(\l\lb)^3 (\l \g^{mnpqr} \l)\cF_{mn} \cF_{pq} \cF_{rs} (\l\g^s W).
\ee
So the non-minimal computation of the two-loop kinematic factor
agrees with the minimal computation of \eqref{twol2}.

\section{Relating massless four-point kinematic factors}
\label{ucla}

In this section the usefulness of having pure spinor superspace
expressions for the kinematic factors will reveal itself through 
a general proof that the scattering amplitudes of four massless strings
have the same kinematic factors at tree-, one- and two-loops, up
to Mandelstam invariants \cite{mafra_tree}.

\subsection{Tree-level and one-loop}

In the following sections we will make heavy use of 
the superfield equations of motion in the 
formulation of ten-dimensional 
Super-Yang-Mills theory in superspace (see review in Appendix A), namely
\be
\label{SYM}
Q{\cal F}_{mn} = 2\p_{[m} (\l\g_{n]} W), \quad
Q W^{\a} = {1\over 4}(\l\g^{mn})^{\a}{\cal F}_{mn},\quad  
QA_m = (\l\g_m W) + \p_m(\l A),
\ee
where $Q=\oint \l^{\a}d_{\a}$ is the pure spinor BRST operator.
With these
relations in hand we will show that \eqref{id_one} holds true.
To prove this we note that 
\[
\langle (\l A)\p^m(\l A)(QA^n)F_{mn}
\rangle = -\langle (\l A)\p^m(\l A)A^n (QF_{mn})\rangle,
\]
which upon use of
\eqref{SYM} and momentum conservation becomes
$$
\langle (\l A)\p^m(\l A)(QA^n)F_{mn}
\rangle =  \langle  (\l A)\p^m(\l A) \p_m A_n  (\l \g^n W) \rangle
$$
\be
\label{part}
- \langle  \p_n (\l A)\p_m(\l A) A^n  (\l \g^m W) \rangle
- \langle   (\l A)\p_n\p_m(\l A) A^n  (\l \g^m W) \rangle.
\ee
The second term can be rewritten like
$$
\langle  \p_n (\l A)\p_m(\l A) A^n  (\l \g^m W) \rangle =
- \langle  (\l A)(\l \g^m W)\left[ 
A^n \p_m\p_n(\l A) + \p^n(\l A)\p_m A_n 
\right]\rangle
$$
as can be shown by integrating $\p^m$ by parts and 
using the equation of motion for $W^{\a}$. So,
$$
\langle (\l A)\p^m(\l A)(QA^n)F_{mn} \rangle =
\langle (\l A)\p^m(\l A)(\l\g^n W)F_{mn} \rangle 
- 2 \langle   (\l A)\p_n\p_m(\l A) A^n  (\l \g^m W) \rangle
$$
which implies that
$
\langle (\l A)\p^m(\l A)\p^n(\l A)F_{mn} \rangle =
-2 \langle   (\l A)\p_n\p_m(\l A) A^n  (\l \g^m W) \rangle$, or equivalently,
\be
\label{impli}
\langle (\l A)\p^m(\l A)\p^n(\l A)F_{mn} \rangle = 
-2 \langle   (\l A)\p_n(QA_m) A^n  (\l \g^m W) \rangle.
\ee
Using $[Q,\p^n]=0$ and the decoupling of BRST-trivial operators, equation \eqref{impli}
becomes
$$
\langle (\l A)\p^m(\l A)\p^n(\l A)F_{mn} \rangle =
2\langle (\l A)(\p_n A_m) (QA^n)  (\l \g^m W) \rangle
$$
\be
\label{segue}
= \langle (\l A)(\l\g^m W)(\l \g^n W)F_{mn} \rangle +
2\langle (\p_n A_m)(\l A)\p^n(\l A) (\l\g^m W) \rangle.
\ee
Plugging \eqref{segue} in the tree-level kinematic factor \eqref{this}
we finally obtain
\be
\label{mapocha}
K_0 = - \langle (\l A)(\l\g^m W)(\l \g^n W)F_{mn} \rangle = -{1\over 3}K_1,
\ee
which finishes the proof and explicitly relates the tree-level
and one-loop kinematic factors.

\subsection{One- and two-loop}

To obtain a relation between the one- and two-loop kinematic
factors we first need to show that
$\langle (\l A^1)(\l \g^m W^2)(\l \g^n W^3){\cal F}^4_{mn}
\rangle$ is completely symmetric in the labels $(1234)$. This can be
done by noting that
\be
\label{brst}
\langle (\l \g^{mnpqr} \l)(\l A^1)(W^2 \g_{pqr} W^3){\cal F}^4_{mn}
\rangle =
4 \langle (\l A^1)Q\left[(W^2 \g_{pqr} W^3)\right](\l \g^{pqr} W^4)
\rangle.
\ee
Together with the identities 
\[
(\l\g^{mn}\g^{pqr}W^2)(\l\g_{pqr}W^4) 
= -48 (\l\g^{[m}W^2)(\l\g^{n]}W^4)
\]
\[
(\l \g^{mnpqr} \l)(W^2 \g_{pqr} W^3) = 
-96 (\l \g^{[m} W^2)(\l \g^{n]} W^3),
\]
equation \eqref{brst} implies that
$$
\langle (\l A^1)(\l \g^m W^4)(\l \g^n W^2){\cal F}^3_{mn}\rangle +
\langle (\l A^1)(\l \g^m W^3)(\l \g^n W^4){\cal F}^2_{mn}\rangle =
$$
\be\label{cont}
= 2 \langle  (\l A^1)(\l \g^m W^2)(\l \g^n W^3){\cal F}^4_{mn}
\rangle.
\ee
From \eqref{cont} it follows that,
\be\label{dem}
K_{\rm 1-loop} = 3 \langle (\l A^1)(\l \g^m W^2)(\l \g^n W^3){\cal F}^4_{mn}\rangle.
\ee
Furthermore, the independence of which vertex operator
we choose to be non-integrated  implies total
symmetry
of $\langle (\l A^1)(\l \g^m W^2)(\l \g^n W^3){\cal F}^4_{mn}
\rangle$ in the labels $(1234)$.

Now we can relate the one- and two-loop kinematic factors by noting that
$$
(\l \g^{mnpqr}\l){\cal F}^1_{mn}{\cal F}^2_{pq}{\cal F}^3_{rs}
(\l \g^s W^4) =
-4 Q\left[
(\l \g^r\g^{mn} W^2)(\l \g^s W^4){\cal F}^1_{mn}{\cal F}^3_{rs}
\right] 
$$
\be\label{note}
- 8 ik^1_m (\l\g_n W^1)(\l\g^r\g^{mn}W^2)(\l\g^s W^4){\cal F}^3_{rs},
\ee
where the pure spinor constraint $(\l\g^m\l)=0$ and the
identity $\eta_{mn}\g^m_{\a(\b}\g^n_{\g\delta)}=0$ must be used to show the
vanishing of terms containing factors of $(\l\g^m)_{\a}(\l\g_m)_{\b}$. 
Furthermore,
as BRST-exact terms  decouple from 
pure spinor correlations $\langle{\ldots}\rangle$,
equation \eqref{note} implies
\be\label{inte}
\langle
(\l \g^{mnpqr}\l){\cal F}^1_{mn}{\cal F}^2_{pq}{\cal F}^3_{rs}
(\l \g^s W^4)\rangle =
+16 ik^1_m \langle 
(\l\g^r W^1)(\l\g^m W^2)(\l\g^s W^4){\cal F}^3_{rs},
\rangle,
\ee
where we have used $k^1_m(\l\g_n W^1)(\l\g^r\g^{mn}W^2) =
-2 k^1_m(\l\g^r W^1)(\l\g^m W^2)$, which is valid when
the equation of motion $k^1_m(\g^m W^1)_{\a}=0$ is satisfied.

Using $(\l\g_m W^2) = QA^2_m - ik^2_m(\l A^2)$ and  
$\langle (\l\g^r W^1)Q(A_2^m)(\l\g^s W^4)
{\cal F}^3_{rs}\rangle = 0$ we arrive at the following pure
spinor superspace identity
\be\label{mardelpi}
\langle
(\l \g^{mnpqr}\l){\cal F}^1_{mn}{\cal F}^2_{pq}{\cal F}^3_{rs}
(\l \g^s W^4)\rangle =
- 16 (k^1\cdot k^2) \langle 
(\l A^2)(\l\g^r W^1)(\l\g^s W^4){\cal F}^3_{rs}
\rangle
\ee
Multiplying \eqref{mardelpi} by $\Delta(1,3)\Delta(2,4)$ and
summing over permutations leads to the following identity,
\be\label{nt}
K_2 = {32 \over 3}K_1 \left[(u-t)\Delta(1,2)\Delta(3,4) 
+(s-t)\Delta(1,3)\Delta(2,4) + (s-u)\Delta(1,4)\Delta(2,3)
\right],
\ee
where we used \eqref{dem} and the standard Mandelstam variables
$s=-2(k^1\cdot k^2)$, $t=-2(k^1\cdot k^4)$,  $u=-2(k^2\cdot k^4)$.

In view of the results from the next section, \eqref{nt} not only provides 
a simple proof of two-loop 
equivalence with the (bosonic) RNS result of \cite{dhoker} but it also
automatically implies the knowledge of the full amplitude, including
fermionic external states (which has not been computed in the RNS yet).

\section{The complete supersymmetric kinematic factors at tree-level,
one- and two-loops in components}
\label{complete_kin}

In this section the usefulness and the simplifying power of the pure 
spinor formalism become manifest. Using either the RNS or GS formalism,
the computation of the kinematic factor for all possible external state
combination allowed by supersymmetry for the massless four-point amplitude
at tree-level, one-loop and two-loop order would be a daunting task. In
fact, after more than two decades of effort, the kinematic factors
for fermionic states have never been explicitly computed at two-loops.
Using the pure spinor formalism derivation of the identities in section
\ref{ucla} this task becomes almost trivial, as shown below.

To obtain the complete supersymmetric kinematic factors for
the massless four-point amplitudes at tree-level, one- and two-loops
all one needs to do is to use the method of appendix \ref{chap_t}
to evaluate the pure spinor superspace expression of
\be
\label{kint}
K_0 = \half k_1^mk_2^n \langle (\l A^1)(\l A^2)(\l A^3){\cal F}^4_{mn}\rangle
- (k^1\cdot k^3)\langle A^1_m (\l A^2)(\l A^3)(\l\g^m W^4)\rangle + (1\leftrightarrow 2).
\ee
The first term doesn't contribute in the computation of $K_0(f_1f_2f_3f_4)\equiv K_0^{\rm 4F}$,
while the second leads to\footnote{I acknowledge the use of GAMMA \cite{ulf} and specially
of FORM \cite{FORM}\cite{tform} in these computations.}
$$
K_0^{\rm 4F} = -{1\over 9}(k^1\cdot k^3)\langle (\l\g^a \t)(\l\g^b \t)(\l\g^c \chi^4)
(\chi^3\g^b\t)(\t\g^c\chi^1)(\chi^2\g^a \t)\rangle +(1\leftrightarrow 2),
$$
$$
 = {1\over 5760}\left[(\chi^1\g^m\chi^2)(\chi^3\g_m\chi^4)\left[
(k^2\cdot k^3)-(k^1\cdot k^3)
\right]
-{1\over 12}(k^3\cdot k^4)(\chi^1\g^{mnp}\chi^2)(\chi^3\g_{mnp}\chi^4)
\right].
$$
Using the following Fierz identity 
$$
(\chi^1\g^{mnp}\chi^2)(\chi^3\g_{mnp}\chi^4)= 24(\chi^1\g^m\chi^3)(\chi^2\g_m\chi^4)
-12(\chi^1\g^m\chi^2)(\chi^3\g_m\chi^4),
$$ 
the four-fermion kinematic factor can be conveniently rewritten as
\be
\label{qf}
K_0^{\rm 4F} = -{1\over 2880}\left[
(k^1\cdot k^3)(\chi^1\g^m\chi^2)(\chi^3\g_m\chi^4)+
(k^3\cdot k^4)(\chi^1\g^m\chi^3)(\chi^2\g_m\chi^4)
\right].
\ee
Both terms of \eqref{kint} contribute in the $K_0^{\rm 2B2F} \equiv K_0(f_1f_2b_3b_4)$
kinematic factor,
$$
K_0^{\rm 2B2F} =
-{1\over 36} k^m_1k^n_2 F^4_{mn} e^3_p \langle
(\l\g^t \t)(\l\g^u\t)(\l\g^p\t)(\t\g_t\chi^1)(\chi^2\g_u\t)
\rangle
$$
$$
-{1\over 24}(k^1\cdot k^3)F^4_{mn} e^3_p \langle
(\l\g^t \t)(\l\g^p\t)(\l\g^q\g^{mn}\t)(\t\g_q\chi^1)(\chi^2\g_t\t)
\rangle + (1\leftrightarrow 2)
$$
\be
\label{lado}
= {1\over 5760}F^4_{mn}e^3_p\left[
k^m_1k_2^n(\chi^1\g^p\chi^2) + \half(k^1\cdot k^3) (\chi^1\g^{mn}\g^{p}\chi^2)
\right] + (1\leftrightarrow 2)
\ee
It is worth noticing that the explicit computation of $K_0^{\rm 2B2F}$ becomes 
easier if we use the identity \eqref{mapocha} with a convenient choice for the labels
in the right hand side, namely
$K_0 = -\langle (\l A^1)(\l\g^m W^3)(\l\g^n W^4){\cal F}^2_{mn}\rangle$, because
now one can check that only one term contributes
$$
K_0^{\rm 2B2F} = {1\over 24}\langle
(\l\g^p \t)(\l\g^{[m|}\gamma^{rs}\t)(\l\g^{|n]}\gamma^{tu}\t)(\t\g_p\chi^1)(\chi^2\g_n\t)
\rangle k^2_m F^3_{rs}F^4_{tu}
$$
\be
\label{outro}
=  {1\over 5760}F^3_{mn}F^4_{rs}\left[
- i(\chi^1\g^r \chi^2)\eta^{sm}  k_2^n 
+ {i\over 2} (\chi^1\g^{mnr} \chi^2)k_2^s
\right] 
+ (3\leftrightarrow 4).
\ee
One can verify that \eqref{lado} and \eqref{outro} are in fact 
equal and equivalent to the RNS
result (see for example \cite{siegel_lee}).
This equality can also be regarded as a check of identity \eqref{mapocha}, which
is reassuring.

The computation of $K_0^{\rm 4B}$ is straightforward (and can also be deduced
from the one-loop result of \cite{mafra_one}). One can in fact check that
$$
K_0^{\rm 4B} = {1\over 5760}\left[
- \half (e^1\cdot e^3)(e^2\cdot e^4)ts
- \half (e^1\cdot e^4)(e^2\cdot e^3)us
- \half (e^1\cdot e^2)(e^3\cdot e^4)tu \right.
$$
$$
+(k^4\cdot e^1)(k^2\cdot e^3)(e^2\cdot e^4)s+
(k^3\cdot e^2)(k^1\cdot e^4)(e^1\cdot e^3)s
$$
$$
+(k^3\cdot e^1)(k^2\cdot e^4)(e^2\cdot e^3)s+
(k^4\cdot e^2)(k^1\cdot e^3)(e^1\cdot e^4)s
$$
$$
+(k^1\cdot e^2)(k^3\cdot e^4)(e^1\cdot e^3)t+
(k^4\cdot e^3)(k^2\cdot e^1)(e^2\cdot e^4)t
$$
$$
+(k^4\cdot e^2)(k^3\cdot e^1)(e^3\cdot e^4)t+
(k^1\cdot e^3)(k^2\cdot e^4)(e^1\cdot e^2)t
$$
$$
+(k^2\cdot e^1)(k^3\cdot e^4)(e^2\cdot e^3)u+
(k^4\cdot e^3)(k^1\cdot e^2)(e^1\cdot e^4)u
$$
$$
\left.
+ (k^4\cdot e^1)(k^3\cdot e^2)(e^3\cdot e^4)u+
(k^2\cdot e^3)(k^1\cdot e^4)(e^1\cdot e^2)u {\phantom\half}\right]
$$
\be
\label{final}
 = {1\over 2880} t_8^{m_1n_1m_2n_2m_3n_3m_4n_4}
 F^1_{m_1n_1}F^2_{m_2n_2}F^3_{m_3n_3}F^4_{m_4n_4},
\ee
where we used the $t_8$ tensor definition 
of \cite{gswI}\cite{gswII}.

\section{Anomaly kinematic factor}
\label{anomaly_sec}

Type-I superstring theory is defined in ten dimensional space and contains
gluinos of only one chirality. So it could be plagued by an anomaly which would
reveal itself as the failure 
of the massless six-point amplitude to be gauge invariant, therefore making
the theory inconsistent at the quantum level.
One could even expect type-I theory to be anomalous, considering the fact that
$N=1$ super-Yang-Mills in $D=10$ definitely has an anomaly and it is recovered
as the low-energy limit of the type-I superstring. But Green and Schwarz showed
in a seminal 1984 paper that superstring theory is anomaly-free \cite{green_anomalia}, setting free
the spark which lighted the wondrous fire of the so-called First Superstring Revolution.

One way to understand the absence of the anomaly is to note that to compute
the massless six-point amplitude in type-I superstring theory one has to sum over three
different manifolds. That is because at genus one the worldsheet can be the planar (or non-planar) cylinder
or the M\"obius strip. So there is a chance of getting a vanishing gauge transformation
if the sum of the non-vanishing partial amplitudes cancel out. And that is exactly what happens. So
it is the extended nature of the string -- its propagation through space-time
sweeps a Riemann surface -- which allows the cancellation of the gauge variation. And now one
can understand why SYM is anomalous; being the low-energy limit of type-I superstring theory
it looses the information about the size of the string to become a point, and the three
different worldsheet configurations become only one Feynman graph along the way, making
a cancellation impossible.

One can compute the superstring scattering for each worldsheet configuration 
individually and the kinematic factor turns out to be the same for all of them and is
given by
\be
\label{eps5}
K = \e^{m_1n_1{\ldots} m_5n_5} k^1_{m_1}e^1_{n_1}{\ldots}k^5_{m_5}e^5_{n_5}.
\ee

In the following sections it will be shown that the non-minimal 
pure spinor formalism
computation of the hexagon gauge anomaly in the Type-I superstring 
is equivalent to the RNS result of \cite{green_anomalia}. As will be
proved below, the kinematic factor of the hexagon gauge variation can be
written as the pure spinor superspace integral
$$
K = \vev{(\l\g^mW^2)(\l\g^nW^3)(\l\g^pW^4)(W^5\g_{mnp}W^6)},
$$
whose bosonic part will be computed to
demonstrate that it is the well-known $\e_{10}F^5$ RNS result of \eqref{eps5}.

As discussed in \cite{PolchinskiTU}\cite{FMS}, the anomaly can be
computed as a surface term which contributes
at the boundary of moduli space.
The result can be separated in two parts: the kinematic factor
depending only on momenta and polarizations,
and the moduli space part which depends on the worldsheet surface. 
We will treat them separately, first we will compute the kinematic
factor and afterwards we will say a few words about the moduli
space part.

\subsection{Kinematic factor computation}
\label{anoma_exp}

In the type-I superstring theory with gauge group SO(N),
the massless open string six-point 
one-loop amplitude is given by
\be
\label{sixpt}
{\cal A} = \sum_{top=P,NP,N}
G_{top}\int_0^{\infty} dt \langle{ {\cal N} \int dw b(w)
    (\l A_1)\prod_{r=2}^6 \int dz_r U_r (z_r)}\rangle
\ee
where 
$P,NP,N$
denotes the three possible
different world-sheet topologies, each of which has a different
group-factor $G_{top}$ \cite{gswII}. When 
all particles are attached to one boundary, we have a cylinder with
$G_P=N\tr{(t^{a_1}t^{a_2}t^{a_3}t^{a_4}t^{a_5}t^{a_6})}$.
When particles are attached to both
boundaries, the diagram is a non-planar 
cylinder, where $G_{NP}=\tr{(t^{a_1}t^{a_2})}\tr{(t^{a_3}t^{a_4}t^{a_5}t^{a_6})}$.
And finally, there is the non-orientable M\"obius strip where
$G_{N}=-\tr{(t^{a_1}t^{a_2}t^{a_3}t^{a_4}t^{a_5}t^{a_6})}$.

We will be interested in the amplitude when all external states are massless
gluons with polarization $e^r_m$ {\it i.e.}, $a^r_m(x) = e_m^r e^{ik\cdot x}$, 
where $m=0,{\ldots} 9$ is the
space-time vector index and $r$ is the particle label
\footnote{We will omit the adjoint gauge group index
from the polarizations and field-strengths for the rest of this section.}.
To probe the anomaly, 
one can compute \eqref{sixpt} and substitute one of the external polarizations
for its respective momentum. However,
instead of first computing the six-point amplitude and substituting $e_m\rightarrow k_m$
in the answer, we will first make the gauge transformation
in \eqref{sixpt} and then compute the 
resulting correlation function. 
This will give us the anomaly kinematic factor directly.

Under the super-Yang-Mills gauge transformation \eqref{gaugetr}
\be\label{gauges}
\d A_\a = D_\a \Omega, \quad \d A_m = \p_m \Omega,
\ee
the integrated vertex operator $\int dz U$ changes by the surface term
$\int dz \d U = \int dz \p \Omega$, and the unintegrated vertex operator
changes by the BRST-trivial quantity $\d(\l A) = \l^\a D_\a \Omega = Q\Omega.$
Choosing $\Omega(x,\t) =e^{ik\cdot x}$ has the same effect as changing $e^m \to k^m$,
which is the desired gauge transformation to probe the anomaly.

To compute the gauge anomaly, it will be convenient to choose
the gauge transformation to act on the polarization $e_m^1$ in the
unintegrated vertex operator, so that
the gauge variation of \eqref{sixpt} is 
\be
\label{variation}
\delta {\cal A} = \sum_{top=P,N,NP}G_{top}\int_0^{\infty} dt 
                  \langle {\cal N} \int dw b(w)
                  (Q\Omega(z_1))\prod_{r=2}^6 \int dz_r U_r(z_r) \rangle.
\ee
Integrating $Q$ by parts inside the 
correlation function will only get
a contribution from the BRST variation of the $b$-ghost,
which is
a derivative with respect to the modulus \cite{dhoker_geom}\cite{FMS}. 
This is due to the fact that $\{Q, b\} = T$ together with
the definition of the Beltrami differential as parametrizing the violation
of the conformal gauge under a variation of the metric tensor
\[
\d g = \d g_{z{\bar z}}dzd{\bar z} + \d g_{zz}dzdz
\]
where if $\tau$ is the modulus parameter then
\[
\d g_{zz} = \mu_z^{\phantom{z}{\bar z}} g_{z{\bar z}}\d\tau.
\]
In this way, having the insertion of $\int d^2z T_{z{\bar z}}\mu_z^{\phantom{z}{\bar z}}$
in a correlation function is equivalent to derive it with respect to the modulus because
\[
\frac{\d S}{\d \tau} = \frac{\d S}{\d g^{zz}}\frac{\d g^{zz}}{\d \tau} = \int d^2z T\cdot \mu.
\]
So \eqref{variation} becomes
\begin{align}
\label{vartwo}
\d{\cal A} = & -\sum_{top}G_{top} \int_0^{\infty} dt {d\over dt}
                    \langle\Omega(z_1) {\cal N} \prod_{r=2}^6
\int dz_r U_r(z_r)\rangle \\
\label{modul}
      \equiv & - K\sum_{top}G_{top} \Big[
	         B_{top}(\infty)-B_{top}(0)
		 \Big], 
\end{align}	   
where the moduli space part of the anomaly is encoded in the 
function  
$$
B_{top}(t) \equiv \int_0^{t} dz_6\int_0^{z_6} 
dz_5\int_0^{z_5} dz_4\int_0^{z_4} dz_3
                  \int_0^{z_3} dz_2 \,
                  \langle \prod_{r=1}^6 :e^{ik_r\cdot x_r}:\rangle_{top},
$$
and $K=\langle{{\cal N}U_2U_3U_4U_5U_6}\rangle$. 
From \eqref{vartwo}, it is clear that the
anomaly comes from the boundary of moduli space.

To compute the kinematic factor $K$, 
observe that there is an unique way to absorb the 16 zero 
modes of $d_{\alpha}$, 11 of $s^{\a}$ and 11 of $r_{\a}$. 
The regularization factor ${\cal N}$
must provide 11 $d_{\alpha}$,
11 $s^{\alpha}$ and 11 $r_{\a}$ zero modes. The five remaining $d_{\a}$ zero modes
must come from the external vertices\footnote{It follows from this
zero mode counting that the anomaly trivially vanishes for amplitudes with less
than six external massless particles.}
through $(d {\cal W})^5$. As in the computations of the previous section,
the kinematic factor is thus given by a pure spinor superspace
integral involving 3 $\l$'s
and 5 $W$'s, as can be easily verified by integrating all the zero mode measures
except $[d\l],[d\lb]$ and $[dr]$.
To find out how the indices are contracted in $K$,
choose the reference frame where only $\l^+ \neq 0$. Then one can easily check
that the unique $U(1)\times SU(5)$-invariant contraction is 
$$
K= \vev{ (\l^+)^3\e_{abcde}W_2^aW_3^bW_4^cW_5^dW_6^e
   },
$$
which in SO(10)-covariant notation translates into
\be
\label{fator}
 K = \langle (\l\g^m W_2)(\l\g^n W_3)(\l\g^p W_4)(W_5\g_{mnp} W_6)\rangle.
\ee

Now one needs to check if the pure spinor expression \eqref{fator}
reproduces the known $\e_{10}F^5$ contraction computed with the
RNS formalism \cite{green_anomalia}.

When all external states are gluons, there is
only one possibility to obtain the required five $\t$'s from
the pure spinor measure $\vev{\l^3\t^5}$. Using the superfield 
expansions from appendix \ref{sym_ap} we see that each
superfield $W^{\a}(\t)$ must contribute one $\t$ through
the term $-{1\over 4}(\g^{mn}\t)^{\a}F_{mn}$. 
Therefore the kinematic factor \eqref{fator} is equal to
\be
\label{kinem}
= -\frac{1}{1024}
\vev{ (\l\g^p\g^{m_2n_2}\t)(\l\g^q\g^{m_3n_3}\t)(\l\g^r\g^{m_4n_4}\t)
          (\t\gamma^{m_5n_5}\g_{pqr}\g^{m_6n_6}\t)}F^2_{m_2n_2}{\ldots} 
	  F^6_{m_6n_6}.
\ee
We will now demonstrate the equivalence with the RNS anomaly result of 
\cite{green_anomalia} by proving that 
\be
\label{impor}
\langle (\l\g^p\g^{m_1n_1}\t)(\l\g^q\g^{m_2n_2}\t)(\l\g^r\g^{m_3n_3}\t)
     (\t\g^{m_4n_4}\g_{pqr}\g^{m_5n_5}\t) \rangle
     = {1\over 45} \e^{m_1n_1{\ldots}m_5n_5}.
\ee

We will first
show that the correlation in \eqref{impor} is proportional
to $\e_{10}$ by checking its behavior
under a parity transformation.
Using the language of \cite{nathan_multiloop}, we can rewrite \eqref{impor} as
\be
\label{compl}
(T^{-1})^{(\a\b\g)[\rho_1\rho_2\rho_3\rho_4\rho_5]}
T_{(\a\b\g)[\d_1\d_2\d_3\d_4\d_5]}(\g^{m_1n_1})^{\d_1}_{\,\,\rho_1}
(\g^{m_2n_2})^{\d_2}_{\,\,\rho_2}(\g^{m_3n_3})^{\d_3}_{\,\,\rho_3}
(\g^{m_4n_4})^{\d_4}_{\,\,\rho_4}(\g^{m_5n_5})^{\d_5}_{\,\,\rho_5},
\ee
where $T$ and $T^{-1}$ are defined by
\be
\label{deftt}
(T^{-1})^{(\a_1\a_2\a_3)[\d_1\d_2\d_3\d_4\d_5]}=
 (\g^m)^{\a_1\d_1}(\g^n)^{\a_2\d_2}(\g^p)^{\a_3\d_3}(\g_{mnp})^{\d_4\d_5}
\ee
$$
T_{(\a_1\a_2\a_3)[\d_1\d_2\d_3\d_4\d_5]} = 
 \g^m_{\a_1\d_1}\g^n_{\a_2\d_2}\g^p_{\a_3\d_3}(\g_{mnp})_{\d_4\d_5},
$$
and the $\a$-indices are symmetric and gamma matrix traceless, and the
$\d$-indices are antisymmetric. Since
a parity transformation has the effect of changing
a Weyl spinor $\psi^{\a}$ to an anti-Weyl spinor $\psi_{\a}$,  
it follows from the definitions of \eqref{deftt} that a parity transformation 
exchanges $T \leftrightarrow T^{-1}$. Furthermore, since a
parity transformation also changes
$$
(\g^{mn})^{\d}_{\,\,\rho}\rightarrow (\g^{mn})_{\d}^{\,\,\rho} = 
- (\g^{mn})_{\,\,\d}^{\rho},
$$
it readily follows that the kinematic factor \eqref{compl}
is odd under parity, 
so it is proportional to $\e_{10}$. Finally, 
the proportionality constant of ${1\over {45}}$ in \eqref{impor} can be 
explicitly computed using the identities
listed in Appendix \ref{sym_ap}. 

\subsection{The evaluation of the moduli space part}

In this section we compute the limits of $B_{top}(t)$ when the topology of 
the open string world-sheet is a cylinder, where
$B_{top}(t)$ is given by
\be
\label{topological}
B_{top}(t) \equiv \int_0^{\t} dw_5\int_0^{w_5} dw_4\int_0^{w_4} dw_3\int_0^{w_3} dw_2
                  \int_0^{w_2} dw_1 \,
                  \langle \prod_{r=1}^6 :e^{ik_r\cdot X_r}:\rangle_{top},
\ee
and $\t=it$.
Note that to obtain \eqref{topological} we did not need to
introduce any regularization factor such as the Pauli-Villars 
used in \cite{gswII}.
Furthermore, in the 
RNS computations of \cite{gswII} the fact 
that the anomaly is an effect from the boundary of moduli space is not evident, and was 
first pointed
out in \cite{FMS}. This was later verified to be the case in \cite{PolchinskiTU} and
\cite{liu}. Using the non-minimal pure spinor formalism it is clear
from \eqref{modul} that
the anomaly comes from the boundary of moduli space. 
Although reference \cite{liu} uses a different formalism, the
evaluation of the moduli space limits of \eqref{topological} require almost identical
manipulations as the ones described there, so this subsection is heavily based
on it. The final result is obviously the same, concluding the proof that
the non-minimal pure spinor formalism analysis of the gauge anomaly in 
type-I superstring is equivalent to the RNS.

When the topology is a cylinder (P) we have \cite{liu}
$$
\langle 
	\prod_{r=1}^6 :e^{ik_r\cdot X_r}:
\rangle_{P} \sim 
   {1\over t^5} \prod_{i< j}\Big[\psi_C(w_{ij}|\tau)\Big]^{k_i\cdot k_j},
$$
where
$$
\psi_C(w_{ij}|\tau) = -ie^{(i\pi w_{ij}^2/\tau)} 
                      {\vartheta_1(w_{ij}|\tau)\over \eta^3(\tau)},
$$
and 
$$
\vartheta_1(\nu|\tau) = 2q^{1/8}\sin(\pi\nu)\prod_{n=1}^{\infty}
                       (1-q^n)(1-q^ne^{2\pi i\nu})
                       (1-q^ne^{-2\pi i\nu}),
$$
$$
\eta(\tau) = q^{1/24}\prod_{n=1}^{\infty}(1-q^n),
$$
with $q=\exp{(2\pi i \tau)}$. Note that
\be
\label{prev}
    \lim_{t\to\infty}\psi_C(w_{ij}|\tau) = 2\sin(\pi w_{ij}),
\ee
implies the vanishing of $\lim_{t\to\infty}B(t)$. To obtain the limit when $t\to 0$
we make the following change of variables $\tilde{w}={w\over\tau}$ and
$\tilde{\tau}\equiv i \tilde{t}=
-{1\over \tau}$ to get 
$$
B_P(t)\equiv 
\int_0^1 d\tilde{w}_5\int_0^{\tilde{w}_5} 
d\tilde{w}_4\int_0^{\tilde{w}_4} d\tilde{w}_3
\int_0^{\tilde{w}_3} d\tilde{w}_2\int_0^{\tilde{w}_2} d\tilde{w}_1
\prod_{i\le j}\Big[\psi_C(\tilde{w}_{ij}|\tilde{\tau})\Big]^{k_i\cdot k_j},
$$
where
$$
\psi_C(\tilde{w}_{ij}|\tilde{\tau})=-i\exp{(-i\pi\tilde{w}_{ij}/\tilde{\tau})}
{ 
\vartheta_1(-{\tilde{w}_{ij}\over \tilde{\tau}}|-{1\over\tilde{\tau}})
\over
\eta^3(-{1\over\tilde{\tau}})
}
$$
\be
\label{expl}
= {i\over \tilde{\tau}} {\vartheta_1(\tilde{w}_{ij}|\tilde{\tau})\over
\eta^3(\tilde{\tau})}.
\ee
To obtain \eqref{expl} we used the well-known modular transformation properties,
$$
\vartheta_1(-{\nu \over \tau}|-{1\over \tau}) = -i\sqrt{-i\tau}\exp{(i\pi\nu^2/\tau)}
\vartheta_1(-\nu|\tau)
$$
$$
\eta(-{1\over\tau})=\sqrt{-i\tau}\eta(\tau),
$$
and $\vartheta_1(-\nu|\tau)=-\vartheta_1(\nu|\tau)$. Note that the factor of
$i/\tilde{\tau}$ in \eqref{expl} will not contribute because of momentum conservation
and masslessness of the external particles,
$$
\prod_{i< j} \left({i\over \tilde{\tau}}\right)^{k_i\cdot k_j} =
\left({i\over \tilde{\tau}}\right)^{\sum_{i< j}k_i\cdot k_j} = 1.
$$
Noting that the limit is now $\tilde{t}\to\infty$, we can use the previous
result \eqref{prev} to finally obtain 
\be
\label{partedois}
\lim_{t\to 0}B_P(t) = 
\int_0^1 d\tilde{w}_5\int_0^{\tilde{w}_5} 
d\tilde{w}_4\int_0^{\tilde{w}_4} d\tilde{w}_3
\int_0^{\tilde{w}_3} d\tilde{w}_2\int_0^{\tilde{w}_2} d\tilde{w}_1
\prod_{i<j}\Big[2\sin(\pi w_{ij})\Big]^{k_i\cdot k_j}.
\ee

The final expression for the gauge variation of the amplitude in the planar cylinder
is finally
\be
\label{fim}
\delta{\cal A} =  K G_P
\int_0^1 d\tilde{w}_5\int_0^{\tilde{w}_5} 
d\tilde{w}_4\int_0^{\tilde{w}_4} d\tilde{w}_3
\int_0^{\tilde{w}_3} d\tilde{w}_2\int_0^{\tilde{w}_2} d\tilde{w}_1
\prod_{i<j}\Big[\sin(\pi \tilde{w}_{ij})\Big]^{k_i\cdot k_j},
\ee
where $K$ is given by \eqref{anomaly}. When the world-sheet is a M\"obius strip the
kinematical factor is exactly the same. The limits of \eqref{topological}
in the boundary of moduli space are however different.
We will not repeat the computation here and merely
quote the result of \cite{gswII}\cite{liu} that $\lim_{t\to 0}B_{N}(t)=
32\lim_{t\to 0} B_P(t)$. From this it follows that the sum of the anomalies
for the planar and non-orientable diagrams vanish if $N=32$, i.e., if the
gauge group is SO(32). For the non-planar cylinder one can show that
\eqref{topological} vanishes \cite{gswII}. 

\subsection{The fermionic expansion}
\label{total_deriv}

From the explicit bosonic computation  
of the anomaly kinematic factor of section \ref{anoma_exp} we know
that it is a total derivative, because
\[
\e_{10}^{mnpqrstuvx}F_{mn}F_{pq}F_{rs}F_{tu}F_{vx} = 
2 \p_m \big[\e_{10}^{mnpqrstuvx}A_n F_{pq}F_{rs}F_{tu}F_{vx}\big].
\]
What about the fermionic components? Based on supersymmetry we
expect them to be total derivatives also, and in this section
we explicitly compute the $2F3B$ components of \eqref{fator} to
show that this is indeed true.

If we choose the particle distribution to be $f_0f_1b_2b_3b_4$
then we obtain for
\[
K = 
\langle 
(\l\g^m W_2)(\l\g^n W_3)(\l\g^p W_4)(W_0\g_{mnp} W_1)
\rangle 
\]
the following distribution of thetas
\begin{center}
\begin{tabular}{|c|c|c|c|c|} \hline
$W^2(\t)$ & $W^3(\t)$ & $W^4(\t)$ & $W^0(\t)$ & $W^1(\t)$ \\ \hline
3 & 1 & 1 & 0 & 0 \\ \hline
1 & 1 & 3 & 0 & 0 \\ \hline
1 & 3 & 1 & 0 & 0 \\ \hline
1 & 1 & 1 & 2 & 0 \\ \hline
1 & 1 & 1 & 0 & 2 \\ \hline
\end{tabular}
\end{center}
Using the component expansion of Appendix \ref{sym_ap} and the identity
\[
\psi^\a \chi^\b = \frac{1}{16}\g_m^{\a\b}(\psi \g^m \chi)
+ \frac{1}{96}\g_{mnp}^{\a\b}(\psi \g^{mnp} \chi)
+ \frac{1}{3840}\g_{mnpqr}^{\a\b}(\psi \g^{mnpqr} \chi)
\]
the above table translates to
\[
256 K = 
\]
\[
+ \Big[ \frac{1}{3}\langle 
(\l\g^m \g^{rt}\t)(\l\g^n \g^{m_3n_3}\t)(\l\g^p \g^{m_4n_4}\t)
(\t\g^{tm_2n_2}\t)
\rangle \; (\chi^0\g_{mnp}\chi^1)k^2_r 
\]
\[
+ \frac{1}{3}\langle 
(\l\g^m \g^{rt}\t)(\l\g^n \g^{m_2n_2}\t)(\l\g^p \g^{m_4n_4}\t)
(\t\g^{tm_3n_3}\t)
\rangle\; (\chi^0\g_{mnp}\chi^1)k^3_r
\]
\[
+ \frac{1}{3}\langle 
(\l\g^m \g^{rt}\t)(\l\g^n \g^{m_2n_2}\t)(\l\g^p \g^{m_3n_3}\t)
(\t\g^{tm_4n_4}\t)
\rangle\; (\chi^0\g_{mnp}\chi^1)k^4_r
\]
\[
- \frac{1}{16}\langle 
(\l\g^m \g^{m_2n_2}\t)(\l\g^n \g^{m_3n_3}\t)(\l\g^p \g^{m_4n_4}\t)
(\t\g^{ka}\g_{mnp}\g^r\g_a\t)
\rangle\; k^0_k(\chi^0\g_{r}\chi^1)
\]
\[
+ \frac{1}{16}\langle
(\l\g^m \g^{m_2n_2}\t)(\l\g^n \g^{m_3n_3}\t)(\l\g^p \g^{m_4n_4}\t)
(\t\g^{ka}\g_{mnp}\g^r\g_a\t)
\rangle\; k^1_k(\chi^0\g_{r}\chi^1)
\]
\[
+ \frac{1}{96}\langle 
(\l\g^m \g^{m_2n_2}\t)(\l\g^n \g^{m_3n_3}\t)(\l\g^p \g^{m_4n_4}\t)
(\t\g^{ka}\g_{mnp}\g^{rst}\g_a\t)
\rangle\; k^0_k(\chi^0\g_{rst}\chi^1)
\]
\[
+ \frac{1}{96}\langle 
(\l\g^m \g^{m_2n_2}\t)(\l\g^n \g^{m_3n_3}\t)(\l\g^p \g^{m_4n_4}\t)
(\t\g^{ka}\g_{mnp}\g^{rst}\g_a\t)
\rangle\; k^1_k(\chi^0\g_{rst}\chi^1)
\]
\[
- \frac{1}{3840}\langle 
(\l\g^m \g^{m_2n_2}\t)(\l\g^n \g^{m_3n_3}\t)(\l\g^p \g^{m_4n_4}\t)
(\t\g^{ka}\g_{mnp}\g^{rstuv}\g_a\t)
\rangle\; k^0_k(\chi^0\g_{rstuv}\chi^1)
\]
\[
+ \frac{1}{3840}\langle 
(\l\g^m \g^{m_2n_2}\t)(\l\g^n \g^{m_3n_3}\t)(\l\g^p \g^{m_4n_4}\t)
(\t\g^{ka}\g_{mnp}\g^{rstuv}\g_a\t)
\rangle\; k^1_k(\chi^0\g_{rstuv}\chi^1)\Big]
\]
\be
\label{2f3b}
\times F^2_{m_2n_2}F^3_{m_3n_3}F^4_{m_4n_4}.
\ee
It will be instructive to compute the correlator by parts, separating
them according to the factors of $(\chi^0\g_{r}\chi^1)$, $(\chi^0\g_{rst}\chi^1)$ or
$(\chi^0\g_{rstuv}\chi^1)$.

The terms containing $(\chi^0\g_{r}\chi^1)$ are given by $K_r = $
\[
-\frac{1}{16}(\chi^0\g_{r}\chi^1)(k^0_k - k^1_k)F^2F^3F^4
\langle 
(\l\g^m \g^{m_2n_2}\t)(\l\g^n \g^{m_3n_3}\t)(\l\g^p \g^{m_4n_4}\t)
(\t\g^{ka}\g_{mnp}\g^r\g_a\t)
\rangle 
\]
which can be rewritten using the tensor \eqref{fratres}
computed in section \ref{t8_e10},
\[
K_r = + \frac{1}{16}(\chi^0\g_{r}\chi^1)(k^0_k - k^1_k) 
t_{10}^{arm_2n_2m_3n_3m_4n_4kb}\,\d_b^a
\]
\be
\label{gr}
 = -\frac{1}{360}t_8^{krm_2n_2m_3n_3m_4n_4}
(k^0_k - k^1_k) (\chi^0\g_{r}\chi^1)
F^2_{m_2n_2}F^3_{m_3n_3}F^4_{m_4n_4}.
\ee
To compare \eqref{gr} with the terms proportional to 
$(\chi^0\g_{rst}\chi^1)$ in \eqref{2f3b} it is convenient
to use Dirac's equation such that
\[
k^0_k (\chi^0 \g^{kmn} \chi^1) = - 2 k^0_{[m}(\chi^0\g_{n]}\chi^1),
\quad
k^1_k (\chi^0 \g^{kmn} \chi^1) = + 2 k^1_{[m}(\chi^0\g_{n]}\chi^1)
\]
which allows one 
to rewrite \eqref{gr} as follows
\[
256K_r = + \frac{1}{720}t_8^{mnm_2n_2m_3n_3m_4n_4}
(k^0_k + k^1_k) (\chi^0\g_{kmn}\chi^1)
F^2_{m_2n_2}F^3_{m_3n_3}F^4_{m_4n_4}.
\]
The terms in \eqref{2f3b} proportional to $(\chi^0\g_{rst}\chi^1)$
can be computed similarly and we obtain
\[
256K_{rst} = + \frac{1}{720}t_8^{mnm_2n_2m_3n_3m_4n_4}
(k^2_k + k^3_k + + k^4_k) (\chi^0\g_{kmn}\chi^1)
F^2_{m_2n_2}F^3_{m_3n_3}F^4_{m_4n_4},
\]
and therefore 
\be
\label{parcial1}
256(K_r + K_{rst}) =
 + \frac{1}{720}t_8^{mnm_2n_2m_3n_3m_4n_4}\sum_{i=0}^4 k^i_k
 (\chi^0\g_{kmn}\chi^1)
F^2_{m_2n_2}F^3_{m_3n_3}F^4_{m_4n_4}.
\ee
Another long computation for the terms proportional 
to $(\chi^0\g_{rstuv}\chi^1)$ results in the following
\[
256K_{rstuv} = -\frac{53}{25200}\sum_{i=0}^4 k^i_k
(\chi^0 \g^{km_2n_2m_3n_3m_4n_4}\chi^1)F^2_{m_2n_2}F^3_{m_3n_3}F^4_{m_4n_4},
\]
therefore we have show that the terms containing two fermions and
three bosons in \eqref{2f3b} combine into a total derivative.

\section{$t_8$ and $\e_{10}$ from pure spinor superspace}
\label{t8_e10}

In this section we digress about an interesting identity which
involves both
the $t_8$ and $\e_{10}$ tensors, showing how closely related they are
when obtained from pure spinor superspace integrals.
This is different from computations in the RNS formalism where 
$t_8$ and $\e_{10}$ come from correlation functions with different
spin structures.

Since the one-loop $t_8 F^4$ and $\e_{10} B F^4$
terms are expected to be related by non-linear supersymmetry,
there might be a common superspace origin for the $t_8$ and $\e_{10}$
tensors.
This suggests looking for a BRST-closed pure spinor superspace
integral involving four super-Yang-Mills superfields whose bosonic
part involves both the $t_8$ and $\e_{10}$ tensors. One such BRST-closed 
expression is 
\be
\label{found}
\langle(\l\g^rW^1)(\l\g^sW^2)(\l\g^tW^3)
(\t\g^m\g^n\g_{rst}W^4)\rangle.
\ee
Although \eqref{found} is not spacetime supersymmetric because of the explicit
$\t$, it might be related to a supersymmetric expression
in a constant background where the $N=1$ supergravity superfield
$G_{m\a}$ satisfies $G_{m\a}=\g_{m\a\b}\t^\b + b_{mn} (\g^n\t)_\a$
for constant $b_{mn}$.

When restricted to its purely bosonic part,
\eqref{found}
defines the following 10-dimensional
tensor:
\be
\label{fratres}
t_{10}^{mnm_1n_1m_2n_2m_3n_3m_4n_4} = \vev{ (\l\g^a \g^{m_1n_1}\t)(\l\g^b \g^{m_2n_2}\t)
	          (\l\g^c \g^{m_3n_3}\t)(\t\g^m\g^n\g_{abc}\g^{m_4n_4}\t)}.
\ee
Using $\g^m\g^n = \g^{mn}+\eta^{mn}$ we obtain
\[
t_{10}^{mnm_1n_1m_2n_2m_3n_3m_4n_4} =
+ \langle (\l\g^a \g^{m_1n_1}\t)(\l\g^b \g^{m_2n_2}\t)
	          (\l\g^c \g^{m_3n_3}\t)(\t\g^{mn}\g_{abc}\g^{m_4n_4}\t)\rangle
\]		  
\be
\label{idt}
+\eta^{mn}\langle (\l\g^a \g^{m_1n_1}\t)(\l\g^b \g^{m_2n_2}\t)
	          (\l\g^c \g^{m_3n_3}\t)(\t\g_{abc}\g^{m_4n_4}\t) \rangle.
\ee
And using the identities listed in appendix \ref{sym_ap}, 
one can check that\footnote{The sign
in front of $\e_{10}$ depends on the chirality of $\t$. For an anti-Weyl $\t_{\a}$,
the sign is ``+''.}
\be
\label{ident}
t_{10}^{mnm_1n_1m_2n_2m_3n_3m_4n_4} = 
       -  {2\over 45}\Big[ \eta^{mn}t_8^{m_1n_1m_2n_2m_3n_3m_4n_4}
                      - {1\over 2}\e^{mnm_1n_1m_2n_2m_3n_3m_4n_4}\Big].
\ee

It is also interesting to contrast the similarity between $\e_{10}$ and
$t_8$ when written in terms of the $T$ and $T^{-1}$ tensors:
\[
\epsilon^{mnm_1n_1{\ldots}m_4n_4}  \propto
(T^{-1})^{(\a\b\g)[\rho_0\rho_1\rho_2\rho_3\rho_4]}
T_{(\a\b\g)[\d_0\d_1\d_2\d_3\d_4]}(\g^{mn})^{\d_0}_{\,\,\rho_0}
{\ldots} 
(\g^{m_4n_4})^{\d_4}_{\,\,\rho_4} 
\]
\[
t_8^{m_1n_1{\ldots}m_4n_4}  \propto
(T^{-1})^{(\a\b\g)[\kappa\rho_1\rho_2\rho_3\rho_4]}
T_{(\a\b\g)[\kappa\d_1\d_2\d_3\d_4]}
(\g^{m_1n_1})^{\d_1}_{\,\,\rho_1}{\ldots} 
(\g^{m_4n_4})^{\d_4}_{\,\,\rho_4},
\]
which shows, in a pure spinor superspace language, how one can ``obtain''
the $t_8$ tensor from $\e_{10}$: it is a matter of removing 
$(\g^{mn})^{\d_0}_{\,\,\rho_0}$ and contracting the associated spinorial
indices in $T$ and $T^{-1}$. 
So when using pure spinors, there is a close relation between these
two different-looking tensors.

\section{Massless open string five-point amplitude at one-loop}
\label{fracas}

In this section we will describe some of the ongoing work with Christian
Stahn to obtain
the kinematic factor for the massless five-point amplitude for 
open strings\footnote{This is the status of the computation as of September 2008.
It was later completely finished and the agreement with bosonic computations
obtained with other formalisms demonstrated \cite{5pt}.}.

\subsection{Modified NMPS computation}

First of all I will show how one can get the correct answers for
the bosonic components using the following substitution rule
\be
\label{bad}
\int d^2z b\cdot\mu V^1 \longrightarrow (d^0\t)\int dz U^1(z)
\ee
where $d_{\a}^0$ means only the zero mode.
Using \eqref{bad} the amplitude prescription of
\be
\label{unmod}
{\cal A} = \int dt \langle
{\cal N} (\int b\cdot \mu) V^1 U^2  U^3 U^4 U^5 \rangle
\ee
becomes
\be
\label{modif}
{\cal A} = \int dt \langle
{\cal N}  (d\t) U^1 U^2  U^3 U^4 U^5 \rangle
\ee
where the regularization factor is given by \eqref{reg_loop}
\[
{\cal N} = \exp\big[-(\l\lb) - (w{\bar w}) -(r\theta) + (sd)\big].
\]
To saturate the 16 zero-modes of $d_{\a}$ and 11 of $s^{\a}$ which are present at
the one-loop level there is only one possibility using the
modified prescription of \eqref{modif}. The regularization factor ${\cal N}$ provides 
11 zero-modes of
$s^{\a}$ and $d_{\a}$. Note that there is already a zero mode in the factor $(d^0 \t)$ so
the remaining four $d_{\a}$ zero modes can come from the five external vertices through
two different ways, depending on which term of the integrated vertex is chosen to act
by its OPE over the others. The kinematic factor then becomes
\be
\label{inic}
K = \langle (d\t)(d W^1)(d W^2)(d W^3)(d W^4)\(A^5_m\Pi^m + {\hat d}_{\a}W_5^{\a}\)\rangle
+ \text{perm}(2345)
\ee
\[
= K^a + K^b,
\]
where, following the notation of \cite{siegel_lee}, we defined
\be
\label{Kb}
K^a = \sum_{i=1}^4 G^{\t}(z_5,z_2)\langle (\l\g^m\t)(\l\g^nW^1)(\l\g^p \g^{rs}W^5)(W^3\g_{mnp} W^4)F^2_{rs}\rangle
\ee
\be
\label{Ka}
K^b = \sum_{i=1}^4 G^X(z_5,z_i)  k^r_i\langle A^5_r (\l\g^m\t)(\l\g^nW^1)(\l\g^p W^2)
(W^3\g_{mnp} W^4)\rangle
\ee
To go from \eqref{inic} to arrive at \eqref{Kb} and \eqref{Ka} we used the following
substitution
\be\label{trickt}
d_{\d_1}d_{\d_2}d_{\d_3}d_{\d_4}d_{\d_5} \rightarrow 
(\l\g^m)_{[\d_1}(\l\g^n)_{\d_2}(\l\g^p)_{\d_3}(\g_{mnp})_{\d_4\d_5]},
\ee
which can be justified by a ghost number conservation argument
using the integrations over the measures $[dw],[d\wb]$ and
$[ds]$, as explained in section \ref{multiloop_NMPS}.

To evaluate the bosonic components of \eqref{Ka} one notices that the superfield
$A^5_r$ doesn't contribute any $\t$'s, so it can be moved out of the pure spinor
brackets $\langle \; \rangle$. By making use of identity \eqref{pil} we get
\[
K^b = 8 \sum_{i=1}^4 G^X(z_5,z_i)  (k^i\cdot e^5)\langle  (\l A^1)(\l\g^mW^2)(\l\g^n W^3){\cal F}_{mn}^4
\rangle,
\]
which is equivalent to equation (5.5.2) of \cite{siegel_lee}.

In the kinematic factor $K^b$ we will restrict our attention to the
term proportional to $G^{\t}(z_5,z_2)$, which is antisymmetric in its 
arguments. Consequently,
\[
K^a = G^{\t}(z_5,z_2)\Big[
\langle (\l\g^m\t)(\l\g^nW^1)(\l\g^p \g^{rs}W^5)(W^3\g_{mnp} W^4)F^2_{rs}\rangle
- (2\leftrightarrow 5)\Big].
\]
After checking that the following is true for the bosonic components
\[
\langle k^1_t e^1_u (\l\g^m W^3)(\l\g^n W^4)(\l\g^p\g^{rs}W^5)(\t\g^t\g_{mnp}\g^u\t)
{\cal F}_{rs}^2\rangle - (2\leftrightarrow 5) = 0,
\]
one can repeat the steps in the proof of section \ref{crazy_four_sub}
to arrive at the following 
\be
\label{fimKb}
K^a = - G^{\t}(z_5,z_2)\Big[
\langle (D\g_{mnp}A^1)(\l\g^m W^3)(\l\g^n W^4)(\l\g^p \g^{rs}W^5)F^2_{rs}\rangle
- (2\leftrightarrow 5)\Big].
\ee
The computation of the bosonic components of \eqref{fimKb} is straightforward,
\[
K^a = - \frac{1}{256}F^1_{m_1n_1}{\ldots}F^5_{m_5n_5}\Big[
\]
\[
\langle
(\l\g^m \t)(\l\g^n\g^{m_1n_1}\t)(\l\g^p \g^{m_2n_2}\g^{m_5n_5}\t)(\t\g^{m_3n_3}\g_{mnp}\g^{m_4n_4}\t)
\rangle - (2\leftrightarrow 5)\Big]
\]
\begin{align}
= \frac{1}{360}\Big[    & + 2F^1_{mn}F^2_{mp}F^3_{nq}F^4_{qr}F^5_{pr}
       + 2F^1_{mn}F^2_{mp}F^3_{qr}F^4_{nq}F^5_{pr}
       - F^1_{mn}F^2_{mp}F^3_{qr}F^4_{qr}F^5_{np} \nonumber \\
     &  + F^1_{mn}F^2_{pq}F^3_{mn}F^4_{pr}F^5_{qr}
       - 2F^1_{mn}F^2_{pq}F^3_{mp}F^4_{nr}F^5_{qr}
       + 2F^1_{mn}F^2_{pq}F^3_{mr}F^4_{np}F^5_{qr} \nonumber \\
     &  + 2F^1_{mn}F^2_{pq}F^3_{mr}F^4_{pr}F^5_{nq}
       + F^1_{mn}F^2_{pq}F^3_{pr}F^4_{mn}F^5_{qr}
       + 2F^1_{mn}F^2_{pq}F^3_{pr}F^4_{mr}F^5_{nq}
       \Big].
\end{align}
This result is equivalent to equation (5.5.1) of \cite{siegel_lee}.
So the modified  prescription of \eqref{modif} is able to quickly
reproduce the same \emph{bosonic} results while being simpler to
obtain than using the ``unmodified'' non-minimal pure spinor prescription
of \ref{multiloop_NMPS}, as we shall see in the next subsection.
However much work remains to be done in order to understand the procedure
described in the previous paragraphs and to relate it to following 
(incomplete as of September 2008)\footnote{It was completed
in February 2009 and agreement was found with RNS/GS/LS \cite{5pt}.}
computation using the ``unmodified''
NMPS formalism.

\subsection{Unmodified NMPS computation}
\label{crazy_5}

Using the unmodified NMPS prescription of \eqref{unmod}, with the b-ghost being
\be\label{bghostf}
b = s^{\a}\p\lb_{\a} + {2\Pi^m(\lb\g_m d)-N_{mn}(\lb\g^{mn}\p\t)-J(\lb\p\t)
-(\lb\p^2\t) \over 4(\lb\l)}
\ee
\[
+{ (\lb\g^{mnp}r)(d\g_{mnp}d+24N_{mn}\Pi_p) \over 192(\lb\l)^2}
-{(r\g^{mnp}r)(\lb\g_m d)N_{np} \over 16(\lb\l)^3}
+{(r\g^{mnp}r)(\lb\g_{pqr}r)N_{mn}N^{qr} \over 128(\lb\l)^4},
\]
the saturation of the 16 $d_{\a}$ and 11 $s^{\a}$ zero modes implies these
four different expressions for the kinematic 
factor\footnote{The hat notation means that the variable acts via its OPE.}.
\be\label{pone}
I_1= {1\over 2}\langle {\Pi^m \over (\l\lb)}(\lb \g_m d) (\l A^1)(dW^2)(dW^3)(dW^4)(dW^5)
\rangle
\ee
\be\label{ptwo}
I_2=-{1\over 16}\langle {(r\g_{mnp}r) \over(\l\lb)^3}(\lb\g^m d)N^{np}
(\l A^1)(dW^2)(dW^3)(dW^4)(dW^5)
\rangle
\ee
\be\label{pthree}
I_3= {1\over 96}\langle {(\lb\g_{mnp}r) \over(\l\lb)^2}(d\g^{mnp} {\hat d})
(\l A^1)(dW^2)(dW^3)(dW^4)(dW^5)
\rangle
\ee
\[
I_4 = \langle
{(\lb\g_{mnp}r) \over 192(\l\lb)^2}(d\g^{mnp}d)(\l A^1)
(dW^2)(dW^3)(dW^4)(A^5_q\Pi^q
+ ({\hat d}W^5) + {1\over 2}N\cdot F^5)\rangle
\]
\be\label{pfour}
+ \text{perm(2345)}
\rangle
\ee
We will restrict out attention to $I_4$, which is the only case where the
b-ghost doesn't act through OPE's.

The term proportional to $G^{\t}(z_5,z_2)$ in \eqref{pfour} is
\[
G^{\t}(z_5,z_2)\langle (\lb\g_{mnp}r)(d\g^{mnp}d)(\l A^1)
(d\g^{rs}W^5)(dW^3)(dW^4){\cal F}^2_{rs} \rangle - (2\leftrightarrow 5),
\]
which becomes
\[
K^a = G^{\t}(z_5,z_2)\langle (\lb\g_{mnp}D)\Big[ (\l A^1)
(\l\g^m\g^{rs} W^5)(\l\g^n W^3)(\l\g^p W^4){\cal F}^2_{rs}\Big]\rangle
- (2\leftrightarrow 5) + (3\leftrightarrow 4),
\]
where we summed over $(3\leftrightarrow 4)$ for reasons which will soon become
clear.
After a long and tedious computation we get
\[
K^a/G^{\t}(z_5,z_2) = \langle(\lb\g_{mnp}D)\big[ (\l A^1){\cal F}^2_{rs}\big]
(\l\g^m\g^{rs} W^5)(\l\g^n W^3)(\l\g^p W^4)\rangle
\]
\[
-48(\l\lb)\langle(\l A^1)(\l\g^{[m}W^3)(\l\g^{n]}W^4){\cal F}^2_{mu}F^5_{nu}\rangle
\]
\[
+ 12 (\l\lb)\langle(\l A^1)(\l\g^m \g^{rs}W^5)(\l\g^n W^4){\cal F}^2_{rs}{\cal F}^3_{mn}\rangle
\]
\[
+ 12 (\l\lb)\langle(\l A^1)(\l\g^m \g^{rs}W^5)(\l\g^n W^3){\cal F}^2_{rs}{\cal F}^4_{mn}\rangle
\]
\[
+ 4\langle(\l\g^{rs}\lb)(\l A^1)(\l\g^m W^3)(\l\g^n W^4){\cal F}^2_{rs}{\cal F}^5_{mn}\rangle
\]
\[
+ 4\langle(\l\g^{rs}\lb)(\l A^1)(\l\g^m W^3)(\l\g^n W^4){\cal F}^5_{rs}{\cal F}^2_{mn}\rangle
\]
\[
- 16\langle(\l\g^{tu}\lb)(\l A^1)(\l\g^{[m} W^3)(\l\g^{n]} W^4){\cal F}^2_{mt}{\cal F}^5_{nu}\rangle
\]
\be
\label{grande}
- (2\leftrightarrow 5) + (3\leftrightarrow 4).
\ee
However the last three lines of \eqref{grande} vanish after antisymmetrization over $[25]$.
The second line can be rewritten more conveniently as
\[
-48(\l\lb)\langle(\l A^1)(\l\g^{[m}W^3)(\l\g^{n]}W^4){\cal F}^2_{mu}F^5_{nu}\rangle =
\frac{1}{2}(\l\lb)
\langle (\l\g^{mnpqr}\l)(\l A^1)(W^3\g_{pqr} W^4){\cal F}^2_{mu}{\cal F}^5_{nu}
\rangle.
\]
Therefore \eqref{grande} becomes, after explicitly summing over $(3\leftrightarrow 4)$,
\[
K^a/G^{\t}(z_5,z_2) = 
\]
\[
+ \langle(\lb\g_{mnp}D)\big[ (\l A^1){\cal F}^2_{rs}\big]
(\l\g^m\g^{rs} W^5)(\l\g^n W^3)(\l\g^p W^4)\rangle
\]
\[
+ \langle(\lb\g_{mnp}D)\big[ (\l A^1){\cal F}^2_{rs}\big]
(\l\g^m\g^{rs} W^5)(\l\g^n W^4)(\l\g^p W^3)\rangle
\]
\[
+ (\l\lb)
\langle (\l\g^{mnpqr}\l)(\l A^1)(W^3\g_{pqr} W^4){\cal F}^2_{mu}{\cal F}^5_{nu}
\rangle
\]
\[
+ 24 (\l\lb)\langle(\l A^1)(\l\g^m \g^{rs}W^5)(\l\g^n W^4){\cal F}^2_{rs}{\cal F}^3_{mn}\rangle
\]
\[
+ 24 (\l\lb)\langle(\l A^1)(\l\g^m \g^{rs}W^5)(\l\g^n W^3){\cal F}^2_{rs}{\cal F}^4_{mn}\rangle
\]
\[
- (2\leftrightarrow 5).
\]
The first two lines can be rewritten using $\g^n_{\a(\b}(\g_n)_{\g\d)}=0$, for example
\be
\label{obter}
(\lb\g_{mnp}D)\big[ (\l A^1){\cal F}^2_{rs}\big](\l\g^m\g^{rs} W^5)(\l\g^n W^3)(\l\g^p W^4) =
\ee
\[
= -D_{\s}\big[(\l A^1){\cal F}^2_{rs}\big](\lb\g^m\g^n\g^p)^{\s}(\l\g^n W^3)
(\l\g^m \g^{rs}W^5)(\l\g^p W^4) 
\]
\[
= (\l\g_n\g^p D)\big[(\l A^1){\cal F}^2_{rs}\big](\lb\g^m\g^n W^3)(\l\g^m\g^{rs} W^5)(\l\g^p W^4)
\]
\[
+ D_{\s}\big[(\l A^1){\cal F}^2_{rs}\big](\lb\g_m\g^n\l)(W^3\g^n\g^p)^{\s}
(\l\g^m\g^{rs}W^5)(\l\g_p W^4)
\]
\be
\label{dif1}
= 2 (\l D)\big[(\l A^1){\cal F}^2_{rs}\big](\lb\g^m\g^n W^3)(\l\g^m\g^{rs} W^5)(\l\g^n W^4)
\ee
\be
\label{dif2}
+ 2 (\l\lb)(W^3\g_m\g_p D)\big[(\l A^1){\cal F}^2_{rs}\big](\l\g^m\g^{rs}W^5)(\l\g^p W^4).
\ee
Using the equation of motion $(\l D)(\l A)=0$ and a few gamma matrix identities we
obtain for \eqref{dif1} (and its permutation over $(3\leftrightarrow 4)$)
\be
\label{corri}
2 \langle(\l D)\big[(\l A^1){\cal F}^2_{rs}\big](\lb\g^m\g^n W^3)
(\l\g^m\g^{rs} W^5)(\l\g^n W^4)\rangle + (3\leftrightarrow 4) =
\ee
\[
= - 8 k^2_r\langle(\l A^1)(\l\g^m W^2)(\l\g^r W^5)\big[
(\lb\g_{mn} W^3)(\l\g^n W^4) + (\lb\g_{mn} W^4)(\l\g^n W^3)
\big] \rangle
\]
\[
= - 4 k^2_r\langle(\l\g_{ab}\lb)
(\l A^1)(\l\g_m W^2)(\l\g^r W^5)(W^3 \g^{mab} W^4)\rangle,
\]
where to arrive at the last line we used
\[
W_3^\a W_4^\b = \frac{1}{16}(W^3 \g^m W^4)\g_m^{\a\b} +
\frac{1}{96}(W^3 \g^{mnp} W^4)\g_{mnp}^{\a\b} +
\frac{1}{3840}(W^3 \g^{mnpqr} W^4)\g_{mnpqr}^{\a\b}.
\]
To rewrite \eqref{dif2} in a more convenient way we use $D_{\a}(\l A^1) = -(\l D)A^1
+ (\l\g^q)_{\a}A_q$ and \eqref{eqW},
\be
\label{af}
2 (\l\lb)(W^3\g_m\g_p D)\big[(\l A^1){\cal F}^2_{rs}\big](\l\g^m\g^{rs}W^5)(\l\g^p W^4) =
\ee
\[
= - 2 (\l\lb)(W^3\g_m\g_p)^{\a}(Q A^1_{\a}){\cal F}^2_{rs}(\l\g^m\g^{rs}W^5)(\l\g^p W^4)
\]
\[
+ 2(\l\lb)A^1_q(W^3\g^m\g^p\g^q\l){\cal F}^2_{rs}(\l\g^m\g^{rs}W^5)(\l\g^p W^4)
\]
\[
+ 4 k^2_r (\l\lb)(\l A^1)(\l\g^m\g^{rs}W^5)(\l\g^p W^4)
(W^3\g^m\g^p \g^s W^2).
\]
The second line is zero due to the pure spinor condition. Integrating the BRST-charge by
parts \eqref{af} becomes
\be
\label{ze1}
 = -\frac{1}{2}(\l\lb)\langle (\l\g^{tu}\g^m\g^p A^1)F^3_{tu}F^2_{rs}(\l\g^m\g^{rs}W^5)(\l\g^p W^4)\rangle
\ee
\be
\label{ze2}
+ \frac{1}{2}(\l\lb)\langle (W^3\g_m\g_p A^1)(\l\g^{mrstu}\l)(\l\g^p W^4){\cal F}^2_{rs}{\cal F}^5_{tu}
\rangle
\ee
\[
+ 8 k^2_r (\l\lb)\langle(W^3\g_m\g_p A^1)(\l\g^m W^2)(\l\g^r W^5)(\l\g^p W^4)\rangle
\]
\[
+ 4 k^2_r(\l\lb)\langle (\l A^1)(\l\g^m\g^{rs}W^5)(\l\g^p W^4)
(W^3\g^m\g^p \g^s W^2)\rangle.
\]
The first line can be rewritten as
\[
-\frac{1}{2}(\l\lb)\langle (\l\g^{tu}\g^m\g^p A^1)F^3_{tu}F^2_{rs}
(\l\g^m\g^{rs}W^5)(\l\g^p W^4)\rangle =
\]
\[
=  4(\l\lb)(\l A^1)(\l\g^m\g^{rs}W^5)(\l\g^n W^4){\cal F}^2_{rs}{\cal F}^3_{mn},
\]
and \eqref{ze2}
vanishes after antisymmetrization in $[25]$. Therefore
\be
\label{corr1}
2 (\l\lb)(W^3\g_m\g_p D)\big[(\l A^1){\cal F}^2_{rs}\big](\l\g^m\g^{rs}W^5)(\l\g^p W^4) 
+ (3\leftrightarrow 4)=
\ee
\[
+4(\l\lb)(\l A^1)(\l\g^m\g^{rs}W^5)(\l\g^n W^4){\cal F}^2_{rs}{\cal F}^3_{mn}
\]
\[
+4(\l\lb)(\l A^1)(\l\g^m\g^{rs}W^5)(\l\g^n W^3){\cal F}^2_{rs}{\cal F}^4_{mn}
\]
\[
- 8 k^2_r (\l\lb)\langle (\l\g^r W^5)(\l\g^m W^2)(\l\g^n W^4)
(W^3\g_{mn} A^1)
\rangle
\]
\[
- 8 k^2_r (\l\lb)\langle (\l\g^r W^5)(\l\g^m W^2)(\l\g^n W^3)
(W^4\g_{mn} A^1)
\rangle
\]
\[
+ 4 k^2_r(\l\lb)\langle (\l A^1)(\l\g^m\g^{rs}W^5)(\l\g^p W^4)
(W^3\g^{mp} \g^s W^2)\rangle.
\]
\[
+ 4 k^2_r(\l\lb)\langle (\l A^1)(\l\g^m\g^{rs}W^5)(\l\g^p W^3)
(W^4\g^{mp} \g^s W^2)\rangle.
\]
So that finally from \eqref{obter}, \eqref{corri} and \eqref{corr1} we get
\[
\langle(\lb\g_{mnp}D)\Big[ (\l A^1){\cal F}^2_{rs}(\l\g^m\g^{rs} W^5)(\l\g^n W^3)
(\l\g^p W^4)\Big]\rangle  - (2\leftrightarrow 5) + (3\leftrightarrow 4) =
\]
\[
+ (\l\lb)
\langle (\l\g^{mnpqr}\l)(\l A^1)(W^3\g_{pqr} W^4){\cal F}^2_{mu}{\cal F}^5_{nu}
\rangle
\]
\[
+ 28 (\l\lb)\langle(\l A^1)(\l\g^m \g^{rs}W^5)(\l\g^n W^4){\cal F}^2_{rs}{\cal F}^3_{mn}\rangle
\]
\[
+ 28 (\l\lb)\langle(\l A^1)(\l\g^m \g^{rs}W^5)(\l\g^n W^3){\cal F}^2_{rs}{\cal F}^4_{mn}\rangle
\]
\[
- 8 k^2_r (\l\lb)\langle(\l\g^r W^5)(\l\g^m W^2)(\l\g^n W^4)(W^3\g_{mn} A^1)\rangle
\]
\[
- 8 k^2_r (\l\lb)\langle(\l\g^r W^5)(\l\g^m W^2)(\l\g^n W^3)(W^4\g_{mn} A^1)\rangle
\]
\[
+ 4 k^2_r(\l\lb)\langle (\l A^1)(\l\g^m\g^{rs}W^5)(\l\g^n W^4)
(W^3\g_{mn} \g_s W^2)\rangle
\]
\[
+ 4 k^2_r(\l\lb)\langle (\l A^1)(\l\g^m\g^{rs}W^5)(\l\g^n W^3)
(W^4\g_{mn} \g_s W^2)\rangle
\]
\be
\label{cpt}
- 4 k^2_r\langle(\l\g_{ab}\lb)
(\l A^1)(\l\g_m W^2)(\l\g^r W^5)(W^3 \g^{abm} W^4)\rangle  - (2\leftrightarrow 5),
\ee
where we used  $(\l\g^m)_{\a}(\l\g_m)_{\b} = 0$ to convert $\g_m\g_p$ into
$\g_{mp}$ in the fourth and fifth lines.

Using the tensor \eqref{T41} we obtain
\[
\langle (\l\g_{ab}\lb)
(\l A^1)(\l\g_m W^2)(\l\g^r W^5)(W^3 \g^{abm} W^4)\rangle =
\]
\[
+\frac{3}{2}\langle(\l\g^{ab} A^1)(\l\g^m W^2)(\l\g^r W^5)(W^3\g_{abm}W^4)\rangle
\]
\[
+\frac{3}{2}\langle (\l A^1)(\l\g^{abm}W^2)(\l\g^r W^5)(W^3\g_{abm}W^4)\rangle
\]
\[
+\frac{3}{2}\langle (\l A^1)(\l\g^m W^2)(\l\g^{ab}\g^r W^5)(W^3\g_{abm}W^4)\rangle,
\]
and therefore the last line in \eqref{cpt} can be evaluated using the
standard methods.

Using the identity $\g^n_{\a(\b}(\g_n)_{\g\d)}=0$ one can show
that \eqref{fimKb} -- which gives us the right answer for the kinematic
factor -- can be rewritten like
\be
\label{propto}
\langle (D\g_{mnp}A^1)(\l\g^m W^3)(\l\g^n W^4)(\l\g^p \g^{rs}W^5)F^2_{rs}\rangle 
- (2\leftrightarrow 5) =
\ee
\[
= -8\langle (\l A^1)(\l\g^m\g^{rs} W^5)(\l\g^n W^4){\cal F}^2_{rs}{\cal F}^3_{mn}
\]
\[
+ 16k^2_r\langle (\l\g^r W^5)(\l\g^m W^2)(\l\g^n W^4)(A^1\g_{mn} W^3)
- (2\leftrightarrow 5),
\]
therefore we conclude that if 
\[
+ (\l\lb)
\langle (\l\g^{mnpqr}\l)(\l A^1)(W^3\g_{pqr} W^4){\cal F}^2_{mu}{\cal F}^5_{nu}
\rangle
\]
\[
+ 24 (\l\lb)\langle(\l A^1)(\l\g^m \g^{rs}W^5)(\l\g^n W^4){\cal F}^2_{rs}{\cal F}^3_{mn}\rangle
\]
\[
+ 24 (\l\lb)\langle(\l A^1)(\l\g^m \g^{rs}W^5)(\l\g^n W^3){\cal F}^2_{rs}{\cal F}^4_{mn}\rangle
\]
\[
+ 4 k^2_r(\l\lb)\langle (\l A^1)(\l\g^m\g^{rs}W^5)(\l\g^n W^4)
(W^3\g_{mn} \g_s W^2)\rangle
\]
\[
+ 4 k^2_r(\l\lb)\langle (\l A^1)(\l\g^m\g^{rs}W^5)(\l\g^n W^3)
(W^4\g_{mn} \g_s W^2)\rangle
\]
\be
\label{cpt2}
- 4 k^2_r\langle(\l\g_{ab}\lb)
(\l A^1)(\l\g_m W^2)(\l\g^r W^5)(W^3 \g^{abm} W^4)\rangle  - (2\leftrightarrow 5),
\ee
is proportional to \eqref{propto} then the NMPS kinematic factor computation
described above is equivalent\footnote{Note added: The story is a bit more
complex. See the complete result in \cite{5pt}.}
to the RNS result listed in \cite{siegel_lee}.

\chapter{Conclusions}
\label{conclu}

In this thesis we have shown how the pure spinor formalism can be used
to obtain in a manifestly SO(1,9)-covariant and supersymmetric way various
scattering amplitudes for massless particles. We emphasized the 
simple pure spinor superspace
expressions for their kinematic factors and how they encode the
complete results for all possible external state combination related
by supersymmetry, and computed them explicitly in components. The
results which were known from the computations of the RNS and
GS formalisms were shown to be correctly reproduced. But also new results
were obtained, namely the explicit computation at two-loops 
involving external fermionic states \cite{stahn}\cite{mafra_tree}.

What was also accomplished is a simple proof which 
explicitly relates
the massless four-point kinematic factors at tree-level
with those at one- and two-loops. The proof can be summarized 
as follows,
\be
\label{art}
K_{\rm 0} = -{1\over 3} K_{\rm 1}, \quad
K_{\rm 2} = -32 K_0\, {\cal Y}(s,t,u),
\ee
where $K_0$, $K_1$ and $K_2$ denote the \emph{complete}
supersymmetric massless four-point kinematic factors at
tree-level, one-loop and two-loops, respectively. The
function ${\cal Y}(s,t,u)$ is called quadri-holomorphic 1-form and is
quadratic in momenta. It is given by
\[
{\cal Y}(s,t,u) = \left[(u-t)\Delta(1,2)\Delta(3,4) 
+(s-t)\Delta(1,3)\Delta(2,4) + (s-u)\Delta(1,4)\Delta(2,3)
\right].
\]
But there is more to the scattering amplitude than just
its kinematic factor. The amplitudes also depend
on the moduli space of the Riemann surface. Considering
the amplitudes for closed strings they read
\[
{\cal A}_0 \propto {\rm e}^{-2\phi} K_0{\bar K}_0
{\Gamma(-{\displaystyle \a' t\over 4})\Gamma(-{\displaystyle \a' u\over 4})
\Gamma(-{\displaystyle \a' s\over 4}) 
\over \Gamma(1+{\displaystyle \a' t\over 4})\Gamma(1+{\displaystyle \a' u\over 4})
\Gamma(1+{\displaystyle \a' s\over 4})},
\]
\[
{\cal A}_1 \propto K_1{\bar K}_1 \int_{{\cal M}_1}
{|d\tau|^2 \over ({\rm Im}\tau)^2} F_1(\tau),
\]
\be
\label{art2}
{\cal A}_2 \propto {\rm e}^{2\phi} {\tilde K}_2{\bar {\tilde K}}_2 \int_{{\cal M}_2}
{|d^3\Omega|^2 \over ({\rm det}\,{\rm Im}\Omega)^5} F_2(\Omega, {\cal Y}),
\ee
where $F_1(\tau)$ and $F_2(\Omega, {\cal Y})$ are modular invariant functions given
by
\[
F_1(\tau)={1\over ({\rm Im}\tau)^3}\int d^2z_2\int d^2z_3\int d^2z_4
\prod_{i<j}G_1(z_i,z_j)^{k_i\cdot k_j}.
\]
\[
F_2(\Omega, {\cal Y}) = \int |{\cal Y}|^2 \prod_{i<j}G_2(z_i,z_j)^{k_i\cdot k_j}
\]
and for convenience we defined $K_2 = {\tilde K}_2 {\cal Y}$.
It is not hard to say that, up to overall coefficients, 
equations \eqref{art} -- \eqref{art2} encode the state-of-the-art
of what is known about massless four-point scattering amplitudes in superstring 
theory after more than two decades of heavy development.

Much more work remains to be done. For example, 
the coefficients in the above amplitudes can be derived by
factorization \cite{gutperle}\cite{xiao},
but it is still unclear how to obtain them directly
from the pure spinor amplitude prescription\footnote{The analogous
task has not yet been done with the RNS at two-loops.}. Another project worth
pursuing is the computation of higher-point scattering amplitudes, because
not so much is known about these amplitudes involving fermionic fields \cite{vanhove}\cite{vanhove2}.
At tree-level, it would be interesting to find the pure spinor superspace expression which
reproduces the five-point results of \cite{medina}, therefore also obtaining the fermionic amplitudes
along the way.
One could expect some previously unknown hidden simplicities will become apparent
using pure spinor superspace. For example there may be a possible generalization of the
identities of \cite{mafra_tree} to the case of five-point amplitudes (as one could
expect from the conjecture in \cite{richards}).

Therefore there remains some avenues of research exploring
the pure spinor formalism's aptitude to compute superstring
scattering amplitudes.

\appendix

\chapter{Evaluating Pure Spinor Superspace Expressions}
\label{chap_t}

In the previous chapter we have encountered many pure spinor superspace
expressions of the form
\be
\label{comp}
\langle \l^{\a}\l^{\b}\l^{\g}\t^{\d_1}\t^{\d_2}\t^{\d_3}
\t^{\d_4}\t^{\d_5}f_{\a\b\g}(\t)\rangle
\ee
where $f_{\a\b\g}(\t)$ was composed by some combination of 
super-Yang-Mills superfields and the angle brackets $\langle\quad\rangle$
is defined in such a way that the only non-vanishing component is 
proportional to 
\be
\label{rule}
\langle (\l\g^m\t)(\l\g^n\t)(\l\g^p\t)(\t\g_{mnp}\t)\rangle = 1.
\ee
We will now proceed to show how they can be explicitly computed, obtaining
as a result an expression which depends only on polarizations $e_m$, $\xi^{\a}$ 
and momenta $k^m$. That this is possible can be seen by checking that
there is only one scalar built out of three pure spinors $\l^{\a}$ and
five unconstrained $\t$'s
\[
[0,0,0,0,3]\otimes \Big( [0,0,0,3,0]+[1,1,0,1,0]\Big) = 
1X[0,0,0,0,0] +2X[0,0,0,1,1] + {\ldots},
\]
so that an arbitrarily complicated pure spinor correlator
written in terms of three $\l$'s
and five $\t$'s can be written entirely in terms of Kronecker deltas
and  Levi-Civita $\e_{10}$ tensors.

After expanding the superfields appearing in the generic correlator
\eqref{comp} and taking the
terms which contain five $\t$'s one will get
\[
\t^{\d_1}\t^{\d_2}\t^{\d_3}
\t^{\d_4}\t^{\d_5}f_{\a\b\g\d_1\d_2\d_3\d_4\d_5}
\]
where $f_{\a\b\g\d_1\d_2\d_3\d_4\d_5}$ is composed by a string of gamma matrices
with several indices. Each one of those terms can be easily evaluated using the
rule \eqref{rule}.

Suppose one wants to compute the following pure spinor correlation
\be
\label{now}
\langle (\l\g^m \t)(\l\g^n\g^{rs}\t)(\l\g^p\g^{tu}\t)(\t\g_{fgh}\t)\rangle.
\ee
To use the argument after \eqref{rule} it is better to write \eqref{now}
in a form in which the symmetries over the vector indices are manifest.
In this case we can do it by using the gamma matrix identity
\be
\label{mat1}
\g^m\g^{np} = \g^{mnp} + \eta^{mn}\g^p - \eta^{mp}\g^n.
\ee
to obtain 
\be
\label{tocompu}
\langle (\l\g^m \t)(\l\g^n\g^{rs}\t)(\l\g^p\g^{tu}\t)(\t\g_{fgh}\t)\rangle =
\langle (\l\g^m \t)(\l\g^{nrs} \t)(\l\g^{ptu} \t)(\t\g_{fgh}\t)
\ee
\[
+ 2 \langle(\l\g^{m} \t) \d^{[r}_n(\l\g^{s]} \t)(\l\g^{ptu} \t)(\t\g_{fgh}\t)\rangle
+ 4 \langle(\l\g^{m} \t) \d^{[r}_n(\l\g^{s]} \t)\d^{[t}_p(\l\g^{u]} \t)(\t\g_{fgh}\t)\rangle.
\]
And now we can easily use symmetry arguments to show that
\be
\label{eas}
\langle(\l\g^{m} \t) (\l\g^{s} \t)(\l\g^{u} \t)(\t\g_{fgh}\t)\rangle = \frac{1}{120}\d^{msu}_{fgh}
\ee
where $\d^{msu}_{fgh}$ is the antisymmetrized combination of Kronecker deltas beginning with
$\frac{1}{3!}\d^m_f\d^s_g\d^u_h$. To see this note that the right hand side of \eqref{eas} is the
only tensor which is antisymmetric in $[msu]$ and $[fgh]$ and which is normalized to one (because
$\d^{msu}_{msu} = 120$), therefore respecting the normalization imposed by the rule \eqref{rule}.
By the same token, using symmetry arguments alone one can show that
\be
\label{duo}
\langle(\l\g_{m}\t)(\l\g_{s}\t)(\l\g^{ptu}\t)(\t\g_{fgh}\t)\rangle=
       {1\over 70}\d^{[p}_{[m}\eta_{s][f}\d^{t}_g\d^{u]}_{h]}
\ee
\[
\langle(\l\g_{m}\t)(\l\g^{n rs}\t)(\l\g^{ptu}\t)(\t\g_{fgh}\t)\rangle=
{1\over 8400}\e^{fghmnprstu}+
\]
\be
\label{tres}
+{1\over 140}\Big[ 
	 \d^{[n}_m\d^r_{[f}\eta^{s][p}\d^t_g\d^{u]}_{h]}
	-\d^{[p}_m\d^t_{[f}\eta^{u][n}\d^r_g\d^{s]}_{h]}
\Big]
-{1\over 280}\Big[ 
	 \eta_{m[f}\eta^{v[p} \d^t_g\eta^{u][n}\d^r_{h]}\d^{s]}_v 
	-\eta_{m[f}\eta^{v[n} \d^r_g\eta^{s][p}\d^t_{h]}\d^{u]}_v
\Big].
\ee

In general, using several 
identities like $(\t\g^{abc}\g^{mn}\t)=(\t\g^{r_1r_2r_3}\t)K^{abcmn}_{r_1r_2r_3}$,
where
$$
K^{abcmn}_{r_1r_2r_3} = 
-\eta^{cn}\d^{abm}_{r_1r_2r_3} + 
 \eta^{cm}\d^{abn}_{r_1r_2r_3} + 
 \eta^{bn}\d^{acm}_{r_1r_2r_3} 
-\eta^{bm}\d^{acn}_{r_1r_2r_3} - 
 \eta^{an}\d^{bcm}_{r_1r_2r_3} + 
 \eta^{am}\d^{bcn}_{r_1r_2r_3}
$$
or
\[
          (\l\g^{mnp}\t)(\l\g^{qrs}\t) 
           = -{1\over 96}(\t\g^{tuv}\t)(\l\g^{mnp}\g_{tuv}\g^{qrs}\l)
\]
\be
\label{ftensor}
           \equiv - {1\over 96}(\l\g^{abcde}\l)(\t\g^{tuv}\t)f^{mnpqrs}_{abcdetuv},
\ee
where
\[
f^{mnpqrs}_{abcdetuv} = 18 (
\d^{rs}_{uv}\d^{abcde}_{mnpqt}-\d^{np}_{uv}\d^{abcde}_{qrsmt}) +
54(\d^{ps}_{tv}\d^{abcde}_{mnqru}+\d^{mn}_{qr}\d^{abcde}_{pstuv})
\]
\[
+54(\d^{nv}_{rs}\d^{abcde}_{mpqtu}-\d^{rv}_{np}\d^{abcde}_{qsmtu})+[mnp]+[qrs]+[tuv]
\]
and gamma matrix identities like \eqref{mat1} together with
$$
(\l\g^{abc}\g^{de}\t) = + (\l\g^{abcde}\t)
-2\d^{bc}_{de}(\l\g^{a}\t) 
+ 2\d^{ac}_{de}(\l\g^{b}\t) 
- 2\d^{ab}_{de}(\l\g^{c}\t)
$$
$$
- \d^{c}_{e}(\l\g^{abd}\t) + 
 \d^{c}_{d}(\l\g^{abe}\t) + 
 \d^{b}_{e}(\l\g^{acd}\t) - 
 \d^{b}_{d}(\l\g^{ace}\t) - 
 \d^{a}_{e}(\l\g^{bcd}\t) + 
 \d^{a}_{d}(\l\g^{bce}\t) 
$$
(and many others)
all pure spinor superspace expressions can be written in terms
as a linear combination of 
the basic ones \eqref{eas}, \eqref{duo}, \eqref{tres} and
\be
\label{quattuor}
\langle (\l\g^{mnpqr}\t)(\l\g_{stu}\t)(\l\g^v\t)(\t\g_{fgh}\t)=
{1\over 35}
\eta^{v[m}\d^n_{[s}\d^p_t \eta_{u][f}\d^q_g\d^{r]}_{h]}
-{2\over 35}
\d^{[m}_{[s}\d^n_t\d^p_{u]}\d^q_{[f}\d^{r]}_g\d^v_{h]}
\ee
\[
+{1\over 120}\e^{mnpqr}_{\qquad\;\; abcde}\left(
{1\over 35}
\eta^{v[a}\d^b_{[s}\d^c_t \eta_{u][f}\d^d_g\d^{e]}_{h]}
-{2\over 35}
\d^{[a}_{[s}\d^b_t\d^c_{u]}\d^d_{[f}\d^{e]}_g\d^v_{h]}
\right)
\]
\be
\label{quinque}
\langle (\l\g^{mnpqr}\t)(\l\g_{d}\t)(\l\g_e\t)(\t\g_{fgh}\t)=
 - {1\over 42}\delta^{mnpqr}_{defgh} 
 - {1\over 5040}\e^{mnpqr}_{\qquad\;\; defgh}
\ee
\be
\label{septem}
\vev{(\l\g^{mnp}\t)(\l\g^{qrs}\t)(\l\g_{tuv}\t)(\t\g_{ijk}\t)}=
\ee
$$
-{3\over 175}\Big[- \d^{[i}_a\d^{j}_{[q}\d^{k]}_{r}\d^{[m}_{s]}\d^{n}_{[t}\d^{p]}_{u}\d^a_{v]}
+ \d^{[i}_a\d^{j}_{[t}\d^{k]}_{u}\d^{[m}_{v]}\d^{n}_{[q}\d^{p]}_{r}\d^a_{s]} 
+ \d^{[i}_{[q}\d^{j}_{r}\eta^{k][m}\eta_{s][t}\d^{n}_{u}\d^{p]}_{v]}
$$
$$
+\d^a_{[t}\eta^{b[i}\d^j_u\eta^{k][m}\eta_{v][q}\d^n_r\eta_{s]a}\d^{p]}_b
- \d^{a}_{[q}\eta^{b[i}\d^{j}_{r}\eta^{k][m}\eta_{s][t}\d^{n}_{u}\eta_{v]a}\d^{p]}_b 
- \d^{[i}_{[t}\d^{j}_{u}\eta^{k][m}\eta_{v][q}\d^{n}_{r}\d^{p]}_{s]}
\Big]
$$
$$
+{1\over 33600}\e^{abcde}_{\qquad \, a_1a_2a_3a_4a_5}
f^{mnpqrs}_{abcdefgh}\Big[
	\d^{[t}_{[f}\d^u_g \eta^{v][a_1}\d^{a_2}_{h]}\d^{a_3}_{[i}\d^{a_4}_j\d^{a_5]}_{k]}
      + \d^{[t}_{[i}\d^u_j \eta^{v][a_1}\d^{a_2}_{k]}\d^{a_3}_{[f}\d^{a_4}_g\d^{a_5]}_{h]} 
$$
$$
      - \eta^{z[t}\d^u_{[f}\eta^{v][a_1}\d_g^{a_2}\eta_{h][i}\d^{a_3}_j
        \d^{a_4}_{k]}\d_z^{a_5]}
      - \eta^{z[t}\d^u_{[i}\eta^{v][a_1}\d_j^{a_2}\eta_{k][f}\d^{a_3}_g
        \d^{a_4}_{h]}\d_z^{a_5]} 
\Big].
$$
\be
\label{uind}
\langle(\l\g^{mnpqr}\l)(\l\g^{u}\t)(\t\g_{fgh}\t)
(\t\g_{jkl}\t)\rangle = 
\ee
\[
-{4\over 35}{\Big[}\d^{[m}_{[j} \d^n_k \d^p_{l]}\d^q_{[f}\d^{r]}_g \d^u_{h]}
+\d^{[m}_{[f} \d^n_g \d^p_{h]}\d^q_{[j}\d^{r]}_k \d^u_{l]}
-{1\over 2}\d^{[m}_{[j}\d^n_k \eta_{l][f}\d^p_g\d^q_{h]}\eta^{r]u}
-{1\over 2}\d^{[m}_{[f}\d^n_g \eta_{h][j}\d^p_k\d^q_{l]}\eta^{r]u}
\Big]
\]
\[
-{1\over 1050}\e^{mnpqr}_{\qquad\;\; abcde}{\Big[}
\d^{[a}_{[j} \d^b_k \d^c_{l]}\d^d_{[f}\d^{e]}_g \d^u_{h]}
+\d^{[a}_{[f} \d^b_g \d^c_{h]}\d^d_{[j}\d^{e]}_k \d^u_{l]}
\]
\[
-{1\over 2}\d^{[a}_{[j}\d^b_k \eta_{l][f}\d^c_g\d^d_{h]}\eta^{e]u}
-{1\over 2}\d^{[a}_{[f}\d^b_g \eta_{h][j}\d^c_k\d^d_{l]}\eta^{e]u}
\Big]
\]
\be
\label{tind}
\langle(\l\g^{mnpqr}\l)(\l\g^{stu}\t)(\t\g_{fgh}\t)
(\t\g_{jkl}\t)\rangle=
\ee
\[
-{12\over 35}\Big[
	\d^{[s}_{[f}\d^t_g \eta^{u][m}\d^n_{h]}\d^p_{[j}
	\d^q_k\d^{r]}_{l]}
      + \d^{[s}_{[j}\d^t_k \eta^{u][m}\d^n_{l]}\d^p_{[f}
        \d^q_g\d^{r]}_{h]} 
\]
\[
      - \eta^{v[s}\d^t_{[f}\eta^{u][m}\d_g^n\eta_{h][j}
        \d^p_k\d^q_{l]}\d_v^{r]}
      - \eta^{v[s}\d^t_{[j}\eta^{u][m}\d_k^n\eta_{l][f}
        \d^p_g\d^q_{h]}\d_v^{r]} 
\Big]
\]
\[
-{1\over 350}\e^{mnpqr}_{\qquad \, abcde}\Big[
	\d^{[s}_{[f}\d^t_g \eta^{u][a}\d^b_{h]}\d^c_{[j}\d^d_k\d^{e]}_{l]}
      + \d^{[s}_{[j}\d^t_k \eta^{u][a}\d^b_{l]}\d^c_{[f}\d^d_g\d^{e]}_{h]} 
\]
\[
      - \eta^{v[s}\d^t_{[f}\eta^{u][a}\d_g^b\eta_{h][j}\d^c_k
        \d^d_{l]}\d_v^{e]}
      - \eta^{v[s}\d^t_{[j}\eta^{u][a}\d_k^b\eta_{l][f}\d^c_g
        \d^d_{h]}\d_v^{e]} 
\Big]
\]
and finally, the only ``important'' correlator which was missing in the catalog of \cite{nmps_two}
was obtained by Stahn in \cite{stahn}
$$
 \langle (\lambda\gamma^{mnpqr}\lambda) (\lambda\gamma^{abcde}\theta) 
 (\theta\gamma^{fgh}\theta) (\theta\gamma^{jkl}\theta)\rangle =
 \tfrac{16}{7} \left(\delta^{mnpqr}_{\bar m\bar n\bar p\bar q\bar r} 
 + \tfrac{1}{5!}\varepsilon^{mnpqr}{}_{\bar m\bar n\bar p\bar q\bar r}\right) 
$$ 
\be
\label{cind}
\times \Big[ \delta^{\bar m\bar n\bar p}_{abc} \delta^f_j\delta^{d}_{g}\delta^{\bar q}_{k} ( - \delta^{e}_{h}\delta^{\bar r}_{l} 
+ 2 \delta^{e}_{l}\delta^{\bar r}_{h}) 
+ \delta^{\bar m\bar n}_{ab} \delta^{cd}_{fg}\delta^{\bar p\bar q}_{jk}( \delta^{e}_{h}\delta^{\bar r}_{l} 
- 3  \delta^{e}_{l}\delta^{\bar r}_{h}) \Big]_{[abcde][fgh][jkl](fgh\leftrightarrow jkl)}
\ee
where, for convenience, \eqref{cind} has been taken \emph{ipsis litteris} from \cite{stahn}.

\section{Obtaining epsilon terms}
In the correlations above we obtained the epsilons terms by considering
the duality of the gamma matrices
\be
\label{cincind}
\(\g^{m_1m_2m_3m_4m_5}\)_{\a\b}=+{1\over 5!}
\e^{m_1m_2m_3m_4m_5n_1n_2n_3n_4n_5}\(\g_{n_1n_2n_3n_4n_5}\)_{\a\b},
\ee
$$
\(\g^{m_1m_2m_3m_4m_5m_6}\)^{\,\,\b}_{\a}=+{1\over 4!}
\e^{m_1m_2m_3m_4m_5m_6n_1n_2n_3n_4}\(\g_{n_1n_2n_3n_4}\)^{\,\,\b}_{\a},
$$
$$
\(\g^{m_1m_2m_3m_4m_5m_6m_7}\)_{\a\b}=-{1\over 3!}
\e^{m_1m_2m_3m_4m_5m_6m_7n_1n_2n_3}\(\g_{n_1n_2n_3}\)_{\a\b},
$$
$$
\(\g^{m_1m_2m_3m_4m_5m_6m_7m_8}\)^{\,\,\b}_{\a}=-{1\over 2!}
\e^{m_1m_2m_3m_4m_5m_6m_7m_8n_1n_2}\(\g_{n_1n_2}\)^{\,\,\b}_{\a}.
$$
For example, to obtain the epsilon term of \eqref{tres} we used the identity
\eqref{ftensor} to relate \eqref{tres} with \eqref{uind}, whereas the 
epsilon terms in \eqref{uind} were found by first computing its Kronecker
delta terms and then using \eqref{cincind} to obtain the
epsilon terms.

In fact, due to the pure spinor property of $(\l\g^m\l) = 0$ there are
only three different correlations which need to be taken care of. That is because
one can always use the properties of
\[
\l^{\a}\l^{\b} = \frac{1}{3840}(\l\g^{mnpqr}\l)\g_{mnpqr}^{\a\b}
\]
\[
\t^{\a}\t^{\b} = \frac{1}{96}(\t\g^{mnp}\t)\g_{mnp}^{\a\b}
\]
to write any\footnote{This was explicited in \cite{stahn}.} correlation in terms of \eqref{uind}, \eqref{tind} and
\eqref{cind}. And as the epsilon 
terms of these three fundamental
building blocks are easily found through the use of \eqref{cincind},
all epsilon terms of an arbitrary pure spinor expression are easily\footnote{If
one has a computer to do the tedious calculations, of course \cite{FORM}\cite{tform}\cite{ulf}.}
determined.

\chapter{$N=1$ Super Yang-Mills Theory in $D=10$}
\label{sym_ap}

The basic reference for this appendix is \cite{SYM} (see also \cite{siegel_sym}).
One can also look at the 1991 derivation of the SYM on-shell 
constraint $F_{\a\b}=0$ using ten-dimensional pure spinors by Howe \cite{howe_ps1},
whose work also contains the operator $u^\a D_\a$ (the pure spinor $\l^\a$
was denoted by $u^\a$ in \cite{howe_ps1}), which is essentially the BRST operator
\eqref{Q_brst} of the pure spinor formalism\footnote{As a curiosity it is interesting
to mention that in reference \cite{howe_ps1} the author makes reference to the
``pure spinor formalism'' in the context of \cite{howe_ps2}.}.

\begin{def.} \label{defu} The supercovariant derivatives are
\begin{align}
\nabla_m & = \p_m + A_m \\
\nabla_{\alpha} & = D_{\alpha} + A_{\alpha} \\
D_{\a} & = \frac{\p}{\p\t^{\a}} + \frac{1}{2}(\g^m\t)_{\a}\p_m,
\end{align}
and the field-strengths are defined by
\begin{align}
\label{fab}
F_{\a\b} & = \{\nabla_{\a},\nabla_{\b} \} - \g^m_{\a\b}\nabla_m \\
\label{defi}
F_{\a m} & = [\nabla_{\a},\nabla_m] \\
F_{mn} & = [\nabla_m, \nabla_n].
\end{align}
\end{def.}
One can easily check the above field-strengths to be invariant under
the gauge transformations of
\be
\label{gaugetr}
\d A_m = \p_m\Omega, \quad \d A_{\a} = D_{\a}\Omega.
\ee
\begin{lema}
The fermionic supercovariant derivative satisfies
\[
\{D_{\a},D_{\b} \} = \g^m_{\a\b}\p_m.
\]
\end{lema}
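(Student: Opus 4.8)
The plan is to verify the anticommutator $\{D_\a, D_\b\} = \g^m_{\a\b}\p_m$ directly from the explicit coordinate form of $D_\a$ given in Definition \ref{defu}, namely $D_\a = \p/\p\t^\a + \tfrac12(\g^m\t)_\a \p_m$. First I would expand the anticommutator by acting on an arbitrary superfield $\Phi(x,\t)$, collecting the four pieces: the $\p_\t \p_\t$ term, the two cross terms of the form $\p_{\t^\a}\bigl[(\g^m\t)_\b \p_m \Phi\bigr]$, and the $(\g^m\t)(\g^n\t)\p_m\p_n$ term.

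The $\p_\t\p_\t$ piece vanishes because partial derivatives with respect to the (anticommuting) $\t$'s anticommute, so $\{\p_{\t^\a}, \p_{\t^\b}\} = 0$. The double-$\t$ piece, $\tfrac14\{(\g^m\t)_\a\p_m, (\g^n\t)_\b\p_n\}$, produces a factor symmetric in exchanging $(m,\a)\leftrightarrow(n,\b)$ times $\p_m\p_n$; here I would note that the term $(\g^m\t)_\a(\g^n\t)_\b$, once symmetrized over $\a\leftrightarrow\b$ and contracted with the symmetric $\p_m\p_n$, is killed by the same gamma-matrix identity $\g^m_{\a(\b}(\g_m)_{\g\d)}=0$ used repeatedly in the main text (more precisely, the relevant statement is that the $\t\t$ bilinear contracted with $\p_m\p_n$ cancels because $(\g^m\t)_\a(\g^n\t)_\b + (\g^n\t)_\b(\g^m\t)_\a$ is antisymmetric under $m\leftrightarrow n$ while $\p_m\p_n$ is symmetric, after using the Fierz/symmetry property of the charge-conjugation-symmetric $\g^m$). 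The surviving contribution comes entirely from the cross terms: $\p_{\t^\a}$ hitting the explicit $\t$ inside $(\g^m\t)_\b$ gives $\tfrac12 \g^m_{\b\a}\p_m$, and the mirror term gives $\tfrac12 \g^m_{\a\b}\p_m$; since $\g^m_{\a\b}$ is symmetric in its spinor indices these add to $\g^m_{\a\b}\p_m$, which is the claimed result.

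The main obstacle — really the only point requiring care — is bookkeeping the signs from anticommuting the Grassmann derivative $\p_{\t^\a}$ past the $\t$ it does not differentiate, and confirming that the genuinely quadratic-in-$\t$ term drops out by symmetry rather than by accident. I would handle this by writing everything acting on a test superfield and tracking each sign explicitly, then invoking the standard $D=10$ gamma-matrix symmetry properties ($\g^m_{\a\b}=\g^m_{\b\a}$ and the cyclic identity) that are already assumed throughout the appendix. No deeper input is needed; this is the superspace analogue of the flat supersymmetry algebra $\{Q_\a,Q_\b\}=\g^m_{\a\b}\oint\p X_m$ established earlier, and the computation is essentially the same two-line manipulation.
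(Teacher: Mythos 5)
Your computation is correct and is the standard direct verification; the paper itself states this lemma without proof, so there is nothing to diverge from. One small imprecision: the vanishing of the quadratic piece $\tfrac14\bigl[(\g^m\t)_\a(\g^n\t)_\b+(\g^m\t)_\b(\g^n\t)_\a\bigr]\p_m\p_n$ does not require the cyclic identity $\g^m_{\a(\b}\g_{m\,\g\d)}=0$ (that identity needs the two vector indices contracted with $\eta_{mn}$, which does not happen here) nor any Fierz rearrangement -- your parenthetical already contains the whole argument, namely that Grassmann anticommutation of the two $\t$'s makes the bracket antisymmetric in $m\leftrightarrow n$ while $\p_m\p_n$ is symmetric. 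With that trimmed, the proof reduces to exactly the cross terms you identify, giving $\tfrac12\g^m_{\b\a}\p_m+\tfrac12\g^m_{\a\b}\p_m=\g^m_{\a\b}\p_m$ by the symmetry of $\g^m_{\a\b}$.
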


\begin{prop.}
$F_{\a\b}=0$ if and only if
\begin{equation}
\label{symeq}
\g_{mnpqr}^{\a\b}\(D_{\a}A_{\b}+D_{\b}A_{\a} + \{A_{\a},A_{\b}\}\) = 0.
\end{equation}
\end{prop.}
By straighforward computation using definition \ref{defu} we get
\[
F_{\a\b} = \{D_{\a},D_{\b}\}+\{D_{\a},A_{\b}\}+\{A_{\a},D_{\b}\} + \{A_{\a},A_{\b}\}
-\g^m_{\a\b}(\p_m+A_m) = 0,
\]
so
\[
D_{\a}A_{\b}+D_{\b}A_{\a} + \{A_{\a},A_{\b}\} = \g^s_{\a\b}A_s.
\]
Multipling both sides by $\g_{mnpqr}^{\a\b}$ and using the 
identity $\tr{(\g_{mnpqr}\g_s)}=0$, equation \eqref{symeq} follows. By
reversing the above steps, the converse can 
also be proved. 
\begin{prop.}
\label{prop2}
If $\g_{mnpqr}^{\a\b}\(D_{\a}A_{\b}+D_{\b}A_{\a} + \{A_{\a},A_{\b}\}\) = 0$ (or
equivalently $\{\nabla_{\a},\nabla_{\b}\} = \g^m_{\a\b}\nabla_m$)
then
\be
\label{equivW}
F_{\a m} \equiv (\g_mW)_{\a}
\ee
\be
\label{eqW}
\nabla_{\a}{W^{\b}} = -\frac{1}{4}(\g^{mn})_{\a}^{\phantom{m}\b}F_{mn},
\ee
\be
\label{eqF}
\nabla_{\a}F_{mn} = \nabla_m(\g_nW)_{\a}-\nabla_n(\g_mW)_{\a}.
\ee
\end{prop.}
The proof follows from the use of Bianchi and gamma matrix identities.
The Bianchi identity 
\[
[\{\nabla_{\a},\nabla_{\b}\},\nabla_{\g}] + [\{\nabla_{\g},\nabla_{\a}\},\nabla_{\b}]
+ [\{\nabla_{\b},\nabla_{\g}\},\nabla_{\a}] = 0,
\]
implies that
\[
\g^m_{\a\b}[\nabla_m,\nabla_{\g}]+\g^m_{\g\a}[\nabla_m,\nabla_{\b}]
+\g^m_{\b\g}[\nabla_m,\nabla_{\a}] = 0.
\]
Consequentely,
\begin{equation}
\label{ver}
\g^m_{\a\b}F_{\g m}+\g^m_{\g\a}F_{\b m}
+\g^m_{\b\g}F_{\a m} = 0.
\end{equation}
The identity $\eta_{mn}\g^m_{\a(\b}\g^n_{\g\delta)}=0$ implies that
\eqref{ver} is trivially satisfied if $F_{\a m}=(\g_m W)_{a}$.
Similarly, from 
\[
[[\nabla_m,\nabla_n],\nabla_{\a}]+[[\nabla_{\a},\nabla_m],\nabla_n]
+[[\nabla_n,\nabla_{\a}],\nabla_m] = 0
\]
we obtain \eqref{eqF}.
From 
\[
[\{\nabla_{\a},\nabla_{\b}\},\nabla_m]+\{[\nabla_m,\nabla_{\a}],\nabla_{\b}\}
-\{[\nabla_{\b},\nabla_m],\nabla_{\a}\} = 0,
\]
we get
\be
\label{po}
(\g_m)_{\b\delta}\nabla_{\a}W^{\delta} + (\g_m)_{\a\delta}\nabla_{\b}W^{\delta} =
\g^n_{\a\b}F_{nm}.
\ee
The identity $\nabla_{\a}W^{\a}=0$ follows if we multiply \eqref{po} by $(g^m)^{\a\b}$.
Multiplication by $(\g^m)^{\b\sigma}$ results in,
\be
\label{prim}
10\nabla_{\a}W^{\sigma} + \g^m_{\a\delta}\g_m^{\b\sigma}\nabla_{\b}W^{\delta} =
-(\g^{mn})_{\a}^{\phantom{m}\sigma}F_{mn}.
\ee
Multiplying \eqref{prim} by $\g^p_{\sigma\kappa}\g^{\a\rho}_p$ and using
the identities $(\g^p\g^{mn}\g_p)^{\rho}_{\phantom{n}\kappa}
=-6(\g^{mn})^{\rho}_{\phantom{n}\kappa}$ and
\[
(\g_m)_{\a\d}(\g^m)^{\b\sigma}(\g^p)_{\sigma\kappa}(\g_p)^{\a\rho}=
-4\g_r^{\b\rho}\g^r_{\d\kappa} + 12\d^{\b}_{\kappa}\d^{\rho}_{\delta}
+ 8 \d^{\b}_{\d}\d^{\rho}_{\kappa},
\]
it follows that
\be
\label{seg}
12\nabla_{\kappa}W^{\rho}+6\g^p_{\sigma\kappa}\g_p^{\a\rho}\nabla_{\a}W^{\sigma}
= 6 (\g^{mn})_{\kappa}^{\phantom{n}\rho}F_{mn}.
\ee
Finally, from \eqref{prim} and \eqref{seg} we get $\nabla_{\a}W^{\b} = 
-\frac{1}{4}(\g^{mn})_{\kappa}^{\phantom{n}\rho}F_{mn}$.

\begin{lema}
The following identity holds true (written with the pure spinor $\l^{a}$
for convenience)
\be
\label{cst}
\l^{\a}D_{\a}A_m = (\l\g^m W) + \p^m (\l A).
\ee
\end{lema}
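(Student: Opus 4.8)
The plan is to prove the identity $\lambda^\alpha D_\alpha A_m = (\lambda\gamma^m W) + \partial^m(\lambda A)$ by relating both sides to the abelian (linearized) field-strengths. First I would specialize the superspace definitions of Definition~\ref{defu} and Proposition~\ref{prop2} to the linearized theory, where the commutators $\{A_\alpha,A_\beta\}$, $[A_\alpha,A_m]$, etc., are dropped. In that case \eqref{defi} reads $F_{\alpha m} = D_\alpha A_m - \partial_m A_\alpha$, and Proposition~\ref{prop2} gives $F_{\alpha m} = (\gamma_m W)_\alpha$. Contracting with the pure spinor $\lambda^\alpha$ yields
\[
\lambda^\alpha D_\alpha A_m - \lambda^\alpha \partial_m A_\alpha = \lambda^\alpha(\gamma_m W)_\alpha = (\lambda\gamma_m W).
\]
Since $\lambda^\alpha \partial_m A_\alpha = \partial_m(\lambda^\alpha A_\alpha) = \partial^m(\lambda A)$ (raising the index with the flat metric, and using that $\lambda$ carries no $x$-dependence), this is exactly \eqref{cst}. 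So the proof is essentially a one-line contraction of \eqref{defi} with $\lambda^\alpha$, once \eqref{equivW} is in hand.

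In more detail, the key steps in order are: (1) write $F_{\alpha m} = [\nabla_\alpha,\nabla_m]$ from \eqref{defi}, expand $\nabla_\alpha = D_\alpha + A_\alpha$ and $\nabla_m = \partial_m + A_m$, and in the linearized approximation keep only $F_{\alpha m} = D_\alpha A_m - \partial_m A_\alpha$ (the term $[D_\alpha,\partial_m]$ vanishes since $D_\alpha$ has no explicit $x$-dependence beyond $\partial_m$, and $[D_\alpha,\partial_m]=0$ because partial derivatives commute and the $\theta$-dependent piece of $D_\alpha$ is $x$-independent); (2) invoke \eqref{equivW} from Proposition~\ref{prop2}, $F_{\alpha m} = (\gamma_m W)_\alpha$, which holds once the on-shell constraint $F_{\alpha\beta}=0$ (equivalently \eqref{symeq}) is satisfied; (3) contract with $\lambda^\alpha$, pull the $\partial_m$ out of $\lambda^\alpha A_\alpha$, and rearrange. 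I would also note that the statement of the lemma explicitly writes it ``with the pure spinor $\lambda^\alpha$ for convenience'', which signals that purity of $\lambda$ is not actually used in the derivation — any spinor would do — so I need not invoke $(\lambda\gamma^m\lambda)=0$ anywhere.

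The only mild subtlety — and the main place to be careful rather than a genuine obstacle — is bookkeeping of the linearization: the full nonabelian $F_{\alpha m}$ contains $[A_\alpha, A_m]$, and \eqref{cst} as written is the linearized (free SYM superfield) identity used for vertex operators, consistent with the fact that $A_\alpha, A_m, W^\alpha$ elsewhere in the thesis are the \emph{linearized} superfields. I would state this explicitly at the start so that dropping the commutator terms is justified. A secondary point worth a sentence is the index placement $\partial^m$ versus $\partial_m$: since we work in flat space with metric $\eta^{mn}$, $\partial^m(\lambda A) = \eta^{mn}\partial_n(\lambda^\alpha A_\alpha)$, matching the raised index on the left-hand side of \eqref{cst} after contracting \eqref{equivW} with $\lambda^\alpha$ and using $\gamma_m \to \gamma^m$ consistently. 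With these conventions fixed, the proof is immediate from \eqref{defi} and \eqref{equivW}.
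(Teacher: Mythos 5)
Your proof is correct and follows exactly the route the paper intends: the paper's entire proof is the sentence ``The proof follows trivially from \eqref{defi} and \eqref{equivW},'' and you have simply filled in the contraction of $F_{\a m}=D_\a A_m-\p_m A_\a$ with $\l^\a$ and the rearrangement, including the (correct) observations that purity of $\l^\a$ is not used and that the identity as written is the linearized one.
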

The proof follows trivially from \eqref{defi} and \eqref{equivW}.
\begin{lema}
The superfields $A_m$, $W^{\a}$ and $F_{mn}$ can be written as
\begin{align}
A_m &=\frac{1}{8}\g_m^{\a\b}(D_{\a}A_{\b}+A_{\a}A_{\b})\\
W^{\a} & = \frac{1}{10}(\g^m)^{\a\b}\(D_{\b}A_m - \p_mA_{\b}+[A_{\b},A_m]\)\\
F_{mn} & = \nabla_m A_n - \nabla_n A_m 
\end{align}
\end{lema}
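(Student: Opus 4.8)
The plan is to invert the three defining relations for $A_m$, $W^{\a}$ and $F_{mn}$, reading each one as a formula that extracts the higher superfield from the fundamental prepotential $A_{\a}$. All the structural input is already available: the on-shell constraint $F_{\a\b}=0$, which as shown in the proof of the proposition containing \eqref{symeq} is equivalent to $D_{\a}A_{\b}+D_{\b}A_{\a}+\{A_{\a},A_{\b}\}=\g^s_{\a\b}A_s$; the relation $F_{\a m}\equiv(\g_m W)_{\a}$ from Proposition \ref{prop2}; and the definition $F_{mn}=[\nabla_m,\nabla_n]$ from Definition \ref{defu}. Each identity then drops out after contracting with a suitable $\g$-matrix and expanding the (anti)commutators, the only arithmetic being a couple of Clifford-trace identities in the $D=10$, 16-component spinor conventions.

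For $A_m$: I would contract $D_{\a}A_{\b}+D_{\b}A_{\a}+\{A_{\a},A_{\b}\}=\g^s_{\a\b}A_s$ with $\g_m^{\a\b}$. Since $\g_m^{\a\b}$ is symmetric in $\a\b$, the left side collapses to $2\g_m^{\a\b}\bigl(D_{\a}A_{\b}+A_{\a}A_{\b}\bigr)$, while the right side becomes $\g_m^{\a\b}\g^s_{\a\b}A_s=16\,\delta^{\,s}_m A_s=16A_m$, using $\g_m^{\a\b}\g^s_{\a\b}=16\delta^s_m$. Solving gives $A_m=\tfrac18\g_m^{\a\b}(D_{\a}A_{\b}+A_{\a}A_{\b})$.

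For $W^{\a}$: I would expand $F_{\a m}=[\nabla_{\a},\nabla_m]$ with $\nabla_{\a}=D_{\a}+A_{\a}$ and $\nabla_m=\partial_m+A_m$; since $D_{\a}$ commutes with $\partial_m$ (as $\theta$ carries no $x$-dependence), this yields $F_{\a m}=D_{\a}A_m-\partial_m A_{\a}+[A_{\a},A_m]$. Equating with $(\g_m W)_{\a}$ from Proposition \ref{prop2} and contracting with $(\g^m)^{\b\a}$, I would use $(\g^m)^{\b\a}(\g_m)_{\a\g}=10\,\delta^{\b}_{\g}$ (the $D=10$ value of $\g^m\g_m$ on chiral spinors) to get $10W^{\b}=(\g^m)^{\b\a}\bigl(D_{\a}A_m-\partial_m A_{\a}+[A_{\a},A_m]\bigr)$, which is the stated expression after relabelling spinor indices. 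For $F_{mn}$: expanding $[\nabla_m,\nabla_n]=[\partial_m+A_m,\partial_n+A_n]=\partial_m A_n-\partial_n A_m+[A_m,A_n]$ and recognising $\partial_m A_n+A_m A_n=\nabla_m A_n$ gives $F_{mn}=\nabla_m A_n-\nabla_n A_m$ immediately.

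There is no deep obstacle: the lemma is a bookkeeping/inversion statement, not a new theorem. The one point needing care is getting the $\g$-matrix contractions and their normalisations right — in particular $\g_m^{\a\b}\g^s_{\a\b}=16\delta^s_m$ and $(\g^m)^{\b\a}(\g_m)_{\a\g}=10\delta^{\b}_{\g}$ — together with keeping the operator orderings straight in the non-abelian pieces, so that the anticommutator $\{A_{\a},A_{\b}\}$ and the commutators $[A_{\a},A_m]$, $[A_m,A_n]$ appear with the correct coefficients; that is the only place a stray sign or factor of two could creep in.
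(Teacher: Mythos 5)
Your derivation is correct and is exactly the inversion-by-$\gamma$-trace argument the paper implicitly relies on (it states this lemma without proof): the coefficients $1/8$ and $1/10$ come precisely from $\g_m^{\a\b}\g^s_{\a\b}=16\,\d^s_m$ and $(\g^m)^{\a\b}(\g_m)_{\b\g}=10\,\d^\a_\g$ acting on the constraint $D_{(\a}A_{\b)}+A_{(\a}A_{\b)}=\half\g^s_{\a\b}A_s$ and on $F_{\a m}=(\g_m W)_\a$, and the $F_{mn}$ line is just the expansion of $[\nabla_m,\nabla_n]$. You also correctly handle the one genuinely delicate point, namely that $\nabla_m A_n$ must be read as $\p_m A_n + A_m A_n$ so that the antisymmetrization produces the commutator $[A_m,A_n]$ exactly once.
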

\begin{prop.}
The constraint equation 
\be
\label{const}
\g_{mnpqr}^{\a\b}\(D_{\a}A_{\b}+D_{\b}A_{\a} + \{A_{\a},A_{\b}\}\) = 0
\ee
is equivalent to the super Yang-Mills equations
\be
\label{eom1}
\g^m_{\a\b}\nabla_mW^{\b} = 0
\ee
\be
\label{eom2}
\nabla_mF^{mn} + \frac{1}{2}\g^n_{\a\b}\{W^{\a},W^{\b}\} = 0.
\ee
\end{prop.}

From Proposition \ref{prop2} we know that the constraint \eqref{const}
implies the equations \eqref{eqW} and \eqref{eqF}. Now we will show that
\eqref{eom1} and \eqref{eom2} follow from those two equations, which 
proves the above proposition.

To prove \eqref{eom1} we act with the derivative $\nabla_{\g}$ over \eqref{eqW}
and symmetrize over the spinor indices $(\g\a)$ to obtain
\[
\(\nabla_{\a}\nabla_{\g} + \nabla_{\g}\nabla_{\a}\)W^{\b} = 
-\frac{1}{4}(\g^{mn})_{\g}^{\phantom{\g}\b}\nabla_{\a}F_{mn}
-\frac{1}{4}(\g^{mn})_{\a}^{\phantom{\g}\b}\nabla_{\g}F_{mn}
\]
Using \eqref{fab} in the left hand side and \eqref{eqF} on the right we
get
\[
\g^p_{\a\b}\nabla_p W^{\b} = 
-\frac{1}{2}(\g^{mn})_{\g}^{\phantom{\g}\b}(\g_m\nabla_n W)_{\a}
-\frac{1}{2}(\g^{mn})_{\a}^{\phantom{\g}\b}(\g_m\nabla_n W)_{\g},
\]
from which we obtain upon multiplication by $\d_{\b}^{\g}$ on both
sides and using $\g^{mn}_m = 7\g^n$ that
\be
\label{pas1}
\g^p_{\a\b}\nabla_p W^{\b} = 0,
\ee
which proves \eqref{eom1}. To obtain \eqref{eom2} we multiply
\eqref{pas1} by $\g^{\a\d}_n\nabla_{\d}$ to get
\[
\g^m_{\a\b}\g^{\a\d}_n\nabla_{\d}\nabla_m W^{\b} = 0.
\]
Using 
$\nabla_{\d}\nabla_m = [\nabla_{\d},\nabla_m] + \nabla_m\nabla_{\d}$, 
$[\nabla_{\d},\nabla_m] = (\g_m W)_{\d}$ and the equation of motion
\eqref{eqW} we arrive at
\[
\frac{1}{4}\text{tr}(\g_n\g^{pq}\g_m)\nabla_m F_{pq} = (\g^m\g_n\g_m)_{\b\kappa} W^{\kappa}W^{\b}
\]
which implies
\[
\nabla^m F_{mn} = (\g_n)_{\a\b}W^{\b}W^{\a} 
= -\frac{1}{2}(\g_n)_{\a\b}\{W^{\a},W^{\b}\}.
\]

\section{The $\t$-expansion of Super Yang-Mills superfields}

In this thesis we use the following ${\cal N}=1$ super-Yang-Mills $\t$ 
expansions \cite{thetaSYM}\cite{theta2}\cite{theta3}
$$
A_{\a}(x,\t)={1\over 2}a_m(\g^m\t)_\a -{1\over 3}(\xi\g_m\t)(\g^m\t)_\a
-{1\over 32}F_{mn}(\g_p\t)_\a (\t\g^{mnp}\t) 
$$
$$
+ {1\over 60}(\g_m\t)_{\a}(\t\g^{mnp}\t)(\p_n\xi\g_p\t) 
+ {1\over 1152}(\g^m\t)_\a(\t\g^{mrs}\t)(\t\g^{spq}\t)\p_r F_{pq} 
+ \ldots
$$
$$
A_{m}(x,\t) = a_m - (\xi\g_m\t) - {1\over 8}(\t\g_m\g^{pq}\t)F_{pq}
         + {1\over 12}(\t\g_m\g^{pq}\t)(\p_p\xi\g_q\t) +
$$
$$
+ {1\over 192}(\t\g^{mrs}\t)(\t\g^{spq}\t)\p_r F_{pq} + {\ldots} 
$$
$$
W^{\a}(x,\t) = \xi^{\a} - {1\over 4}(\g^{mn}\t)^{\a} F_{mn}
           + {1\over 4}(\g^{mn}\t)^{\a}(\p_m\xi\g_n\t)
	   + {1\over 48}(\g^{mn}\t)^{\a}(\t\g_n\g^{pq}\t)\p_m F_{pq} 	  
$$
$$
- {1\over 96}(\g^{mn}\t)^\a(\t\g^{npq}\t)\p_m\p_p(\xi \g_q \t)
- {1\over 1536}(\g^{mn}\t)^\a(\t\g^{nrs}\t)(\t\g^{spq}\t)\p_m\p_r F_{pq} + {\ldots} 
$$
$$
{\cal F}_{mn}(x,\t) = F_{mn} - 2(\p_{[m}\xi\g_{n]}\t) + {1\over
4}(\t\g_{[m}\g^{pq}\t)\p_{n]}F_{pq} -{1\over 6}(\t\g_{[m}\g^{pq}\t)\p_{n]}\p_p(\xi\g_q\t)
$$
\be
\label{sym_exp}
-{1\over 96}(\t\g_{[m}\g^{rs}\t)(\t\g^{spq}\t)\p_{n]}\p_r F_{pq} + {\ldots} 
\ee
Here $\xi^{\a}(x) =
\chi^{\a}{\rm e}^{ik\cdot x}$ and $a_m(x) = e_m {\rm e}^{ik\cdot x}$ describe the gluino and
gluon respectively, while $F_{mn} = 2\p_{[m} a_{n]}$ is the gluon field-strength.

\chapter{The $t_8$ tensor}
\label{t8_ap}

The famous $t_8$ tensor is defined by \cite{schwarz_phys_rep}\cite{gswII}
\be
\label{t_tensor}
t_8^{m_1n_1m_2n_2m_3n_3m_4n_4} =
-{1\over 2}\Big[ \left(\d^{m_1m_2}\d^{n_1n_2} - \d^{m_1n_2}\d^{n_1m_2}\right)
\left(\d^{m_3m_4}\d^{n_3n_4} - \d^{m_3n_4}\d^{n_3m_4}\right) 
\ee
\[
+ \left(\d^{m_2m_3}\d^{n_2n_3} - \d^{m_2n_3}\d^{n_2m_3}\right)
\left(\d^{m_4m_1}\d^{n_4n_1} - \d^{m_4n_1}\d^{n_4m_1}\right) 
\]
\[
+ \left(\d^{m_1m_3}\d^{n_1n_3} - \d^{m_1n_3}\d^{n_1m_3}\right)
\left(\d^{m_2m_4}\d^{n_2n_4} - \d^{m_2n_4}\d^{n_2m_4}\right) \Big]
\]
\[
+{1\over 2}\Big[\d^{n_1m_2}\d^{n_2m_3}\d^{n_3m_4}\d^{n_4m_1}
+\d^{n_1m_3}\d^{n_3m_2}\d^{n_2m_4}\d^{n_4m_1}
+\d^{n_1m_3}\d^{n_3m_4}\d^{n_4m_2}\d^{n_2m_1}
\]
\[
+ 45{\rm\, terms\, obtained\, by\, antisymmetrizing} 
{\rm\, on\, each\, pair\, of\, indices}\Big].
\]
One can check that its contraction with four field-strengths $F_{mn}$
gives the following expression
\[
t_8F^4 =
  8(F^1F^2F^3F^4) + 8(F^1F^3F^2F^4) + 8(F^1F^3F^4F^2)
\]
\[
 -2(F^1F^2)(F^3F^4) - 2(F^2F^3)(F^4F^1) - 2(F^1F^3)(F^2F^4),
\]
which is a convenient way of summarizing the $t_8$ tensor. One can 
also check that in terms of components
\[
t_8F^4 =
{1\over 2}\left[
- (k^2\cdot e^3)(k^2\cdot e^4)(e^1\cdot e^2)t
- (k^2\cdot e^4)(k^4\cdot e^3)(e^1\cdot e^2)t \right.
\]
\[
+ (k^2\cdot e^4)(k^3\cdot e^2)(e^1\cdot e^3)t
- (k^3\cdot e^4)(k^4\cdot e^2)(e^1\cdot e^3)t
+ (k^2\cdot e^3)(k^4\cdot e^2)(e^1\cdot e^4)t
\]
\[
+ (k^4\cdot e^2)(k^4\cdot e^3)(e^1\cdot e^4)t
- (k^2\cdot e^4)(k^3\cdot e^1)(e^2\cdot e^3)t 
- (k^2\cdot e^3)(k^4\cdot e^1)(e^2\cdot e^4)t
\]
\[
- (k^3\cdot e^1)(k^4\cdot e^3)(e^2\cdot e^4)t
- (k^4\cdot e^1)(k^4\cdot e^3)(e^2\cdot e^4)t 
+ (k^3\cdot e^1)(k^4\cdot e^2)(e^3\cdot e^4)t
\]
\[
- (k^2\cdot e^3)(k^2\cdot e^4)(e^1\cdot e^2)u
- (k^2\cdot e^3)(k^3\cdot e^4)(e^1\cdot e^2)u 
+ (k^2\cdot e^4)(k^3\cdot e^2)(e^1\cdot e^3)u
\]
\[
+ (k^3\cdot e^2)(k^3\cdot e^4)(e^1\cdot e^3)u
+ (k^2\cdot e^3)(k^4\cdot e^2)(e^1\cdot e^4)u 
- (k^3\cdot e^2)(k^4\cdot e^3)(e^1\cdot e^4)u
\]
\[
- (k^2\cdot e^4)(k^3\cdot e^1)(e^2\cdot e^3)u
- (k^3\cdot e^1)(k^3\cdot e^4)(e^2\cdot e^3)u 
- (k^3\cdot e^4)(k^4\cdot e^1)(e^2\cdot e^3)u
\]
\[
- (k^2\cdot e^3)(k^4\cdot e^1)(e^2\cdot e^4)u
+ (k^3\cdot e^2)(k^4\cdot e^1)(e^3\cdot e^4)u 
+ \half (e^1\cdot e^3)(e^2\cdot e^4)t^2 +
\]
\[ \left.
 \half (e^1\cdot e^4)(e^2\cdot e^3)u^2
+ \half (e^1\cdot e^4)(e^2\cdot e^3)tu
+ \half (e^1\cdot e^3)(e^2\cdot e^4)tu
- \half (e^1\cdot e^2)(e^3\cdot e^4)tu  \right],
\]
which is a useful representation when comparing against scattering 
amplitude computations.

The $t_8$ tensor can also be represented in a $U(5)$-covariant fashion
by taking \eqref{crazy_one} and going to the $\l^+$-frame
\[
t_8 F^4 = \e_{abcde}\langle (\l^+)^3 \t^a W^b W^c W^d W^e \rangle
= \e_{abcde}\langle (\l^+)^3  F^b_f F^c_g F^d_h F^e_i \t^a \t^f \t^g \t^h \t^i \rangle
\]
\[
= \d^{fghi}_{bcde} F^b_f F^c_g F^d_h F^e_i.
\]


\end{document}